\newtheorem*{rep@theorem}{\rep@title}
\newcommand{\newreptheorem}[2]{
\newenvironment{rep#1}[1]{
 \def\rep@title{#2 \ref{##1}}
 \begin{rep@theorem}\itshape}
 {\end{rep@theorem}}}
 \DeclareMathOperator*{\Ppr}{\textbf{Pr}}
\def\colorful{0}
\newcommand{\blue}[1]{{{\color{blue}#1}}}
\newcommand{\red}[1]{{\color{red} {#1}}}
\newcommand{\blue}[1]{{{#1}}}
\newcommand{\red}[1]{{{#1}}}
\tikzstyle{v}=[circle,fill=black,thick,inner sep=1pt]
\def\Dyes{\mathcal{D}_{\text{yes}}}
\def\Dno{\mathcal{D}_{\text{no}}}
\def\py{p_{\text{yes}}}
\def\pn{p_{\text{no}}}
\def\S{\Eno^*}
\def\Alg{\mathrm{Alg}}
\def\br{\boldsymbol{r}}
\def\eps{\varepsilon}
\def\Ball{\mathrm{Ball}}
\def\Cube{\mathrm{Cube}}
\def\CubeSet{\mathrm{CubeSet}}
\def\convex{\mathrm{convex}}
\def\targetset{S}
\def\cover{\text{cover}}
\newcommand{\e}{\epsilon}
\newcommand{\inv}{^{-1}}
\newcommand{\del}{\partial }
\def\SS{\boldsymbol{S}}
\def\gau{{\mathcal{N}(0,1)^n}}
\def\Dyes{\mathcal{D}_{\textsf{yes}}}
\def\Dno{\mathcal{D}_{\textsf{no}}}
\def\dtv{d_{\text{TV}}}
\def\Eyes{\mathcal{E}_{\textsf{yes}}}
\def\Eno{\mathcal{E}_{\textsf{no}}}
\def\barbx{\mathbf{x}}
\def\bb{\mathbf{b}}
\def\dd{\mathbf{d}}
\def\barby{\mathbf{y}}
\def\by{\mathbf{y}}
\def\br{\mathbf{r}}
\def\bN{\mathbf{N}}
\def\gaus{{\mathcal{N}^n}}
\def\zz{\mathbf{z}}
\title{Sample-based high-dimensional convexity testing}
\author{
Xi Chen\thanks{Columbia University, email: \texttt{xichen@cs.columbia.edu}.}
\and
Adam Freilich\thanks{Columbia University, email: \texttt{freilich@cs.columbia.edu}.}
\and
Rocco A. Servedio\thanks{Columbia University, email: \texttt{rocco@cs.columbia.edu}.}
\and
Timothy Sun\thanks{Columbia University, email: \texttt{tim@cs.columbia.edu}.}
}
\begin{document}

\def\Ayes{\mathcal{A}_{\text{yes}}}
\def\Ano{\mathcal{A}_{\text{no}}}
\def\py{p}
\def\pn{q}
\def\Alg{\textsc{Alg}}
\def\Vyes{\mathcal{V}_{\text{yes}}}
\def\Vno{\mathcal{V}_{\text{no}}}
\def\Byes{\mathcal{B}_{\text{yes}}}
\def\Bno{\mathcal{B}_{\text{no}}}

\maketitle

\begin{abstract}

In the problem of \emph{high-dimensional convexity testing}, there is an unknown set $\targetset \subseteq \R^n$ which is promised to be either convex or $\eps$-far from every convex body with respect to the standard multivariate normal distribution $\normal^n$.  The job of a testing algorithm is then to distinguish between these two cases while making as few inspections of the set $\targetset$ as possible.

In this work we consider \emph{sample-based} testing algorithms, in which the testing algorithm only has access to labeled samples $(\bx,\targetset(\bx))$ where each $\bx$ is independently drawn from $\normal^n$.  We give nearly matching sample complexity upper and lower bounds for both one-sided and two-sided convexity testing algorithms in this framework.  For constant $\eps$, our results show that the sample complexity of one-sided convexity testing is $2^{\tilde{\Theta}(n)}$ samples, while for two-sided convexity testing it is $2^{\tilde{\Theta}(\sqrt{n})}$.

\end{abstract}

\thispagestyle{empty}

\newpage

\setcounter{page}{1}


\section{Introduction}

Over the past few decades the field of property testing has developed into a fertile area with many different branches of active research.  Several distinct lines of work have studied the testability of various kinds of \emph{high-dimensional} objects, including probability distributions (see e.g. \cite{BKR:04long,RubinfeldServedio:05,AAK+07,RX10,ACS10,BFRV11,AcharyaDK15}), Boolean functions (see e.g. \cite{BLR93,PRS02,Blaisstoc09,MORS:10,KMS15} and many other works), and various types of codes and algebraic objects (see e.g. \cite{AKKLRtit,GoldreichSudan06,KaufmanSudan08,BKSSZ10} and many other works).  These efforts have collectively yielded significant insight into the abilities and limitations of efficient testing algorithms for such high-dimensional objects.  A distinct line of work has focused on testing (mostly low-dimensional) \emph{geometric properties}.  Here too a considerable body of work has led to a good understanding of the testability of various low-dimensional geometric properties, see e.g. \cite{CSZ:00,CS:01,Raskhodnikova:03,BMR16icalp,BMR16socg,BMR16fsttcs}.

This paper is about a topic which lies at the intersection of the two general strands (high-dimensional property testing and geometric property testing) mentioned above:  we study the problem of \emph{high-dimensional convexity testing.}  Convexity is a fundamental property which is intensively studied in high-dimensional geometry (see e.g. \cite{convex-geometry,Ball:intro-convex,Szarek06} and many other references) and has been studied in the property testing of images (the two-dimensional case) \cite{Raskhodnikova:03,BMR16icalp,BMR16socg,BMR16fsttcs}, but as we discuss in Section~\ref{sec:relatedwork} below, very little is known about high-dimensional convexity testing.

We consider $\R^n$ endowed with the standard normal distribution $\normal^n$ as our underlying space, so the distance $\dist(S,C)$ between two subsets $S,C \subseteq \R^n$ is $\Pr_{\bx \leftarrow \normal^n}[\bx \in S \bigtriangleup C]$, where   $S\bigtriangleup C$ denotes their symmetric difference.    The standard normal distribution is arguably one of the most natural, and certainly one of the most studied, distributions on $\R^n$.  Several previous works have studied property testing over $\R^n$ with respect to the standard normal distribution, such as the work on testing halfspaces of \cite{MORS:10,BBBY12} and the work on testing surface area of \cite{KNOW:14,Neeman:14}.\vspace{-0.1cm}

\subsection{Our results}

In this paper we focus on \emph{sample-based} testing algorithms for convexity.  Such an algorithm has access to independent draws $(\bx,S(\bx)) \in \R^n \times \{0,1\},$ where $\bx$ is drawn from $\normal^n$ and $S \subseteq \R^n$ is the unknown set being tested for convexity (so in particular the algorithm cannot select points to be queried) with $S(\bx)=1$ if $\bx\in S$.
We say such an algorithm is an \emph{$\eps$-tester for convexity} if it accepts $S$ with probability at least $2/3$
  when $S$ is convex and rejects with probability at least $2/3$ when it is $\eps$-far from convex,
  i.e., $\dist(S,C)\ge \eps$ for all convex sets $C\subseteq \R^n$.
The model of sample-based testing was originally introduced
by Goldreich, Goldwasser, and Ron almost two decades ago \cite{GGR98}, where it was referred to as ``passive testing;'' it has received significant attention over the years \cite{KR00,GGLRS,BBBY12,GoldreichRon16}, with an uptick in research activity in this model over just the past year or so \cite{AHW16, BlaisYoshida16,BMR16icalp,BMR16socg,BMR16fsttcs}.

We consider sample-based testers for convexity that are allowed both one-sided (i.e., the
  algorithm always accepts $S$ when it is convex) and two-sided error. In each case, for constant $\eps>0$ we give nearly matching upper and lower bounds on sample complexity.  Our
  results are as follows:

\begin{theorem} [One-sided lower bound] \label{thm:1slb}
\red{Any one-sided sample-based algorithm that is an $\eps$-tester for convexity over $\normal^n$
  for some $\eps<1/2$
  must use $2^{\Omega(n)}$ samples.}
\end{theorem}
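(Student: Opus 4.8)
The plan is to exploit the one structural weakness of one-sided testers. The key observation is that such a tester can output ``reject'' only when the labelled sample $\{(\bx_i,S(\bx_i))\}_{i\in[m]}$ is inconsistent with \emph{every} convex body: if it rejected a positive-probability set of samples all consistent with a common convex body $C$, it would also reject the convex input $C$, violating one-sidedness. Equivalently, by a basic convexity fact, it rejects only when some ``negative'' sample point $\bx_j$ (with $S(\bx_j)=0$) lies in $\conv(\{\bx_k:S(\bx_k)=1\})$ -- for otherwise that hull is itself a convex body consistent with the sample. So to defeat all one-sided testers it suffices to exhibit a set $\targetset^\ast$ that is $\eps$-far from convex yet has the property that, on $m=2^{o(n)}$ labelled samples from it, with probability $1-o(1)$ no negative point lands in the convex hull of the positive points. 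I would take $\targetset^\ast:=\R^n\setminus\Ball(0,r)$, the complement of the Gaussian-median ball, where $r$ is defined by $\Pr_{\bx\sim\normal^n}[\|\bx\|\le r]=1/2$ (so $r=\Theta(\sqrt n)$).

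First I would show $\targetset^\ast$ is $\tfrac14$-far from convex. Write $h(\bx):=\mathds{1}[\|\bx\|>r]-\mathds{1}[\|\bx\|\le r]$, so that $\mathds{1}_{\targetset^\ast}=(1+h)/2$ and $\mathbb{E}[h]=0$; a one-line computation gives $\dist(\targetset^\ast,C)=\tfrac12-\mathbb{E}_{\bx\sim\normal^n}[\mathds{1}_C(\bx)\,h(\bx)]$ for every $C$. If $0\in C$, then along each ray from the origin $C$ is an initial segment, so $\mathds{1}_C(\bx)=\mathds{1}[\|\bx\|\le\ell(\widehat{\bx})]$ almost surely for the radial function $\ell$; since under $\normal^n$ the norm $\|\bx\|$ is independent of the direction $\widehat{\bx}=\bx/\|\bx\|$, conditioning on $\widehat{\bx}$ and using that $r$ is the median of $\|\bx\|$ shows the inner expectation over $\|\bx\|$ is $\le 0$ for every fixed direction, hence $\mathbb{E}[\mathds{1}_Ch]\le 0$ and $\dist(\targetset^\ast,C)\ge\tfrac12$. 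If $0\notin C$, separate $0$ from $C$ by a halfspace $\{\bx:\langle\bx,w\rangle\ge c\}$ with unit $w$ and $c\ge 0$; then, using that $\operatorname{sign}(\langle\bx,w\rangle)$ is independent of $\|\bx\|$, $\mathbb{E}[\mathds{1}_Ch]\le\Pr[\langle\bx,w\rangle\ge c,\ \|\bx\|>r]\le\Pr[\langle\bx,w\rangle\ge 0]\cdot\Pr[\|\bx\|>r]=\tfrac14$, so $\dist(\targetset^\ast,C)\ge\tfrac14$. Hence $\targetset^\ast$ is $\tfrac14$-far from every convex body (a sharper argument for the $0\notin C$ case should push the constant toward $\tfrac12$, but $\tfrac14$ suffices for the statement).

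Next I would bound the probability that $m$ i.i.d. samples $\bx_1,\dots,\bx_m\sim\normal^n$ produce a certificate. If a negative point $\bx_j$ (so $\|\bx_j\|\le r$) lies in $\conv(\{\bx_k:\|\bx_k\|>r\})$, write $\bx_j=\sum_k\lambda_k\bx_k$ as a convex combination of positive points and take the inner product with $\bx_j$: then $\|\bx_j\|^2=\sum_k\lambda_k\langle\bx_k,\bx_j\rangle\le\max_k\langle\bx_k,\bx_j\rangle$, so there is a $k$ with $\langle\bx_k,\bx_j\rangle\ge\|\bx_j\|^2$. For a fixed ordered pair, conditioning on $\bx_j$ makes $\langle\bx_k,\bx_j\rangle$ a $\mathcal{N}(0,\|\bx_j\|^2)$ variable, so $\Pr[\langle\bx_k,\bx_j\rangle\ge\|\bx_j\|^2]\le\mathbb{E}\big[\exp(-\|\bx_j\|^2/2)\big]=(1+1)^{-n/2}=2^{-n/2}$ by the Laplace transform of $\chi^2_n$ at $-\tfrac12$. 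A union bound over the at most $m^2$ ordered pairs gives failure probability $\le m^2\,2^{-n/2}$, which is $o(1)$ once $m\le 2^{n/5}$, say.

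Assembling: for $m\le 2^{n/5}$, running any one-sided tester on $\targetset^\ast$ yields a certificate-free sample with probability $1-o(1)$, so by the first paragraph the tester accepts; but $\targetset^\ast$ is $\tfrac14$-far from convex, so it should reject with probability $\ge 2/3$ -- a contradiction for large $n$. Thus any one-sided $\tfrac14$-tester needs $2^{\Omega(n)}$ samples, proving the theorem. The one part that requires genuine care is the structural claim that one-sidedness forces acceptance of every certificate-free sample: since the convex body witnessing consistency (the convex hull of the positive sample points, slightly fattened to a rational polytope) varies with the sample, the clean argument reduces the rejecting event to a countable union, over rational polytopes, of probability-zero events; everything else is a short calculation.
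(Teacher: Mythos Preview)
Your argument is correct as far as it goes, and the core probabilistic step---showing that $\bx_j \in \Conv(\{\bx_k\}_k)$ forces some $\langle \bx_k, \bx_j\rangle \ge \|\bx_j\|^2$, and then bounding the pair probability via $\E_{\bx_j}[\exp(-\|\bx_j\|^2/2)] = 2^{-n/2}$---is essentially the same inner-product obstruction the paper uses (the paper splits it into a norm lower bound on $\bx_j$ plus a Gaussian tail bound on $\langle \bx_k,\bx_j\rangle$; your Laplace-transform computation is a bit cleaner). The measure-theoretic subtlety you flag at the end, that one-sidedness only forces acceptance on \emph{almost every} certificate-free sample and that one handles this via a countable union over rational polytopes, is real and your sketch is correct; the paper glosses over this point.

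However, there is a gap relative to the theorem as stated: you establish the lower bound only for $\eps \le 1/4$, whereas the theorem asserts it for every $\eps < 1/2$. Your target $\targetset^\ast = \R^n \setminus \Ball(r)$ together with your analysis of the $0 \notin C$ case yields only $1/4$-farness. The paper sidesteps this by proving the stronger \emph{shattering} claim: with probability $1-o(1)$ over $\bx_1,\dots,\bx_M \leftarrow \normal^n$, no single $\bx_i$ lies in $\Conv(\{\bx_j : j \ne i\})$, so \emph{every} labeling of the sample is consistent with some convex set. This decouples the convex-hull-avoidance argument from any particular target, and the paper then plugs in a separate far-from-convex construction (a union of parallel slabs, given in a footnote) that is $(1/2 - \delta)$-far for arbitrary $\delta > 0$. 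Note that your own inner-product calculation, applied with $\bx_j$ ranging over all $M$ points rather than only the negative ones, already proves this shattering lemma with no extra work---so the simplest fix is to drop $\targetset^\ast$ entirely and follow that route.
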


\begin{theorem} [One-sided upper bound] \label{thm:1sub}
For any $\eps > 0$, there is a one-sided sample-based $\eps$-tester for convexity over
  $\normal^n$ which uses $(n/\eps)^{O(n)}$ samples.
\end{theorem}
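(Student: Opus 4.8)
The plan is to analyze the natural one-sided tester: draw $m = (n/\eps)^{O(n)}$ labeled samples, let $P$ (resp. $N$) be the set of sampled points with label $1$ (resp. $0$), and \emph{accept if and only if} $\mathrm{conv}(P)\cap N = \emptyset$ (checkable by $|N|$ linear programs). This is essentially the only possible form of a one-sided tester, since any one-sided tester must accept every labeled sample consistent with \emph{some} convex set, and $\mathrm{conv}(P)\cap N\neq\emptyset$ is exactly the condition that no convex body separates $P$ from $N$. One-sidedness of our tester is immediate: if $\targetset$ is convex then $P\subseteq\targetset$ forces $\mathrm{conv}(P)\subseteq\targetset$, which is disjoint from $N\subseteq\R^n\setminus\targetset$, so the tester always accepts.

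For soundness, suppose $\targetset$ is $\eps$-far from convex; we must show the tester rejects with probability at least $2/3$. Fix $R$ with $\normal^n(\R^n\setminus\Ball(0,R))\le\eps/10$ (one may take $R = O(\sqrt n + \sqrt{\log(1/\eps)})$), a resolution $\gamma = \eps/\mathrm{poly}(n)$ to be chosen below, and a $\gamma$-net $\mathcal V$ of $\Ball(0,R)$, so $|\mathcal V|\le(3R/\gamma)^n = (n/\eps)^{O(n)}$. Call $v\in\mathcal V$ \emph{heavy} if $\normal^n(\targetset\cap\Ball(v,\gamma))\ge\tau := \eps/(8|\mathcal V|)$. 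Split the $m$ samples into two halves of size $m/2$, and let $G$ be the event that, for every heavy $v$, the first half contains a (necessarily positive) sample in $\Ball(v,\gamma)$. Since each draw lands in $\targetset\cap\Ball(v,\gamma)$ with probability at least $\tau$, a union bound gives $\Pr[\lnot G]\le|\mathcal V|(1-\tau)^{m/2}\le|\mathcal V|e^{-\tau m/2}$, which is below $1/10$ once $m\ge(16|\mathcal V|/\eps)\ln(10|\mathcal V|) = (n/\eps)^{O(n)}$.

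The heart of the argument is the following claim: conditioned on $G$, the convex polytope $K := \mathrm{conv}\{P_v : v\text{ heavy}\}$ (where $P_v\in P$ is a first-half positive sample inside $\Ball(v,\gamma)$) ``overshoots'' $\targetset$, in the sense that $\normal^n(K\setminus\targetset)\ge\eps/2$. To prove this I would bound $\normal^n(\targetset\setminus K)$ from above: assign each $x\in\targetset\cap\Ball(0,R)$ to a net point within distance $\gamma$; the $\normal^n$-mass assigned to non-heavy net points is at most $|\mathcal V|\tau = \eps/8$, while for $x$ assigned to a heavy $v$ we have $\|x - P_v\|\le 2\gamma$, so such $x$ lie within distance $2\gamma$ of $K$. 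The Gaussian measure of the $2\gamma$-neighborhood of $\partial K$ outside $K$ is at most $\eps/8$ once $\gamma$ is small enough — here I invoke the fact that every convex body in $\R^n$ has Gaussian surface area $O(n^{1/4})$ (Ball, Nazarov), so thin neighborhoods of its boundary carry little Gaussian mass, with routine separate handling of lower-dimensional or very small $K$. Adding the $\le\eps/10$ mass outside $\Ball(0,R)$ yields $\normal^n(\targetset\setminus K)<\eps/2$; since $K$ is convex and $\targetset$ is $\eps$-far from it, $\normal^n(\targetset\triangle K)\ge\eps$, hence $\normal^n(K\setminus\targetset)\ge\eps/2$. (Note $K\neq\emptyset$: otherwise $\targetset$ would have mass below $\eps$ and so be $\eps$-close to the empty set.)

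To finish, still conditioning on $G$: because $\normal^n(K\setminus\targetset)\ge\eps/2$, the probability that none of the $m/2$ second-half samples lands in $K\setminus\targetset$ is at most $(1-\eps/2)^{m/2}\le e^{-\eps m/4}\le 1/10$; and any second-half sample in $K\setminus\targetset$ is a negative point lying in $K\subseteq\mathrm{conv}(P)$, forcing the tester to reject. Thus the tester accepts with probability at most $\Pr[\lnot G] + 1/10 \le 1/5 < 1/3$, and collecting the constraints on $m$ (dominated by $m\gtrsim(|\mathcal V|/\eps)\log|\mathcal V|$ with $|\mathcal V| = (n/\eps)^{O(n)}$) shows $m = (n/\eps)^{O(n)}$ suffices. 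The step I expect to require the most care is the ``overshoot'' claim — fixing $\gamma$ correctly via the Gaussian-surface-area bound, handling degenerate $K$, and checking throughout that every failure probability can be driven below a small constant using only $(n/\eps)^{O(n)}$ samples.
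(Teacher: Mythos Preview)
Your proposal is correct, and the overall architecture---restrict to a ball of radius $R=O(\sqrt{n}+\sqrt{\log(1/\eps)})$, discretize at scale $\gamma=\eps/\mathrm{poly}(n)$, ensure enough positive samples land near every ``heavy'' cell, and then invoke Ball's Gaussian surface area bound---matches the paper's. The sample complexity analysis and the final step (a fresh sample in $K\setminus S$ forces rejection) are essentially identical to the paper's Step~5.

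Where you differ is in \emph{which} convex set Ball's theorem is applied to, and consequently in the algorithm itself. The paper applies its structural result (Theorem~\ref{thm:surfacevolume}) to the target $S$, which only makes sense when $S$ is convex; this is why the paper's algorithm must include an explicit ``reject if the boundary-cube volume $\Vol(BC)$ is $\ge\eps/4$'' step. In the soundness analysis the paper then splits into the two cases $\Vol(BC)\ge\eps/4$ (reject directly) and $\Vol(BC)<\eps/4$ (bound $\Vol(S\setminus I)$ via the cube classification and Lemma~\ref{lem:i}). You instead apply Ball's theorem (the paper's Theorem~\ref{balllemma}) directly to the convex hull $K$, which is convex no matter what $S$ is; the bound $\Vol(K_{2\gamma}\setminus K)\le 8n^{1/4}\gamma$ then gives $\Vol(S\setminus K)<\eps/2$ in one shot, with no boundary check needed in the algorithm. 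This is a genuine simplification: your tester is the bare ``reject iff $\mathrm{conv}(P)\cap N\neq\emptyset$'', and you need only Theorem~\ref{balllemma} rather than the more involved Theorem~\ref{thm:surfacevolume}. On the other hand, the paper's cube decomposition pays a dividend elsewhere: it immediately yields the finite $\eps$-cover of $\calC_\convex$ (Corollary~\ref{cor:epscover}) that drives the two-sided upper bound.
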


\begin{theorem} [Two-sided lower bound] \label{thm:2slb}
\red{There exists a positive constant $\eps_0$ such that any
  two-sided sample-based algorithm that is an $\eps$-tester for convexity over $\normal^n$
  for some $\eps\le \eps_0$ must use $2^{\Omega(\sqrt{n})}$ samples.}
\end{theorem}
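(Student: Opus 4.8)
The plan is to apply Yao's principle: exhibit a distribution $\Dyes$ supported on convex subsets of $\R^n$ and a distribution $\Dno$ supported on subsets that are $\eps_0$-far from convex (for an absolute constant $\eps_0>0$), and show that for $q=2^{o(\sqrt n)}$ the distribution of a labeled sample $\big((\bx_1,S(\bx_1)),\dots,(\bx_q,S(\bx_q))\big)$ — with $\bx_1,\dots,\bx_q\sim\normal^n$ i.i.d.\ and $S\sim\Dyes$ — lies at total variation distance $o(1)$ from the analogous object with $S\sim\Dno$; by a standard argument this rules out every two-sided $\eps_0$-tester using $q$ samples.

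Both distributions are intersections of many halfspaces with nearly orthogonal normals. Fix $N=2^{\Theta(\sqrt n)}$ unit vectors $u_1,\dots,u_N$ with pairwise $|\langle u_i,u_j\rangle|\le O(n^{-1/4})$ (which exist by a routine probabilistic argument), and a threshold $\alpha=\Theta(n^{1/4})$ and width $w=\Theta(1)$ calibrated so that, writing $R_i:=\{x:\langle x,u_i\rangle\in[\alpha,\alpha+w]\}$, one has $\normal^n(R_i)=\Theta(1/N)$ and $\sum_i\normal^n(R_i)=\Theta(1)$. By near orthogonality the $R_i$ behave like independent coordinate slabs, and a Mills-ratio estimate shows that for a $\normal^n$-typical $x\in R_i$ only the $i$-th halfspace can be binding. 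In $\Dyes$ we draw i.i.d.\ thresholds $\boldsymbol{t}_i\in[0,w]$ and set $S=\bigcap_i\{x:\langle x,u_i\rangle\le\alpha+\boldsymbol{t}_i\}$, an intersection of halfspaces and hence convex; its one-point membership marginal at $x\in R_i$ is the affine function $1-s_i(x)/w$ of the depth $s_i(x):=\langle x,u_i\rangle-\alpha\in[0,w]$, and the label is deterministic outside $\bigcup_i R_i$. In $\Dno$ we replace the $i$-th halfspace, independently over $i$, by a non-convex ``step-cut'': split $R_i$ laterally into two halves by a hyperplane orthogonal to $u_i$ and cut at independent depths $\boldsymbol{t}_i$ and $\boldsymbol{t}_i'$ on the two halves. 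A direct check shows the one-point membership marginal inside $R_i$ is still \emph{exactly} $1-s_i(x)/w$, while the set is non-convex, and robustly so with constant probability (namely when $|\boldsymbol{t}_i-\boldsymbol{t}_i'|\ge w/10$).

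Two claims remain. (i) \emph{The $\Dno$ sets are $\eps_0$-far from convex.} Inside each $R_i$ the other halfspaces affect only an $O(n^{-1/4})$ fraction of $R_i$, so $S$ agrees with the pure step-cut there up to measure $o(\normal^n(R_i))$; and a step-cut is at distance $\Omega(\normal^n(R_i))$ from \emph{every} convex set intersected with $R_i$ — one produces two sub-boxes of $R_i$ lying robustly in the step-cut whose midpoint set lies robustly outside it, so a convex set close to the step-cut would violate convexity on positive measure. Summing the $\Omega(\normal^n(R_i))$ bounds over the $\Theta(N)$ disjoint robust regions yields $\dist(S,C)\ge\eps_0$ for every convex $C$, with probability $1-o(1)$ over $S\sim\Dno$ (by concentration of the number of robust regions). (ii) \emph{Indistinguishability.} Condition on the sample locations; since $q^2\ll N$, with probability $1-o(1)$ no two of them fall in a common $R_i$. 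On this event, because $S$ is an intersection of \emph{independent} per-region sets, the label of $\bx_j$ is $0$ if some constraint is violated outright and otherwise a product of independent $\mathrm{Bern}(1-s_i(\bx_j)/w)$ factors, one per region containing $\bx_j$ — a distribution identical in $\Dyes$ and $\Dno$ — and, the regions touched by distinct samples being disjoint, the whole label vector has the same conditional law in both worlds. Hence the total variation distance is at most $\Pr[\text{bad event}]=o(1)$.

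The step I expect to be hardest is (i): one must verify that the gadget is \emph{robustly} non-convex — at distance $\Omega(\normal^n(R_i))$ from every convex set, not merely non-convex — and that this robustness survives intersecting with the $N-1$ other, ``usually irrelevant,'' halfspaces; this is exactly what pins down the quantitative choices $\alpha=\Theta(n^{1/4})$, $w=\Theta(1)$, $N=2^{\Theta(\sqrt n)}$ via Mills-ratio bounds. Note also that the matching of the one-point marginal between $\Dyes$ and $\Dno$ genuinely has to be exact rather than merely close, since $q$ can be as large as $2^{\Theta(\sqrt n)}$ and any per-region discrepancy bounded below by, say, $2^{-o(\sqrt n)}$ would be amplified past $1$ over the sample.
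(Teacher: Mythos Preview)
Your overall plan (Yao's principle via matched $\Dyes$/$\Dno$) is the same as the paper's, and your indistinguishability argument (ii) is correct and in one respect genuinely cleaner than the paper's: by putting the randomness in the \emph{thresholds} $\boldsymbol t_i$ rather than in the normals, you arrange that the label of $\bx_j$ is a deterministic function of $\bx_j$ and of $\{\boldsymbol t_i:\bx_j\in R_i\}$, so once no two samples share a region the label vector has \emph{exactly} the same law under $\Dyes$ and $\Dno$. The paper instead takes the normals random (so that the one-point marginal becomes a function $\rho(\|x\|)$ of the norm alone) and then has to do real work to show the $q$ labels are \emph{approximately} independent; this is its Lemmas on ``typical'' sample sequences and ``nice'' incidence matrices.

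The gap is in (i). Two of your quantitative claims are not right as stated, and they are load-bearing.

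\textbf{The ``other halfspaces are $O(n^{-1/4})$-irrelevant'' claim fails.} Condition on $x\in R_i$, i.e.\ $\langle x,u_i\rangle\in[\alpha,\alpha+w]$. Writing $x=au_i+x^\perp$ with $a\in[\alpha,\alpha+w]$ and $x^\perp$ standard Gaussian in $u_i^\perp$, one has $\langle x,u_j\rangle = a\langle u_i,u_j\rangle + \langle x^\perp,P_{u_i^\perp}u_j\rangle$, which is (conditionally) Gaussian with mean $a\langle u_i,u_j\rangle$ and variance $1-\langle u_i,u_j\rangle^2\approx 1$. With $|\langle u_i,u_j\rangle|\le O(n^{-1/4})$ --- which is best-possible for $2^{\Theta(\sqrt n)}$ unit vectors --- the mean shift is $\Theta(1)$, so
\[
\Pr\big[\langle x,u_j\rangle\ge\alpha \,\big|\, x\in R_i\big]\ \approx\ \Phi^c\big(\alpha-\Theta(1)\big)\ \asymp\ e^{\Theta(\alpha)}\cdot \Phi^c(\alpha)\ =\ \frac{e^{\Theta(n^{1/4})}}{N}.
\]
Even in the idealized orthogonal case the conditional probability is $\Theta(1/N)$, and summing over $N-1$ choices of $j$ already gives $\Theta(1)$, not $O(n^{-1/4})$. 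So inside $R_i$, the other constraints \emph{do} bite on a constant fraction of the mass; your assertion that $S$ agrees with the pure step-cut on $R_i$ up to $o(\normal^n(R_i))$ is false.

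\textbf{The regions are not disjoint.} You write ``summing the $\Omega(\normal^n(R_i))$ bounds over the $\Theta(N)$ disjoint robust regions.'' But $\sum_{i\ne j}\normal^n(R_i\cap R_j)=\E_x[Z(x)^2]-\E_x[Z(x)]$ where $Z(x)=\#\{i:x\in R_i\}$, and the calculation above shows $\E[Z]$ and $\E[Z^2]$ are both $\Theta(1)$ only under delicate control of the pairwise correlations (e.g.\ for a \emph{random} choice of the $u_i$, in expectation over that randomness); a worst-case pairwise bound of $O(n^{-1/4})$ gives $\sum_{i\ne j}\normal^n(R_i\cap R_j)$ as large as $e^{\Theta(n^{1/4})}$, which kills any second-moment or disjointness argument. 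Your midpoint argument also needs the witnesses $A_i,B_i$ to lie in $S$ (not just in the $i$-th step-cut), and to control the measure of $\tfrac12\big((A_i\cap C)+(B_i\cap C)\big)$ when $C$ contains only most of $A_i,B_i$; neither step is addressed.

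By contrast, the paper pays the price on the indistinguishability side and reaps it on the far-from-convex side: with random normals the one-point marginal is $\rho(\|x\|)$, so $\Dno$ can be taken to be a random union of thin concentric \emph{shells} with inclusion probabilities $\rho(\cdot)$. Then distance-from-convex reduces to a one-dimensional statement along every ray (a random subset of many equal-mass intervals is far from every single interval), which is immediate from Chernoff. Your construction buys an exact-coupling indistinguishability argument, but at the cost of a far-from-convex proof that, as written, does not go through.
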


\begin{theorem} [Two-sided upper bound] \label{thm:2sub}
For any $\eps > 0$, there is a two-sided sample-based $\eps$-tester for convexity over
   $\normal^n$ which uses $n^{O(\sqrt{n}/\eps^2)}$ samples.
\end{theorem}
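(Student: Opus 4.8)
The plan is to reduce two-sided convexity testing under $\normal^n$ to \emph{agnostic learning} of low-degree polynomial threshold functions (PTFs), exploiting the small Gaussian surface area of convex bodies. I will use two standard external facts: (1) Ball's theorem --- every convex body $K\subseteq\R^n$ has Gaussian surface area $O(n^{1/4})$; and (2) the Klivans--O'Donnell--Servedio results --- if a set $A\subseteq\R^n$ has Gaussian surface area at most $s$ then for every $\delta>0$ there is a PTF $g$ of degree $O(s^2/\delta^2)$ with $\Pr_{\bx\leftarrow\normal^n}[g(\bx)\neq A(\bx)]\le\delta$, and, via $L_1$ polynomial regression, the class of degree-$d$ PTFs is agnostically learnable under $\normal^n$ to classification error $\mathrm{opt}+\delta$ using $n^{O(d)}\cdot\mathrm{poly}(1/\delta)$ samples and arbitrary computation. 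Fix $\eta:=\eps/10$ and choose $d=O(\sqrt n/\eps^2)$ large enough that, combining (1) and (2), every convex body is within classification error $\eta$ of some degree-$d$ PTF under $\normal^n$; then degree-$d$ PTFs are agnostically learnable under $\normal^n$ from $n^{O(\sqrt n/\eps^2)}$ samples.

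The tester works as follows. (i) Run the $L_1$-regression agnostic learner for degree-$d$ PTFs with accuracy $\eta$ and confidence $\ge 9/10$, obtaining a hypothesis $h$ --- the sign of an explicit degree-$d$ polynomial --- for which $\beta:=\Pr_{\bx}[h(\bx)\neq S(\bx)]\le\mathrm{opt}_d+\eta$, where $\mathrm{opt}_d$ is the smallest error on $S$ of any degree-$d$ PTF. (ii) Draw $O(1/\eps^2)$ further samples and form an estimate $\hat\beta$ with $|\hat\beta-\beta|\le\eta$ with probability $\ge 9/10$. (iii) Using unbounded computation but \emph{no further samples}, compute (to additive error $\eps/100$, which suffices) the Gaussian distance $\gamma:=\inf_{K\text{ convex}}\Pr_{\bx}[h(\bx)\neq K(\bx)]$ from $h$ to the set of all convex bodies; this is a fixed real number determined by the explicit PTF $h$, and the infimum is approached by polytopes, so it is approximable by deterministic numerical integration. (iv) Accept iff $\hat\beta+\gamma\le 0.7\,\eps$. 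The total sample complexity is $n^{O(\sqrt n/\eps^2)}$.

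For correctness, condition on the two high-probability events above, which hold jointly with probability $\ge 4/5\ge 2/3$. If $S$ is convex, then $\mathrm{opt}_d\le\eta$ (take a degree-$d$ PTF that is $\eta$-close to $S$), so $\beta\le 2\eta$ and $\hat\beta\le 3\eta$; moreover $S$ is itself one of the convex bodies, so $\gamma\le\Pr_{\bx}[h(\bx)\neq S(\bx)]=\beta\le 2\eta$, whence $\hat\beta+\gamma\le 5\eta<0.7\eps$ and the tester accepts. If instead $\dist(S,C)\ge\eps$ for every convex $C$, then by the triangle inequality for symmetric-difference distance $\dist(S,C)\le\Pr_{\bx}[h(\bx)\neq S(\bx)]+\Pr_{\bx}[h(\bx)\neq C(\bx)]$ for each convex $C$; taking the infimum over $C$ gives $\eps\le\dist(S,\text{convex})\le\beta+\gamma$, so $\hat\beta+\gamma\ge\beta+\gamma-\eta\ge 9\eta>0.7\eps$ and the tester rejects. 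The $\eps/100$ slack in step (iii) is harmless since the accept value $5\eta$ and reject value $9\eta$ are well separated.

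The genuinely non-routine point is \emph{why the intermediate PTF step is needed}: convex bodies form a class of infinite VC dimension, so one cannot perform empirical risk minimization directly over convex sets and hope for uniform convergence of empirical distances; degree-$d$ PTFs instead have finite VC dimension $n^{O(d)}$ while still $\eta$-approximating every convex body, which is exactly what makes the sample-based estimate $\hat\beta$ of $\beta$ in step (ii) valid. The price is that the learned hypothesis $h$ is not itself convex, which is repaired by the sample-free computation of $\gamma$ in step (iii); this is legitimate here because the theorem constrains only sample complexity, not running time. The remaining work --- deriving $d=O(\sqrt n/\eps^2)$ from Ball's $O(n^{1/4})$ surface-area bound together with the KOS degree bound, handling the standard truncation/boundedness technicalities in $L_1$ polynomial regression over $\normal^n$, and the additive-error bookkeeping above --- is routine.
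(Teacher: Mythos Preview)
Your argument is correct and follows the same high-level strategy as the paper: learn a hypothesis $h$ for $S$ using the KOS low-degree polynomial approximation of convex bodies, then---without spending further labeled samples---locate the nearest convex body to $h$, and accept iff $h$ is close both to $S$ and to some convex body. The paper packages this as ``proper learning of $\calC_\convex$ plus the Goldreich--Goldwasser--Ron learning-to-testing reduction'': it runs the KOS learner, then enumerates an explicitly constructed finite $(\eps/5)$-cover $\calC_{\mathrm{cover}}$ of $\calC_\convex$ (built from the gridding machinery and the structural bound on thickened boundaries) to convert the improper hypothesis into a convex one, and finally invokes GGR. You instead keep the non-convex PTF $h$, estimate $\hat\beta\approx\Pr[h\neq S]$ directly, compute $\gamma=\dist(h,\calC_\convex)$ offline, and threshold $\hat\beta+\gamma$. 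This is a tidy repackaging, but note that your step~(iii)---that $\gamma$ is approximable to additive $\eps/100$ by ``deterministic numerical integration'' over polytopes---quietly needs exactly the fact the paper takes pains to establish, namely that $\calC_\convex$ has a finite $\eps$-cover one can enumerate; so the two proofs have the same substantive content. What your framing buys is that you never need to name the GGR reduction or the proper-learning corollary; what the paper's framing buys is a concrete, finite description of the search in step~(iii) rather than an appeal to unbounded computation.
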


We will prove Theorems~\ref{thm:1slb}, \ref{thm:1sub}, \ref{thm:2slb} and \ref{thm:2sub} in Sections
\ref{sec:1slb}, \ref{sec:1sub}, \ref{sec:2slb} and~\ref{sec:2sub} respectively.  These results are summarized above in Table~\ref{table:F2}.

\begin{table}[t!]
\renewcommand{\arraystretch}{1.6}
\centering
\begin{tabular}{|m{5.5em}|m{13em}|m{5.7em}|}
\hline
Model & Sample complexity bound & Reference \\ \hline \hline
One-sided & $2^{\Omega(n)}$ samples (for $\eps < 1/2$)  & Theorem~\ref{thm:1slb} \\ \hline
          & $2^{O(n\log(n/\eps))}$ samples & Theorem~\ref{thm:1sub} \\ \hline \hline
Two-sided & $2^{\Omega(\sqrt{n})}$ samples (for $\eps < \eps_0$)  & Theorem~\ref{thm:2slb} \\ \hline
          & $2^{O({\sqrt{n}} \log(n)/\eps^2)}$ samples & Theorem~\ref{thm:2sub} \\ \hline
 \end{tabular}\vspace{0.2cm}
 \caption{Sample complexity bounds for sample-based convexity testing.  In line four,
   ${ \eps_0}>0$ is some absolute constant.}
 \label{table:F2}
 \end{table}

\subsection{Related work} \label{sec:relatedwork}

\noindent {\bf Convexity testing.}
As mentioned above, \cite{Raskhodnikova:03,BMR16fsttcs,BMR16socg,BMR16icalp} studied the testing
  of $2$-dimensional convexity under the uniform distribution, either
  within a compact body such as $[0,1]^2$ \cite{BMR16fsttcs,BMR16socg} or over
  a discrete grid $[n]^2$ \cite{Raskhodnikova:03,BMR16icalp}.
The model of \cite{BMR16fsttcs,BMR16socg} is more closely related to ours:
\cite{BMR16socg} showed that $\Theta(\eps^{-4/3})$
  samples are necessary and sufficient for one-sided sample-based testers, while
\cite{BMR16fsttcs} gave a one-sided general tester (which can make adaptive queries to
  the unknown set) for $2$-dimensional convexity
  with only $O(1/\eps)$ queries.

The only prior work that we are aware of that deals with testing high-dimensional convexity is that of \cite{Vempala}. However, the model
considered in \cite{Vempala} is different from ours in the following important aspects.
First, the goal of an algorithm in their model is to determine whether an unknown $S\subseteq \R^n$ is not convex
  or is $\eps$-close to convex in the following sense: the (Euclidean) volume of $S\bigtriangleup C$, for some convex $C$, is at most
  an $\eps$-fraction of the volume of $S$. Second, in their model an algorithm both can make membership
  queries (to determine whether a given point $x$ belongs to $S$), and can receive
  samples which are guaranteed to be drawn independently and uniformly at random from $S$.
The main result of \cite{Vempala} is an algorithm which uses $(cn/\eps)^n$ many random samples \emph{drawn from $S$},
  for some constant $c$,
  and $\text{poly}(n)/\eps$ membership queries. \vspace{-0.2cm}

\paragraph{Sample-based testing.} A wide range of papers have studied sample-based testing from several different perspectives, including the recent works \cite{BMR16icalp,BMR16socg,BMR16fsttcs} which study sample-based testing of convexity over two-dimensional domains.  In earlier work on sample-based testing, \cite{BBBY12} showed that the class of linear threshold functions can be tested to constant accuracy under $\normal^n$ with $\tilde{\Theta}(n^{1/2})$ samples drawn from $\normal^n$.  (Note that a linear threshold function is a convex set of a very simple sort, as every convex set can be expressed as an intersection of (potentially infinitely many) linear threshold functions.)  The work \cite{BBBY12} in fact gave a characterization of the sample complexity of (two-sided) sample-based testing, in terms of a combinatorial/probabilistic quantity called the ``passive testing dimension.''  This is a distribution-dependent quantity whose definition involves both the class being tested and the distribution from which samples are obtained; it is not \emph{a priori} clear what the value of this quantity is for the class of convex subsets of $\R^n$ and the standard normal distribution $\normal^n$.  Our upper and lower bounds (Theorems~\ref{thm:2sub} and~\ref{thm:2slb}) may be interpreted as giving bounds on the passive testing dimension of the class of convex sets in $\R^n$ with respect to the $\normal^n$ distribution.

\subsection{Our techniques}

\paragraph{One-sided lower bound.}  Our one-sided lower bound has a simple proof using only elementary geometric and probabilistic arguments.  It follows from the fact (see Lemma~\ref{lem:rand-all-on-hull}) that if $q=2^{\Theta(n)}$ many points are drawn independently from $\normal^n$, then with probability $1-o(1)$ no one of the points lies in the convex hull of the $q-1$ others.  This can easily be shown to imply that more than $q$ samples are required (since given only $q$ samples, with probability $1-o(1)$ there is a convex set consistent with any labeling and thus a one-sided algorithm cannot reject).\vspace{-0.2cm}

\blue{
\paragraph{Two-sided lower bound.}  At a high-level, the proof of our two-sided lower bound
   uses the~following standard approach. We first define two distributions $\Dyes$ and $\Dno$
  over sets in $\R^n$ such that
  (i) $\Dyes$ is a distribution over convex sets only, and (ii)
  $\Dno$ is a distribution such that
  $\SS\leftarrow \Dno$ is
  $\eps_0$-far from convex   with probability at least $1-o(1)$ for some positive constant $\eps_0$.
We then show that every sample-based, $q$-query algorithm $A$ with $q=2^{0.01n}$
  must have
\begin{equation}\label{hehe200}
\Ppr_{\SS\leftarrow \Dyes;\hspace{0.05cm}\bx} \big[\text{$A $ accepts $(\bx,\SS(\bx))$}\big]-
\Ppr_{\SS\leftarrow \Dno ;\hspace{0.05cm}\bx} \big[\text{$A $ accepts $(\bx,\SS(\bx))$}\big]\le o(1),
\end{equation}
where $\bx$ denotes a sequence of $q$ points drawn from $\normal^n$ independently and
  $(\bx,\SS(\bx))$ denotes the $q$ labeled samples from $\SS$.
Theorem \ref{thm:2slb} follows directly from (\ref{hehe200}).

To draw a set $\SS\leftarrow \Dyes$, we sample a sequence of $N=2^{\sqrt{n}}$ points
  $\by_1,\ldots,\by_N$ from the sphere $S^{n-1}(r)$ of radius $r$ for some $r=\Theta(n^{1/4})$.
Each $\by_i$ defines a halfspace $\bh_i=\{x:x\cdot \by_i\le r^2\}$.
$\SS$ is then the intersection of all $\bh_i$'s. \red{(This is essentially a construction used by Nazarov \cite{Nazarov:03} to exhibit a convex set that has large Gaussian surface area, and used by \cite{KOS:07} to lower bound the sample complexity of learning convex sets under the Gaussian distribution.)} The most challenging part of the two-sided lower bound proof is to show that,
  with $q$ points $\bx_1,\ldots,\bx_q\leftarrow \normal^n$,
  the $q$ bits $\SS(\bx_1),\ldots,\SS(\bx_q)$ with $\SS\leftarrow \Dyes$ are ``almost'' independent.
More formally, the $q$ bits $\SS(\bx_1),\ldots,\SS(\bx_q)$ with $\SS\leftarrow \Dyes$
  have $o(1)$-total variation distance from
  $q$ independent bits with the $i$th bit drawn from the marginal distribution
  of $\SS(\bx_i)$ as $\SS\leftarrow \Dyes$.
On the other hand, it is relatively easy to define a distribution~$\Dno$ that satisfies~(ii)
  and at the same time, $\SS(\bx_1),\ldots,\SS(\bx_q)$ when $\SS\leftarrow\Dno$ has $o(1)$-total variation
  distance from the same product distribution.
(\ref{hehe200}) follows by combining the two parts.
}\vspace{-0.2cm}

\paragraph{Structural result.}  Our algorithms rely on a new structural result which we establish for convex sets in $\R^n$.  Roughly speaking, this result gives an upper bound on the Gaussian volume of the ``thickened surface'' of any bounded convex subset of $\R^n$; it is inspired by, and builds on, the classic result of Ball \cite{Ball:93} that upperbounds the Gaussian surface area of any convex subset of $\R^n$.\vspace{-0.2cm}

\paragraph{One-sided upper bound.} Our one-sided testing algorithm employs a ``gridding-based'' approach to decompose the relevant portion of $\R^n$ (namely, those points which are not too far from the origin) into a collection of disjoint cubes.  It draws samples and identifies a subset of these cubes as a proxy for the ``thickened surface'' of the target set; by the structural result sketched above, if the Gaussian volume of this thickened surface is too high, then the one-sided algorithm can safely reject (as the target set cannot be convex).  Otherwise the algorithm does random sampling to probe for points which are inside the convex hull of positive examples it has received but are labeled negative (there should be no such points if the target set is indeed convex, so if such a point is identified, the one-sided algorithm can safely reject).  If no such points are identified, then the algorithm accepts.\vspace{-0.2cm}

\paragraph{Two-sided upper bound.}  Finally, the main tool we use to obtain our two-sided testing algorithm is a \emph{learning} algorithm for convex sets with respect to the normal distribution over $\R^n.$  The main result of \cite{KOS:07} is an (improper) algorithm which learns the class of all convex subsets of $\R^n$ to accuracy $\eps$ using $n^{O(\sqrt{n}/\eps^2)}$ independent samples from $\normal^n$.  Using the structural result mentioned above, we show that this can be converted into a \emph{proper} algorithm for learning convex sets under $\normal^n$, with essentially no increase in the sample complexity.  Given this proper learning algorithm, a two-sided algorithm for testing convexity follows from the well-known result of \cite{GGR98} which shows that proper learning for a class of functions implies (two-sided) testability.


\section{Preliminaries and Notation} \label{sec:prelims}

\paragraph{Notation.}
We use boldfaced letters such as $\bx, \boldf,\bA$, etc. to denote random variables (which may be real-valued, vector-valued, function-valued, set-valued, etc; the intended type will be clear from the context).
We write ``$\bx \leftarrow \calD$'' to indicate that the random variable $\bx$ is distributed according to probability distribution $\calD.$ Given $a,b,c\in \R$ we use $a=b\pm c$ to indicate that $b-c\le a\le b+c$.\vspace{-0.2cm}

\paragraph{Geometry.}
For $r >0$, we write $S^{n-1}(r)$ to denote the origin-centered sphere of radius $r$ in $\R^n$
and $\Ball(r)$ to denote the origin-centered ball of radius $r$ in $\R^n$, i.e.,
$$
S^{n-1}(r) = \big\{x \in \R^n: \|x\|=r\big\}\quad\text{and}\quad
\Ball(r) = \big\{x \in \R^n : \|x\| \leq r\big\},
$$
where $\|x\|$   denotes the $\ell_2$-norm $\|\cdot \|_2$ of $x\in \R^n$.
We also write $S^{n-1}$ for the unit sphere $S^{n-1}(1)$.

Recall that a set $C \subseteq \R^n$ is convex if $x,y \in C$ implies $\alpha\hspace{0.03cm}x + (1-\alpha) y \in C$ for all $\alpha\in [0,1].$
We write $\calC_\convex$ to denote the class of all convex sets in $\R^n.$
Recall that convex sets are Lebesgue measurable.
Given a set $C \subseteq \R^n$ we write $\Conv(C)$ to denote the convex hull of $C$.

For sets $A,B \subseteq \R^n$, we write $A + B$ to denote the Minkowski sum $\{a + b: a \in A\ \text{and}\ b \in B\}.$ For a set $A \subseteq \R^n$ and $r > 0$ we write $rA$ to denote the set $\{ra : a \in A\}$.
Given a point $a$ and a set $B\subseteq \R^n$, we use $a+B$ and $B-a$ to denote
  $\{a\}+B$ and $B+\{-a\}$ for convenience.  For a convex set $C$,
  we write $\partial C$ to denote its \emph{boundary}, i.e. the set of points $x \in \R^n$ such that for all $\delta > 0$,
  the set $x + \Ball(\delta)$ contains at least one point in $C$ and at least one point outside $C$.
\vspace{-0.2cm}

\paragraph{Probability.}
We use $\normal^n$ to denote the standard $n$-dimensional Gaussian distribution with zero mean and identity covariance matrix. We also recall that the probability density function for the one-dimensional Gaussian distribution is $$\varphi(x) = {\frac 1 {\sqrt{2 \pi}}}\cdot \exp(-x^2/2).$$
Sometimes we denote $\normal^n$ by $\gaus$ for convenience.
The squared norm $\|\bx\|^2$ of $\bx \leftarrow \normal^n$ is distributed according to the chi-squared distribution $\chi_n^2$
  with $n$ degrees of freedom.  The following tail bound for $\chi_n^2$ (see \cite{Johnstone01}) will be useful:

\begin{lemma} [Tail bound for the chi-squared distribution] \label{lem:johnstone}\label{johnstone}
Let $\bX \leftarrow \chi_n^2$.
Then we have
$$\Pr\big[|\bX-n| \geq tn\big] \leq e^{-(3/16)nt^2},\quad\text{for all $t \in [0, 1/2)$.}$$
\end{lemma}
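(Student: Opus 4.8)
The plan is to prove this by the standard moment generating function (Chernoff / Laplace transform) method. Writing $\bX = \bz_1^2 + \cdots + \bz_n^2$ with $\bz_1,\dots,\bz_n \leftarrow \normal$ independent, a one-dimensional Gaussian integral gives $\mathbb{E}[e^{s\bz_i^2}] = (1-2s)^{-1/2}$ for every $s < 1/2$, so that $\mathbb{E}[e^{s\bX}] = (1-2s)^{-n/2}$ for $s<1/2$ and $\mathbb{E}[e^{-s\bX}] = (1+2s)^{-n/2}$ for every $s>0$.

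I would then bound the two tails separately. For the upper tail, Markov's inequality applied to $e^{s\bX}$ with $s\in(0,1/2)$ gives $\Pr[\bX \geq (1+t)n] \leq e^{-s(1+t)n}(1-2s)^{-n/2}$; the optimal choice $s = t/(2(1+t))$ turns this into $\Pr[\bX \geq (1+t)n] \leq \exp\!\big(-\tfrac n2(t-\ln(1+t))\big)$. Symmetrically, Markov applied to $e^{-s\bX}$ with the optimal choice $s = t/(2(1-t))>0$ gives $\Pr[\bX \leq (1-t)n] \leq \exp\!\big(-\tfrac n2(-t-\ln(1-t))\big)$.

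It then remains to check the two elementary one-variable inequalities $-t-\ln(1-t)\ge \tfrac38 t^2$ on $[0,1)$ and $t-\ln(1+t)\ge \tfrac38 t^2$ on $[0,1/2)$. The first is immediate from $-t-\ln(1-t)=\sum_{k\ge2}t^k/k\ge t^2/2\ge\tfrac38 t^2$. For the second, set $f(t)=t-\ln(1+t)-\tfrac38 t^2$; then $f(0)=0$, $f'(t)=t(1-3t)/(4(1+t))$ is nonnegative on $[0,1/3]$ and nonpositive on $[1/3,1/2)$, and $f(1/2)=\tfrac12-\ln\tfrac32-\tfrac3{32}>0$, so $f\ge0$ throughout $[0,1/2)$. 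Substituting these into the two tail bounds shows that each one-sided deviation probability is at most $e^{-(3/16)nt^2}$, and a union bound yields the two-sided statement as recorded in \cite{Johnstone01}. I expect the main (and only) delicate point to be the inequality $t-\ln(1+t)\ge\tfrac38 t^2$: it is essentially tight as $t\to1/2$, which is precisely why the lemma is restricted to $t<1/2$ and why $3/16$ is about the best constant this approach delivers.
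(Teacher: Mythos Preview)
The paper does not prove this lemma at all; it simply quotes it from \cite{Johnstone01}. So there is no ``paper's proof'' to compare against, and your Chernoff/MGF argument is exactly the standard derivation one would expect.

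Your computation is correct through the two optimized one-sided bounds
\[
\Pr[\bX \geq (1+t)n] \leq \exp\!\Big(-\tfrac{n}{2}\big(t-\ln(1+t)\big)\Big),
\qquad
\Pr[\bX \leq (1-t)n] \leq \exp\!\Big(-\tfrac{n}{2}\big(-t-\ln(1-t)\big)\Big),
\]
and your verifications of the two elementary inequalities are fine (indeed $f(1/2)=1/2-\ln(3/2)-3/32\approx 0.00076>0$, which is why the range is essentially sharp). One small slip: after establishing that each one-sided probability is at most $e^{-(3/16)nt^2}$, a union bound only gives
\[
\Pr\big[|\bX-n|\geq tn\big] \leq 2\,e^{-(3/16)nt^2},
\]
not the bound without the factor $2$ as the lemma is stated. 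As $t\to 0$ both one-sided probabilities tend to (at least) $1/2$, so no argument of this type can remove the $2$ uniformly over $t\in[0,1/2)$. This is almost certainly just a harmless discrepancy in how the bound was transcribed from \cite{Johnstone01}; every use of the lemma in the paper is completely insensitive to a constant factor in front of the exponential, so nothing downstream is affected.
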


All target sets $S\subseteq\R^n$ to be tested for convexity are assumed to be
  Lebesgue measurable  and we write $\Vol(S)$ to denote $\Pr_{\bx \leftarrow \gaus}[\bx \in S]$, the \emph{Gaussian volume} of $S\subseteq \R^n$.  Given  two Lebesgue measurable subsets $S,C \subseteq \R^n$, we view $\Vol(S \bigtriangleup C)$ as the \emph{distance} between $S$ and $C$, where $S\bigtriangleup C$ is the symmetric
    difference of $S$ and $C$.
Given $S\subseteq \R^n$, we abuse the notation and use $S$ to denote the
  indicator function of the set, so we may write  ``$S(x)=1$'' or ``$x \in S$'' to mean the same thing.

We say that a subset $\calC$ of $\calC_\convex$ is a \emph{$\tau$-cover of $\calC_\convex$} if for every $C \in \calC_\convex$, there exists a set $C' \in \calC$ such that $\Vol(C \bigtriangleup C') \leq \tau.$

\red{Given a convex set $C$ and a real number $h>0$, we let
  $C_h$ denote the set of points in $\R^n$ whose distance from $C$ do not exceed $h$.
We recall the following theorem of Ball \cite{Ball:93} (also see \cite{Nazarov:03}).

\begin{theorem}[\cite{Ball:93}]\label{balllemma}
For any convex set $C\subseteq \R^n$ and  $h>0$, we have
$$
\frac{\Vol(C_h\setminus C)}{h}\le 4\hspace{0.03cm}n^{1/4}.\vspace{-0.32cm}
$$
\end{theorem}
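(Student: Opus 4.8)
I would first turn the ``thickened surface'' statement into a bound on the Gaussian surface area of an \emph{arbitrary} convex body, and then prove that bound, reducing it to polytopes and applying the divergence theorem. For the first step, fix the convex body $C$ and consider the distance function $u(x)=\mathrm{dist}(x,C)$: it is $1$-Lipschitz, vanishes exactly on $C$, and is differentiable on $\R^n\setminus C$ with $\|\nabla u\|\equiv 1$ there (the gradient points from the unique nearest point of $C$ to $x$). Writing $\varphi_n(x)=(2\pi)^{-n/2}e^{-\|x\|^2/2}$ for the density of $\normal^n$ and applying the coarea formula to $u$ with weight $\varphi_n(x)\,\mathbf{1}[0<u(x)\le h]$,
\[
\Vol(C_h\setminus C)=\int_{\{0<u\le h\}}\varphi_n\,dx=\int_0^h\Big(\int_{\partial C_s}\varphi_n\,d\mathcal H^{n-1}\Big)ds=\int_0^h \mathrm{GSA}(C_s)\,ds ,
\]
where $C_s=\{x:\mathrm{dist}(x,C)\le s\}$ is again a convex body and $\mathrm{GSA}(K):=\int_{\partial K}\varphi_n\,d\mathcal H^{n-1}$ is the Gaussian surface area. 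Hence $\Vol(C_h\setminus C)/h\le \sup_{s\ge 0}\mathrm{GSA}(C_s)$, and the theorem follows from the clean bound $\mathrm{GSA}(K)\le 4n^{1/4}$ for \emph{every} convex body $K\subseteq\R^n$ (note this is automatically uniform in $h$, as the statement requires).

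\textbf{Reducing the surface area bound to polytopes and the divergence theorem.} Since $\mathrm{GSA}$ (a weighted perimeter) is lower semicontinuous under Hausdorff convergence of convex bodies, and every convex body is a Hausdorff limit of inscribed polytopes, it suffices to bound $\mathrm{GSA}(P)$ for a polytope $P=\bigcap_{i=1}^m\{x:\langle x,u_i\rangle\le b_i\}$ with $\|u_i\|=1$; there $\partial P$ is the union of the facets $F_i$ up to an $\mathcal H^{n-1}$-null set, so $\mathrm{GSA}(P)=\sum_i\int_{F_i}\varphi_n\,d\mathcal H^{n-1}$. Now apply the divergence theorem to the smooth field $V(x)=\varphi_n(x)\,(x-v_0)$, where $v_0+\Ball(\rho_0)\subseteq P$ is an inscribed ball. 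On $\partial P$ one has $\langle V,\nu\rangle=\varphi_n\big(h_P(\nu)-\langle v_0,\nu\rangle\big)\ge\rho_0\,\varphi_n$ (the supporting hyperplane at a boundary point lies at distance $\ge\rho_0$ from $v_0$), while $\operatorname{div}V=\varphi_n\big(n-\|x\|^2+\langle v_0,x\rangle\big)$, so
\[
\rho_0\,\mathrm{GSA}(P)\le\int_{\partial P}\langle V,\nu\rangle\,d\mathcal H^{n-1}=\int_P\operatorname{div}V\,dx\le\mathbb E_{\mathbf g\leftarrow\normal^n}\!\big[(n-\|\mathbf g\|^2+\langle v_0,\mathbf g\rangle)^+\big]\le\tfrac12\sqrt{2n+\|v_0\|^2},
\]
using that $n-\|\mathbf g\|^2+\langle v_0,\mathbf g\rangle$ has mean $0$ and variance $2n+\|v_0\|^2$. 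Thus $\mathrm{GSA}(P)\le \sqrt{2n+\|v_0\|^2}\,/(2\rho_0)$, which is already $\le 4n^{1/4}$ as soon as $P$ contains an inscribed ball of radius $\rho_0\gtrsim n^{1/4}$ centered within distance $O(\sqrt n)$ of the origin. This already covers the bodies that are genuinely ``large'' in the relevant scale, e.g. the intersections of halfspaces tangent to $S^{n-1}(r)$ with $r=\Theta(n^{1/4})$ used by Nazarov \cite{Nazarov:03}.

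\textbf{The main obstacle.} The hard case is the complementary, ``degenerate'' regime: polytopes every inscribed ball of which is small (thin bodies, e.g.\ simplices) or centered far from the origin; after the facet decomposition this amounts to controlling $\sum_{i:\,b_i<n^{1/4}}\int_{F_i}\varphi_n\,d\mathcal H^{n-1}$, the contribution of facets whose supporting hyperplane passes within distance $n^{1/4}$ of the origin. Morally, a convex boundary that comes that close to the origin is pinched between nearby parallel hyperplanes, which carry little Gaussian mass, while the portion of the boundary far from the origin carries negligible Gaussian weight; but making this quantitative and uniform over all convex bodies is exactly the content of the argument of Ball \cite{Ball:93} (see also \cite{Nazarov:03}), and is where the real work lies. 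Finally, the exponent $1/4$ is optimal: Nazarov's random-polytope construction from the introduction yields a convex body $K$ with $\mathrm{GSA}(K)=\Omega(n^{1/4})$, hence $\Vol(K_h\setminus K)/h=\Omega(n^{1/4})$ for small $h$.
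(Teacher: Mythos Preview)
The paper does not prove this theorem; it is quoted from Ball~\cite{Ball:93} (with a pointer also to Nazarov~\cite{Nazarov:03}) and used as a black box. So there is no ``paper's own proof'' to compare against.

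As for your outline on its own merits: the coarea reduction to a uniform Gaussian--surface--area bound $\mathrm{GSA}(K)\le 4n^{1/4}$ is correct and standard, and the divergence--theorem computation with the field $V(x)=\varphi_n(x)(x-v_0)$ is carried out correctly, yielding $\mathrm{GSA}(P)\le \sqrt{2n+\|v_0\|^2}/(2\rho_0)$. But this only disposes of bodies containing a ball of radius $\gtrsim n^{1/4}$ centered at distance $O(\sqrt n)$ from the origin, and you say so yourself: the ``degenerate'' regime of thin or off-center bodies is precisely the content of Ball's argument, and you do not supply it. So what you have written is a plan with the main case left open, not a proof. (Incidentally, Ball's actual argument does not proceed via this divergence identity; the approach you sketch is closer in spirit to later treatments, but in all of them the thin-body case requires a genuinely different idea, not a refinement of the inscribed-ball bound.)
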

}

\paragraph{Sample-based property testing.}
Given a point $x\in \R^n$, we refer to $(x,S(x))\in \R^n\times \{0,1\}$ as a
  \emph{labeled sample} from a set $S\subseteq \R^n$.
A \emph{sample-based testing algorithm for convexity} is a randomized algorithm which is given
  as input an accuracy parameter $\eps>0$ and
  access to an oracle that, each time it is invoked, generates a labeled sample $(\bx,\targetset(\bx))$ from the unknown (Lebesgue measurable) \emph{target set}
  $\targetset \subseteq \R^n$ with $\bx$ drawn independently each time from $\gau$.
When run with any Lebesgue measurable $\targetset \subseteq \R^n,$ such an algorithm must output ``accept'' with probability at least 2/3 (over the draws it gets from the oracle and its own internal randomness) if $\targetset \in \calC_\convex$ and must output ``reject'' with probability at least $2/3$ if $\targetset$ is $\eps$-\emph{far} from being convex, meaning that for every $C \in \calC_\convex$ it is the case that $\Vol(\targetset \bigtriangleup C) \geq \eps$. (We also refer to an algorithm as an $\eps$-tester for convexity
  if it works for a specific accuracy parameter $\eps$.) Such a testing algorithm is said to be \emph{one-sided} if whenever it is run on a convex set $\targetset$ it always outputs ``accept;'' equivalently, such an algorithm can only output ``reject'' if the labeled samples it receives are not consistent with any convex set.  A testing algorithm which is not one-sided is said to be \emph{two-sided}.

Throughout the rest of the paper we reserve the symbol $\targetset$ to denote the unknown target set (a measurable subset of $\R^n$) that is being tested for convexity.  If $\targetset(x) = 1$ then we say that $x$ is a \emph{positive point}, and if $\targetset(x)=0$ we say $x$ is a \emph{negative point}.

Given a finite set $T$ of labeled samples $(x,b)$ with $x\in \R^n$ and  $b \in \{0,1\}$,
  we say $x$ is a \emph{positive} point in $T$ if $(x,1)\in T$ and is a \emph{negative} point
  in $T$ if $(x,0)\in T$.
We use
  $T^+$ to denote the set of positive points $\{x: (x,1) \in T\}$, and $T^-$ to denote the
  set of negative points $\{x: (x,0) \in T\}.$


\section{A useful structural result:  Bounding the volume of the\\ thickened boundary of bounded convex bodies} \label{sec:structural}

For a \emph{bounded} convex set $C$ in $\R^n$ (i.e., $\sup_{c\in C}\|c\|\le K$ for some real $K$) we may view $\partial C + \Ball(\alpha)$ as the ``$\alpha$-\emph{thickened boundary}'' of $C$.  In this section, we use Theorem~\ref{balllemma} of \cite{Ball:93} to give an upper bound on the volume of the $\alpha$-thickened boundary of such a set:

\begin{theorem}\label{thm:surfacevolume}
If $C\subset \R^n$ is convex and $\sup_{c\in C} \|c\| \leq K$ for some $K >1$, then  we have
$$\Vol\big(\del C + \Ball(\alpha)\big) \leq 20\hspace{0.03cm} n^{\red{5/8}}\hspace{0.03cm}K \sqrt{\alpha},\quad\text{\red{for any $0 < \alpha < n^{-3/4}$.}}$$
\end{theorem}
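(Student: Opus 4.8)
The plan is to split the $\alpha$-thickened boundary into its ``outside'' part $(\del C+\Ball(\alpha))\setminus C$ and its ``inside'' part $\mathcal{I}:=\{x\in C:\dist(x,\del C)\le\alpha\}$. Since $\del C\subseteq\overline C$, the outside part is contained in $C_\alpha\setminus C$, whose Gaussian volume is at most $4n^{1/4}\alpha$ by Theorem~\ref{balllemma}; so all the real work goes into the inner collar $\mathcal{I}$, on which Theorem~\ref{balllemma} says nothing directly. I would first dispose of the degenerate case $\mathrm{int}(C)=\emptyset$ (then $\del C$ lies in an affine hyperplane, $\del C+\Ball(\alpha)$ lies in a slab of width $2\alpha$, and its Gaussian volume is at most $\alpha<20n^{5/8}K\sqrt\alpha$), and then assume $C$ is full-dimensional with inradius $\rho>0$ (the largest $r$ such that $C$ contains a ball of radius $r$).

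The key idea for bounding $\Vol(\mathcal{I})$ is to compare $C$ to a slightly shrunken copy of itself. Put $\rho_0:=2\sqrt2\,\sqrt{K\alpha}\,n^{-3/8}$; the hypothesis $\alpha<n^{-3/4}$ guarantees $\alpha/\rho_0<1$. If $\rho>\rho_0$, fix $p$ with $p+\Ball(\rho_0)\subseteq C$, set $t:=\alpha/\rho_0\in(0,1)$, and let $D:=(1-t)C+tp$, which is convex. Convexity of $C$ yields $D+\Ball(\alpha)\subseteq C$, and hence $\mathrm{int}(D)\cap\mathcal{I}=\emptyset$ (any point of $\mathrm{int}(D)$ has a closed ball of radius exceeding $\alpha$ around it contained in $C$), so $\mathcal{I}\subseteq C\setminus\mathrm{int}(D)$. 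On the other hand each $x\in C$ satisfies $\|x-((1-t)x+tp)\|=t\|x-p\|\le t(\|x\|+\|p\|)\le2tK$, so $C\subseteq D_{2tK}$ and therefore $\mathcal{I}\subseteq D_{2tK}\setminus\mathrm{int}(D)$, a set whose Gaussian volume is at most $\Vol(D_{2tK}\setminus D)\le 4n^{1/4}(2tK)=8n^{1/4}K\alpha/\rho_0$ by Theorem~\ref{balllemma} applied to the convex body $D$.

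If instead $\rho\le\rho_0$, then $C$ contains no ball of radius exceeding $\rho_0$, and by a classical fact about convex bodies (John's theorem, or Steinhagen's inequality) such a body must lie in a slab of width $O(n\rho_0)$; since the one-dimensional Gaussian density is at most $\varphi(0)<1/2$, a slab of width $O(n\rho_0)$ has Gaussian volume at most $n\rho_0$, so $\Vol(\mathcal{I})\le\Vol(C)\le n\rho_0$. With the above choice of $\rho_0$, the two case bounds $8n^{1/4}K\alpha/\rho_0$ and $n\rho_0$ are equal, each being $2\sqrt2\,n^{5/8}\sqrt K\sqrt\alpha$; adding the outer contribution $4n^{1/4}\alpha\le4n^{5/8}K\sqrt\alpha$ (legitimate because $\alpha<1$ and $K>1$) gives $\Vol(\del C+\Ball(\alpha))\le(2\sqrt2+4)\,n^{5/8}K\sqrt\alpha<20\,n^{5/8}K\sqrt\alpha$, as claimed.

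The step I expect to be the main obstacle is the inner collar $\mathcal{I}$: Theorem~\ref{balllemma} controls only \emph{outward} enlargements, and applying it to the eroded set $\{x:x+\Ball(\alpha)\subseteq C\}$ does not work because re-expanding the erosion need not recover $C$ (corners get rounded off). The shrink-toward-an-inscribed-ball maneuver exhibits $\mathcal{I}$ as an outward collar of the genuinely convex body $D$, but only at the cost of a factor $K/\rho_0$; making that loss acceptable requires the complementary estimate that a convex body with small inradius has small Gaussian volume, together with the right choice of the threshold $\rho_0$ and a check that the constraint $\alpha<n^{-3/4}$ keeps the shrink factor below $1$.
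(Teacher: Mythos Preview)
Your proof is correct and follows essentially the same approach as the paper's: split into two cases according to whether the inradius of $C$ exceeds a threshold of order $\sqrt{\alpha}\,n^{-3/8}$, invoke John's theorem for the thin case, and for the thick case shrink $C$ toward the center of an inscribed ball and apply Ball's theorem (Theorem~\ref{balllemma}) to the shrunk convex body. The only cosmetic difference is that you first peel off the outer collar $C_\alpha\setminus C$ (bounded directly by Theorem~\ref{balllemma}) and do the shrink trick only for the inner collar, whereas the paper keeps $\del C+\Ball(\alpha)$ in one piece and, in the thick case, sandwiches it between the shrunk copy $(1-\beta)C'$ and an $h$-neighborhood of that copy with $h=4\beta K+\alpha$.
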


Having such a bound will be useful to us in two different contexts.  First, it plays an important role in the proof of correctness of our one-sided algorithm for testing convexity (see Section~\ref{sec:1sub}).  Second, as an easy consequence of the theorem, we get an algorithm which, for any $\tau>0$, constructs a $\tau$-cover of $\calC_\convex$ (this is Corollary~\ref{cor:epscover}, which we defer to later as its proof employs a ``gridding'' argument which we introduce in Section~\ref{sec:1sub}).  This cover construction algorithm plays an important role in our two-sided algorithm for testing convexity (see Section~\ref{sec:2sub}).

\subsection{Proof of Theorem~\ref{thm:surfacevolume}}

Let $C\subset \R^n$ be a bounded convex set that satisfies $\sup_{c\in C} \|c\| \leq K$ for some $K >1$.

The proof has two cases and uses Lemmas
  \ref{lem:noball}, \ref{lem:small}, and \ref{lem:large} to be proved later.

\medskip
\noindent {\bf Case I:  } $C$ contains no ball of radius $\rho := \sqrt{\alpha}/n^{3/8}$.  In this case we have
\begin{align*}
\Vol(\del C + \Ball(\alpha)) \leq \Vol(C + \Ball(\alpha))
& \leq 2(n\rho + \alpha) \tag*{(Lemma~\ref{lem:noball})}\\
& \leq 3\hspace{0.03cm}n^{5/8}\hspace{0.03cm}\sqrt{\alpha} \tag*{\red{(using $\alpha < n^{-3/4}$)}}\\
& < 20 \hspace{0.03cm}n^{5/8}\hspace{0.03cm}K\hspace{0.03cm}\sqrt{\alpha} \tag*{(using $K > 1$)}
\end{align*}

\noindent {\bf Case II:}  $C$ contains some ball of radius $\rho$. We let $z^*$ be the center of
  such a ball and let $D = \del C +$ $\Ball(\alpha).$
To upperbound $\Vol(D)$, we define a set that contains $D$ and then upperbound its volume.

To this end, we first shift $C$ to get $C'=C-z^*$ (so that the ball of radius $\rho$ is now centered
  at the origin).
By triangle inequality we have $\sup_{c\in C'} \|c\|\le 2K$.
Let $\beta = n^{3/8}\sqrt{\alpha} = \alpha/\rho$,
and observe that since $\alpha<n^{-3/4}$ we have $\beta < 1.$
Let $D'=D-z^*=\del C'+\Ball(\alpha)$. By Lemma~\ref{lem:small}, we have
$$C_0' := (1-\beta)C'=(1-\beta)(C-z^*)$$
 contains no point of $D'$,
 and then \red{by Lemma~\ref{lem:large} the set
 $C_1' := (C_0')_h$ with $h={4}\beta K + \alpha$ contains~all of $D'$.}\footnote{Recall that $(C_0')_h$
 is the set of all points that have distance at most $h$ to $C_0'$. Also note that
   the coefficient of $\beta K$ in our choice of $h$ is $4$ instead of $2$ since we have $\sup_{c\in C'}\|c\|\le 2K$ instead of $K$.}
As a result, $D'\subseteq C_1'\setminus C_0'$ and it suffices to upperbound
  $\Vol(z^*+C_1'\setminus C_0')$, which is at most $4hn^{1/4}$ by Theorem \ref{balllemma}
  (since $C_0'$ is convex).
Combining everything together, we have
\begin{align*}
\Vol(D)  \leq \Vol(z^*+ C_1' \setminus C_0') \le
(4 \beta K + \alpha)\hspace{0.03cm} (4n^{1/4} ) \leq
  20\hspace{0.03cm}n^{5/8}\hspace{0.03cm}K\hspace{0.03cm}\sqrt{\alpha}.\end{align*}
(again using $K>1$ and $\alpha < n^{-3/4}$ for the last inequality). \qed

\medskip

It remains to prove Lemmas~\ref{lem:noball}, \ref{lem:small}, and \ref{lem:large}.  We prove these lemmas in Appendix~\ref{ap:lemmas}.


\section{One-sided upper bound:  Proof of Theorem~\ref{thm:1sub}} \label{sec:1sub}

Recall Theorem~\ref{thm:1sub}:

\begin{reptheorem}{thm:1sub}
For any $\eps > 0$, there is a one-sided sample-based $\eps$-tester for convexity over
  $\normal^n$ which uses $(n/\eps)^{O(n)}$ samples.
\end{reptheorem}

In Section~\ref{sec:setup} we show that it suffices to test convex bodies contained in a large ball $B$ centered at the origin (rather than all of $\R^n$) and give some useful preliminaries.  \red{Section~\ref{sec:bound-vol} then builds~on Theorem~\ref{thm:surfacevolume} (the upper bound on the volume of the ``thickened boundary'' of any \blue{bounded} convex body) to give an upper bound, in the case that $\targetset$ is convex \blue{and contained in $B$}, on the total volume of certain ``boundary cubes'' (defined in Section~\ref{sec:setup}).} In Section~\ref{sec:alg} we present the one-sided testing algorithm and establish its correctness, thus proving Theorem~\ref{thm:1sub}.

\subsection{Setup} \label{sec:setup}

Let $n'$ be the following parameter (that depends on both $n$ and $\eps$):
$$
n':= \left(n + 4\sqrt{n\ln(4/\eps)}\right)^{1/2}.
$$
Let $\calC'_\convex$ denote the set of convex bodies in $\R^n$ that are contained in $\Ball(n')$, equivalently,
\[
\calC'_\convex = \big\{ C \cap \Ball(n'): C \in \calC_\convex\big\}.
\]

We prove the following claim that helps us focus on testing of $\calC'_\convex$ instead $\calC_\convex$.

\begin{claim} \label{claim:reduction}
Suppose that there is a one-sided sample-based $\eps$-testing algorithm $A'$ which, given any Lebesgue measurable target set $\targetset$ contained in $\Ball(n')$, uses $(n/\eps)^{O(n)}$ samples drawn from $\normal^n$ to test whether $\targetset \in \calC'_\convex$ versus $\targetset$ is $\eps$-far from $\calC'_\convex$.  Then this implies Theorem~\ref{thm:1sub}.
\end{claim}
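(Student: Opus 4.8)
The plan is to show that the target set $\targetset \subseteq \R^n$ can be ``truncated'' to the ball $\Ball(n')$ without significantly affecting either the yes-case or the no-case, so that an $\eps$-tester for $\calC'_\convex$ can be simulated to test $\calC_\convex$. Set $B = \Ball(n')$. The first observation is a Gaussian measure bound: by the definition of $n'$ and the chi-squared tail bound (Lemma~\ref{lem:johnstone}), a random $\bx \leftarrow \normal^n$ satisfies $\|\bx\|^2 > (n')^2 = n + 4\sqrt{n\ln(4/\eps)}$ with probability at most $\eps/4$ (applying the lemma with $t = 4\sqrt{\ln(4/\eps)/n}$, valid as long as $\eps$ is not too tiny; the small-$\eps$ regime can be handled separately or absorbed into constants). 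Hence $\Vol(\R^n \setminus B) \le \eps/4$.

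Next I would describe the simulation. Given sample access to an unknown measurable $\targetset \subseteq \R^n$, run $A'$ but feed it the ``restricted'' samples: when a labeled sample $(\bx, \targetset(\bx))$ arrives, if $\bx \in B$ pass $(\bx, \targetset(\bx))$ to $A'$, and if $\bx \notin B$ pass $(\bx, 0)$ (equivalently, we are running $A'$ on the target set $\targetset \cap B$, which is indeed contained in $B$). Output whatever $A'$ outputs. The sample complexity is unchanged, i.e.\ $(n/\eps)^{O(n)}$. For one-sidedness and correctness I would verify the two cases:

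\emph{Yes-case.} If $\targetset$ is convex, then $\targetset \cap B$ is convex and contained in $B$, so $\targetset \cap B \in \calC'_\convex$; since $A'$ is one-sided it accepts with probability $1$, and so does our simulation. This also shows the simulated tester is one-sided.

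\emph{No-case.} Suppose $\targetset$ is $\eps$-far from $\calC_\convex$; I want to conclude that $\targetset \cap B$ is $\eps'$-far from $\calC'_\convex$ for an $\eps'$ that $A'$ can handle. The key point is that for \emph{any} $C' \in \calC'_\convex$ we have (writing the convex set witnessing $C' = C \cap \Ball(n')$, though any convex superset works)
\[
\Vol(\targetset \bigtriangleup C') \ge \Vol\big((\targetset \bigtriangleup C') \cap B\big) = \Vol\big((\targetset \cap B) \bigtriangleup C'\big) \ge \Vol(\targetset \bigtriangleup C') - \Vol(\R^n \setminus B) \ge \eps - \eps/4,
\]
wait — more directly: $\targetset \bigtriangleup C'$ and $(\targetset \cap B)\bigtriangleup C'$ differ only on $\R^n \setminus B$ (since $C' \subseteq B$, outside $B$ the symmetric difference with $\targetset$ is just $\targetset \setminus B$ in the first case and empty... ) so I would instead argue: since $C'$ is convex, $\Vol(\targetset \bigtriangleup C') \ge \eps$ by $\eps$-farness of $\targetset$ from $\calC_\convex$; and $\Vol\big((\targetset\cap B)\bigtriangleup C'\big) \ge \Vol(\targetset \bigtriangleup C') - \Vol(\targetset \setminus B) \ge \eps - \eps/4 = 3\eps/4$. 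Thus $\targetset \cap B$ is $(3\eps/4)$-far from $\calC'_\convex$, and in particular $\eps$-far is more than we need — so running $A'$ with parameter $3\eps/4$ (which only changes the sample bound by a constant factor in the exponent, still $(n/\eps)^{O(n)}$) makes it reject with probability $\ge 2/3$.

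I expect the only real subtlety to be bookkeeping the relationship between the farness parameter $\eps$ used to query $\targetset$ and the parameter handed to $A'$ (here a factor of $3/4$), together with confirming the chi-squared tail bound is applied in a valid range of $t$; both are routine. Everything else is a direct simulation argument. Since a one-sided $\eps$-tester for $\calC'_\convex$ that uses $(n/\eps)^{O(n)}$ samples yields, via this simulation, a one-sided $\eps$-tester for $\calC_\convex$ with the same asymptotic sample complexity, Claim~\ref{claim:reduction} follows, and it then suffices to prove Theorem~\ref{thm:1sub} for target sets contained in $\Ball(n')$.
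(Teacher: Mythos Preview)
Your proposal is correct and follows essentially the same approach as the paper: truncate $\targetset$ to $\targetset\cap\Ball(n')$ by relabeling samples outside the ball as $0$, use the chi-squared tail bound to show $\Vol(\R^n\setminus\Ball(n'))\le\eps/4$, and conclude that $\targetset\cap\Ball(n')$ is $(3\eps/4)$-far from $\calC'_\convex$ whenever $\targetset$ is $\eps$-far from $\calC_\convex$. The only cosmetic difference is that the paper hands $A'$ the parameter $\eps/2$ rather than $3\eps/4$, which is immaterial.
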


\begin{proof}
Given $A'$ for $\calC'_\convex$, we consider an algorithm $A$ which works as follows to test whether~an arbitrary Lebesgue measurable subset $\targetset$ of $\R^n$ is convex or $\eps$-far from $\calC_\convex$: algorithm $A$ runs $A'$ with parameter $\eps/2$, but with the following modification: each time $A'$ receives from the oracle a labeled sample $(x,b)$ with $x \notin \Ball(n')$, it replaces the label $b$ with $0$ and gives the modified labeled sample to $A'$.   When the run of $A'$ is complete $A$ returns the output of $A'$.

If $\targetset \subseteq \R^n$ is the target set, then it is clear that the above modification results in running $A'$ on $\targetset \cap \Ball(n')$.  If $\targetset$ is convex, then $\targetset \cap \Ball(n')$ is also convex. As $A'$ commits only one-sided error, it will always output ``accept,'' and hence so will $A$. On the other hand, suppose that  $\targetset$ is $\eps$-far from $\calC_\convex$. We claim that
$\Vol(\Ball(n')) \geq 1 - \eps/4$ (this will be shown below); given this claim, it must be the case that $\targetset \cap \Ball(n')$ is at least $(3\eps/4)$-far from $\calC_\convex$ and at least $(3\eps/4)$-far from $\calC'_\convex$ as well.  Consequently $A'$ will output ``reject'' with probability at least $2/3$, and hence so will $A$.

To bound
$\Vol(\Ball(n'))$, observe that it is the probability that an $\bx \leftarrow\normal^n$ has
$$\|\bx\|^2\leq n + 4\sqrt{n\ln(4/\eps)}.$$
It follows from Lemma~\ref{lem:johnstone} that the
probability is at least $1-\eps/4$ as claimed.
\end{proof}

Given Claim~\ref{claim:reduction}, it suffices to prove the following slight variant of Theorem~\ref{thm:1sub}:

\begin{theorem} \label{thm:1submodif}
There is a one-sided sample-based $\eps$-testing algorithm $A'$ which, given any Lebesgue measurable target set $\targetset$ contained in $\Ball(n')$, uses $(n/\eps)^{O(n)}$ samples from $\normal^n$ to test whether $\targetset \in \calC'_\convex$ versus $\targetset$ is $\eps$-far from $\calC'_\convex$.
\end{theorem}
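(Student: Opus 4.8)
\textbf{Proof plan for Theorem~\ref{thm:1submodif}.}

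The plan is to design algorithm $A'$ around a ``gridding'' of $\Ball(n')$ into small axis-aligned cubes of side length $s$ (with $s$ a suitable $\mathrm{poly}(\eps/n)$ quantity), so that there are only $(n/\eps)^{O(n)}$ cubes meeting $\Ball(n')$. The tester first draws $m=(n/\eps)^{O(n)}$ labeled samples $T$. The key geometric idea is to look at the set of cubes that contain at least one observed positive point; call the union of these cubes $P$, and let $Q=\Conv(P)$. If $\targetset$ is convex and contained in $\Ball(n')$, then every negative point must lie outside $\targetset$, and (since $Q$ is morally a slightly inflated convex hull of $\targetset \cap T^+$, hence essentially contained in $\targetset$'s $O(s\sqrt n)$-neighborhood) the negative points we see should almost never land ``deep inside'' $Q$. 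Concretely, $A'$ rejects if it ever sees a negative sample point lying in $\Conv(T^+)$; since $\targetset$ convex forces $\Conv(\targetset)=\targetset$ and hence $\Conv(T^+)\subseteq \targetset$, this rule is one-sided. The remaining work is to show this rule (together with a ``too much boundary volume'' rejection rule, described next) actually catches sets that are $\eps$-far.

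The second rejection rule uses the structural result. Group the cubes into (i) \emph{positive cubes} (all of the cube, up to measure zero, should be inside $\targetset$), (ii) \emph{negative cubes}, and (iii) \emph{boundary cubes} — those that must straddle $\del\targetset$. If $\targetset\in\calC'_\convex$, then Section~\ref{sec:bound-vol} (building on Theorem~\ref{thm:surfacevolume}) shows the total Gaussian volume of boundary cubes is small, say at most $\eps/10$, once $s$ is chosen appropriately: each boundary cube lies within distance $s\sqrt n$ of $\del\targetset$, so the union of boundary cubes is contained in $\del\targetset+\Ball(O(s\sqrt n))$, whose volume Theorem~\ref{thm:surfacevolume} bounds by $O(n^{5/8}\cdot n' \cdot \sqrt{s\sqrt n})$, which is $\le \eps/10$ for $s$ polynomially small in $\eps/n$. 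The algorithm estimates, from its samples, the total volume of cubes that \emph{appear} to be boundary cubes (cubes in which it has seen both a positive and a negative point, or more carefully: cubes not yet ``resolved'' to be purely positive or purely negative), and rejects if this estimate exceeds, say, $\eps/5$. Combined with a Chernoff/union bound over the $(n/\eps)^{O(n)}$ cubes, with $m=(n/\eps)^{O(n)}$ samples we can assume every cube with Gaussian volume $\ge \eps/(100\cdot \#\text{cubes})$ is hit many times, so our empirical picture of which cubes are positive/negative/boundary is accurate up to total volume $\eps/100$.

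The heart of the argument is completeness-for-far-sets (soundness of the tester): if neither rejection rule fires with good probability, then $\targetset$ is $\eps$-close to convex. Conditioned on the good event above, the ``boundary volume'' rule not firing means the cubes that look mixed have total volume $\le \eps/5$, so outside a set $E$ of volume $O(\eps/5)$ the target agrees with a union $U$ of ``resolved'' positive cubes. The ``negative-point-in-convex-hull'' rule not firing means that with high probability no negative sample landed in $\Conv(T^+)$, which — again using that cubes with non-negligible volume are densely sampled — forces $\Conv(U\text{'s positive cubes})$ to contain essentially no volume of negative cubes; hence $C:=\Conv(U)$ is a convex body with $\Vol(\targetset\bigtriangleup C)\le \eps$. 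So if $\targetset$ is $\eps$-far from $\calC'_\convex$, one of the rules fires with probability $\ge 2/3$. I expect the main obstacle to be exactly this last step: controlling the discrepancy between the true target and the convex hull of the observed positive cubes, i.e.\ ruling out the possibility that $\targetset$ ``looks locally convex on the sampled grid'' but has a large-volume non-convex defect hidden either in lightly-sampled cubes or in the gap between $\Conv(T^+)$ and $\Conv(\targetset^+)$ — handling this cleanly is what forces the cube side length $s$ and sample count $m$ to their stated values, and requires carefully coupling the grid resolution to the structural volume bound of Theorem~\ref{thm:surfacevolume}.
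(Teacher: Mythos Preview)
Your proposal is essentially the paper's approach: grid $\Ball(n')$ into $(n/\eps)^{O(n)}$ cubes, reject if the empirically observed boundary-cube volume is too large (certified non-convex via Theorem~\ref{thm:surfacevolume}), and reject if a negative point lands in $\Conv(T^+)$. The soundness strategy also matches in spirit: show that if neither rule fires then $\Conv(T^+)$ is itself a convex set $\eps$-close to $S$.

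Two refinements the paper makes are worth flagging, since they resolve exactly the obstacle you identify in your last paragraph. First, the paper defines a \emph{boundary cube} as a positive cube that is \emph{adjacent to} some cube containing a negative sample (not merely a cube containing both labels). This adjacency-based definition is what enables the key combinatorial lemma: every \emph{internal} cube (positive, with all $3^n$ neighboring cubes free of negative samples) is entirely contained in $\Conv(T^+)$, proved by an easy induction on $n$. With your ``both labels in one cube'' definition you cannot conclude that resolved-positive cubes sit inside $\Conv(T^+)$, and your step ``forces $\Conv(U)$ to contain essentially no volume of negative cubes'' conflates $\Conv(T^+)$ (what the rejection rule actually tests) with $\Conv(U)$ (which may be strictly larger). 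The paper sidesteps this by taking $I=\Conv(T^+)$ itself as the witness convex set; the internal-cube lemma then gives $\Vol(S\setminus I)\le \eps/2$ directly, whence $\Vol(I\setminus S)\ge \eps/2$. Second, for the convex-hull check the paper draws a single \emph{fresh} sample $\by$ (independent of $T$) and repeats the whole subroutine $O(1/\eps)$ times; this makes the per-round rejection probability exactly $\Vol(I\setminus S)$ and avoids the circular dependence between $T^+$ and the negative points being tested. With these two fixes your plan becomes the paper's proof.
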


In the rest of this section we prove Theorem~\ref{thm:1submodif}. We start with some terminology and concepts that we use in the description and analysis of our algorithm.  \red{Some of the notions that we introduce below, such as the notions of ``boundary'' cubes and ``internal'' cubes, are inspired by related notions that arise in earlier works such as \cite{Kern,Raskhodnikova:03}.}

Fix $\ell := \eps^3/n^4$ in the rest of the section, and let $\Cube_0$ denote the following set
\[
\Cube_0 :=  [-\ell/2,\ell/2)^{n} \subset \R^n
\]
of side length $\ell$ that is centered at the origin.  We say that a \emph{cube} is a subset of $\R^n$ of the form $\Cube_0 + \ell \cdot (i_1,\dots,i_n)$, where each $i_j \in \Z$, which contains at least one point of $\Ball(2n').$
We use $\CubeSet$ to denote the set of all such cubes.

  It is easy to see that
\[
\Ball(n') \subset \text{union of all cubes in $\CubeSet$} \subset \Ball(2n' +  \ell \sqrt{n}) \subset \Ball(3n').
\]

Fix an $\targetset \subseteq \Ball(n')$ as the target set being tested for membership in $\calC'_\convex.$  Additionally fix a finite set $T=\{(x^1,S(x^1)),\dots,(x^M,S(x^M))\}$ of labeled samples according to $\targetset$, for some~positive integer $M$.  (The set $T$ will correspond to the set of labeled samples that the testing algorithm receives.)  We classify cubes in the $\CubeSet$ based on $T$ in the following way:

\begin{figure}
\centering
\includegraphics[width=7cm]{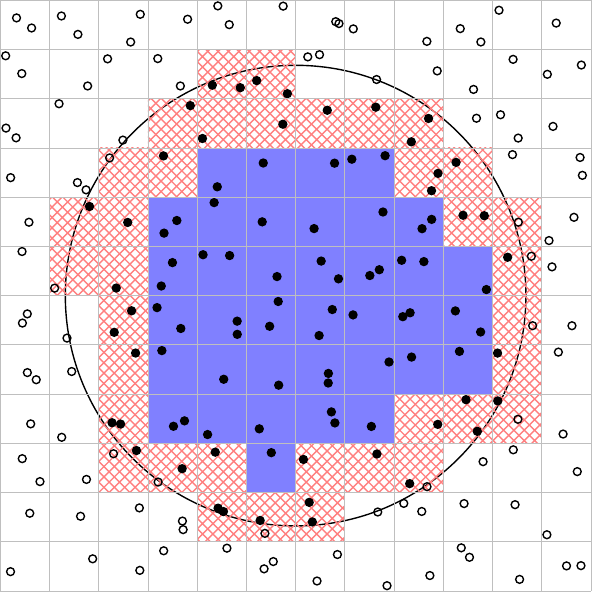}\vspace{0.15cm}
\caption{A 2D example of the different types of cubes induced by a set of labeled samples.~The target set $S$ is a disk, and the solid and hollow dots are positive and negative samples, respectively. The hollow, hatched, and shaded boxes are external, boundary, and internal cubes, respectively. }
\label{fig-boxes}
\end{figure}

\begin{flushleft}
\begin{itemize}
\item A cube \(\Cube\) is said to be an \emph{external cube} if $\Cube \cap T^+ = \emptyset$ (i.e., no positive point of $T$ lies in $\Cube$). We let $EC$ denote the union of all the external cubes.

\item Any cube which is not an external cube (equivalently, any cube that contains at least one positive point of $T$) is said to be a \emph{positive cube}.

\item We say that two cubes $\Cube,\Cube'$ are \emph{adjacent} if for any $\kappa>0$ there exist $x \in \Cube$ and $y \in \Cube'$ that have Euclidean distance at most $\kappa$ (in other words, two cubes are adjacent if their closure ``touch anywhere, even only at a vertex;'' note that each cube is adjacent to itself).  If a cube is both (i) a positive cube and (ii) is adjacent to a cube (including itself) that contains at least one negative point of $T$, then we call it a \emph{boundary cube}.  We use $BC$ to denote the union of all boundary cubes.

\item We say that a positive cube which is not a boundary cube is an \emph{internal cube}.  (Equivalently, a cube is internal if and only if \red{it contains at least one positive point} and all the points in $T$ that are contained in any of its adjacent cubes, including itself, are positive.) We use $IC$ to denote the union of all internal cubes.
\end{itemize}
\end{flushleft}
We note that since each  cube is either external, internal, or boundary, the set $\Ball(n')$ is contained in the (disjoint) union of $EC,BC$ and $IC.$ Figure~\ref{fig-boxes} illustrates the different types of cubes.

We will use the following useful property of internal cubes:

\begin{lemma}\label{lem:i}  Suppose a finite set of labeled samples $T$ is such that every cube in $\CubeSet$ contains at least one point of $T$.  Then every internal cube is contained in $\Conv(T^+).$
\end{lemma}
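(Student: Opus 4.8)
The plan is to show that any internal cube $\Cube$ lies in $\Conv(T^+)$ by exhibiting, for an arbitrary point $x \in \Cube$, a collection of positive points in $T$ whose convex hull contains $x$. The natural candidates are positive points lying in $\Cube$ itself together with positive points in the cubes immediately surrounding $\Cube$; the hypothesis that every cube in $\CubeSet$ contains at least one point of $T$, combined with the definition of an internal cube (all points of $T$ in $\Cube$ and in every cube adjacent to $\Cube$ are positive), guarantees that all these points are in $T^+$. So the real content is purely geometric: the union of a cube and its adjacent cubes forms a $3\ell \times \cdots \times 3\ell$ "super-cube" centered appropriately, and I want to argue that a point of $\Cube$ (the central sub-cube) can always be written as a convex combination of points drawn one from each sub-cube.

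First I would fix $x = (x_1,\dots,x_n) \in \Cube$ and, for each coordinate $j$, note that the three cubes obtained by shifting $\Cube$ by $-\ell$, $0$, and $+\ell$ in the $j$th coordinate are all adjacent to $\Cube$; iterating over all coordinates, every cube of the form $\Cube + \ell\cdot(i_1,\dots,i_n)$ with each $i_j \in \{-1,0,1\}$ is adjacent to $\Cube$ (and is in $\CubeSet$, since it meets $\Ball(2n')$ — here one uses $\Ball(n') \subset$ union of cubes and the fact that $\Cube \subset \Ball(3n')$, or simply that these neighbor cubes still intersect $\Ball(2n')$ because $\Cube$ does and the shift is small; this is a routine check). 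By hypothesis each such cube contains at least one point of $T$, and by internality every such point is positive, so from each of the $3^n$ neighbor cubes we may select a point $p_{(i_1,\dots,i_n)} \in T^+$.

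Next I would build the convex combination. Work coordinate by coordinate: for the $j$th coordinate, $x_j \in [c_j - \ell/2, c_j + \ell/2)$ where $c_j$ is the $j$th coordinate of the center of $\Cube$. A point selected from the cube shifted by $i_j \ell$ in direction $j$ has $j$th coordinate in $[c_j + i_j\ell - \ell/2, c_j + i_j\ell + \ell/2)$. The key one-dimensional fact is that $x_j$ lies in the convex hull (interval) $[c_j - \ell, c_j + \ell)$ of the three shifted intervals' union, hence $x_j$ is a convex combination of the $j$th coordinates of a point from the $i_j=-1$ cube, the $i_j=0$ cube, and the $i_j=+1$ cube. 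The cleanest way to package this for all coordinates simultaneously is to assign a weight $\lambda^{(j)}_{i_j} \ge 0$ with $\sum_{i_j} \lambda^{(j)}_{i_j} = 1$ achieving $x_j$ as such a combination in coordinate $j$, then set the weight on $p_{(i_1,\dots,i_n)}$ to be the product $\prod_{j} \lambda^{(j)}_{i_j}$ — a standard tensor-product of simplex distributions. A small subtlety: the product weighting reproduces $x$ exactly only if the point chosen from each neighbor cube has, in coordinate $j$, a value depending only on $i_j$, which in general is false. The standard fix is to instead argue one coordinate at a time: rewrite $x$ as a convex combination of the two or three points obtained by replacing only the last coordinate's cube-layer, reducing to an $(n-1)$-dimensional version of the same claim, and induct. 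Concretely, I would induct on $n$: project onto the first $n-1$ coordinates, apply the inductive hypothesis inside each of the three "layers" $i_n \in \{-1,0,1\}$ to get that $(x_1,\dots,x_{n-1})$ is in the convex hull of positive points within that layer with the right $n$th coordinate range, then combine across layers using the one-dimensional argument in coordinate $n$.

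I expect the main obstacle to be exactly this bookkeeping — getting a single honest convex combination of genuine sample points (not of idealized grid points) — and the induction is the right tool to handle it cleanly; the base case $n=1$ is the one-dimensional interval fact above, and the inductive step only invokes (i) adjacency of the relevant layer-cubes to $\Cube$, which I would verify once, and (ii) the hypothesis that each cube in $\CubeSet$ is nonempty in $T$. Everything else is elementary. Once $x \in \Conv(T^+)$ is established for every $x \in \Cube$ and every internal cube $\Cube$, the lemma follows immediately.
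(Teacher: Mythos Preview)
Your proposal is correct and takes essentially the same approach as the paper: reduce to a geometric claim about a single cube and its adjacent cubes, and prove that claim by induction on the dimension, handling one coordinate at a time via projection. The paper's version is marginally cleaner in that it uses only the top and bottom layers ($i_n = +1$ and $i_n = -1$) rather than all three, since a point from each of those two layers already sandwiches $x_n$; your three-layer variant works just as well.
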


The lemma is a direct consequence of the following claim by setting $H=T^+$:
\begin{claim} \label{claim:internal}
Let $H \subseteq \R^n$ be any set that contains at least one point in each cube that is adjacent to $\Cube_0$. Then $\Cube_0$ is contained in $\Conv(H)$.
\end{claim}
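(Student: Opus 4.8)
\textbf{Proof plan for Claim~\ref{claim:internal}.}
The plan is to show that the center of $\Cube_0$—the origin—lies in $\Conv(H)$, and then argue that every point of $\Cube_0$ does too. Actually, since $\Cube_0$ has side length $\ell$ and is centered at the origin, its vertices are the points $(\pm\ell/2,\dots,\pm\ell/2)$. The key geometric observation is that each cube adjacent to $\Cube_0$ is a translate $\Cube_0 + \ell\cdot(i_1,\dots,i_n)$ with every $i_j\in\{-1,0,1\}$. I would first reduce to showing the origin $\mathbf{0}$ is in $\Conv(H)$: if $\mathbf{0}\in\Conv(H)$, then because the argument is translation-invariant in a suitable sense—more carefully, I will instead directly show every point $p\in\Cube_0$ lies in $\Conv(H)$, by a pairing/averaging argument.

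The core idea: for each of the $2^n$ ``orthant directions'' $\sigma=(\sigma_1,\dots,\sigma_n)\in\{-1,+1\}^n$, consider the cube $\Cube_\sigma := \Cube_0 + \ell\cdot(\sigma_1,\dots,\sigma_n)$, which is adjacent to $\Cube_0$ (it touches $\Cube_0$ at the single vertex $(\ell/2\cdot\sigma_1,\dots,\ell/2\cdot\sigma_n)$). By hypothesis $H$ contains a point $h_\sigma\in\Cube_\sigma$. Now fix any target point $p=(p_1,\dots,p_n)\in\Cube_0$, so $p_j\in[-\ell/2,\ell/2)$ for each $j$. I claim $p$ is a convex combination of the $2^n$ points $\{h_\sigma\}$. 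To see this, write $h_\sigma = \ell\cdot(\sigma_1,\dots,\sigma_n) + u_\sigma$ where $u_\sigma\in\Cube_0$, i.e. each coordinate of $u_\sigma$ lies in $[-\ell/2,\ell/2)$. The $j$-th coordinate of $h_\sigma$ is thus $\ell\sigma_j + (u_\sigma)_j$, which lies in $[\ell\sigma_j-\ell/2,\ell\sigma_j+\ell/2)$, i.e. in $[\ell/2,3\ell/2)$ when $\sigma_j=+1$ and in $[-3\ell/2,-\ell/2)$ when $\sigma_j=-1$. The crucial point is that these two intervals are disjoint and lie on opposite sides of $p_j$ (since $|p_j|<\ell/2$, $p_j$ is strictly less than $\ell/2$ and strictly greater than $-\ell/2$, hence strictly between the two intervals). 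So in each coordinate $j$, the $h_\sigma$'s with $\sigma_j=+1$ all have $j$-th coordinate $>p_j$, and those with $\sigma_j=-1$ all have $j$-th coordinate $<p_j$.

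With this separation in hand, I would build the convex combination coordinate by coordinate, or more slickly via a product/tensor construction: for each $j$, pick weights so that a ``$+$ side'' representative and a ``$-$ side'' representative average to exactly $p_j$ in coordinate $j$; then take the appropriate product distribution over $\sigma\in\{-1,+1\}^n$. Concretely: because in coordinate $j$ there is at least one $h_\sigma$ strictly above $p_j$ and at least one strictly below, and all $h_\sigma$ have bounded coordinates, one can find a probability distribution $\mu$ on $\{-1,+1\}^n$ (of product form, $\mu(\sigma)=\prod_j \mu_j(\sigma_j)$ won't quite work because the $h_\sigma$'s coordinates aren't deterministic functions of $\sigma_j$ alone—so instead I would use a greedy/Carathéodory argument in each coordinate independently). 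The cleanest route: apply the one-dimensional fact that $p_j$ is a convex combination of $\{(h_\sigma)_j : \sigma\}$ using only the extreme pair, then combine across coordinates using that $\Conv(H)$ is convex and contains all the $h_\sigma$; the standard argument is induction on $n$, halving the coordinate count or peeling off one coordinate at a time, using that a point whose $n$-th coordinate is a convex combination of two achievable values, with the first $n-1$ coordinates handled recursively, stays in the hull. The main obstacle is precisely this bookkeeping: making the per-coordinate convex combinations compatible into a single global convex combination over the $h_\sigma$. I expect induction on the dimension $n$ to be the right framing—split $\{-1,+1\}^n$ on the last coordinate $\sigma_n$, use the inductive hypothesis on the two $(n-1)$-dimensional faces to get $p$'s projection in each half-hull, then interpolate in the $n$-th coordinate—so the write-up should proceed by that induction rather than by an explicit global weight formula.
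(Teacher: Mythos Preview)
Your final plan---induction on $n$, splitting on the last coordinate $\sigma_n$, projecting each half onto the first $n-1$ coordinates, applying the inductive hypothesis to find points $y\in\Conv(H_+)$ and $y'\in\Conv(H_-)$ matching $p$ in the first $n-1$ coordinates, then interpolating in the $n$-th coordinate---is correct and is exactly the argument the paper gives. The only difference is that you restrict attention to the $2^n$ \emph{corner} cubes $\Cube_\sigma$, $\sigma\in\{-1,+1\}^n$, whereas the paper keeps all points of $H$ lying in the slabs $\{x_n\ge \ell/2\}$ and $\{x_n\le -\ell/2\}$ (i.e.\ all cubes with last offset $\pm1$); your restriction still satisfies the inductive hypothesis after projection, so this is a harmless (indeed slight) sharpening, not a genuinely different route. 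The detour through a product distribution is unnecessary and can be dropped from the write-up; also, your claim that $|p_j|<\ell/2$ strictly is off by a boundary case ($p_j=-\ell/2$ is allowed), but the separation you need still holds since the $\sigma_j=-1$ points have $j$-th coordinate strictly below $-\ell/2$.
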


\begin{proof}
We prove the claim by induction on the dimension $n$. When $n=1$ the claim is trivial since $\Cube_0$ is simply the interval $[-\ell/2,\ell/2)$ and by assumption, there is at least one point of $H$ in $[-3\ell/2,-\ell/2)$ and at least one point of $H$ in $[\ell/2,3\ell/2).$

For $n > 1$, let $P = \{p \in H \mid p_n \geq {\ell}/{2}\}$ and $P' = \{p' \in H \mid p_n' \leq  {-\ell}/{2}\}$ be two subsets of $H$. Intuitively, the convex hulls of $P$ and $P'$ ``cover'' $\Cube_0$ on both sides (by induction), so the convex hull of their union will contain the whole $\Cube_0$.  More formally, let $x$ be any point in $\Cube_0$. By projecting $P,P'$ and $x$ onto the first $n-1$ dimensions and using the inductive hypothesis\footnote{Observe that after projecting out the last coordinate, the assumed property of $H$ (that it  has at least one sample point in each adjacent cube) will still hold in $n-1$ dimensions.}, we can find points $y \in \Conv(P)$ and $y' \in \Conv(P')$ such that $y_i = y_i' = x_i$ for all $i\in [n-1]$. Since we~have $p_n \geq 1/{2}$ and $p_n' \leq -1/{2}$ for all $p \in P$ and $p' \in P'$, respectively, it follows directly that $y_n \geq {1}/{2}$ and $y_n' \leq -{1}/{2}$. As $x\in \Cube_0$, $x$ is on the line segment between $y$ and $y'$ and thus is in the convex hull of $H$. Hence all of $\Cube_0$ is contained in $\Conv(H)$.
\end{proof}

\subsection{Bounding the total volume of boundary cubes} \label{sec:bound-vol}

Before presenting our algorithm we record the following useful corollary of Theorem~\ref{thm:surfacevolume}, which allows the one-sided tester to reject bodies as non-convex if it detects too much volume in boundary cubes.
\red{(Note that we do not assume below that $T$ satisfies the condition of Lemma \ref{lem:i}, i.e.,
  that $T$ has at least one point in each cube in $\CubeSet$, though this will be
  the case when we use it later.)}

\begin{corollary}\label{cor:smallconvbdry}

Let $\targetset$ be a convex set in $\calC'_\convex$ and
  $T$ be any finite set of labeled samples according to $\targetset$,
  which defines sets $EC,IC$ and $BC$ as discussed earlier.  Then we have
  $$\Vol(BC) \leq 20\hspace{0.03cm}n^{5/8}\hspace{0.03cm}n'\hspace{0.03cm}\sqrt{2\ell\sqrt{n}} =\red{o(\eps)}.$$

\end{corollary}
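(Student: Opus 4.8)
The plan is to show that the union $BC$ of boundary cubes is contained in the $\alpha$-thickened boundary $\partial \targetset + \Ball(\alpha)$ of $\targetset$ for an appropriate choice of $\alpha$, and then apply Theorem~\ref{thm:surfacevolume} with the bound $K = n'$ (valid since $\targetset \in \calC'_\convex$ means $\targetset \subseteq \Ball(n')$, and $n' > 1$). The key geometric observation is that every boundary cube, by definition, is a positive cube (so it contains a point of $\targetset$, hence a point of $\targetset$) that is adjacent to a cube containing a negative point (a point not in $\targetset$). Since adjacent cubes have diameter at most $2\ell\sqrt{n}$ combined (each cube has diameter $\ell\sqrt{n}$, and adjacent cubes share at least a vertex), every point of a boundary cube is within Euclidean distance at most $2\ell\sqrt{n}$ of both a point inside $\targetset$ and a point outside $\targetset$. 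By convexity of $\targetset$, the line segment joining such an inside point to such an outside point must cross $\partial \targetset$, so every point of a boundary cube is within distance $2\ell\sqrt{n}$ of $\partial \targetset$. Hence $BC \subseteq \partial \targetset + \Ball(\alpha)$ with $\alpha := 2\ell\sqrt{n}$.

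The second step is the arithmetic: we need $\alpha = 2\ell\sqrt{n} < n^{-3/4}$ so that Theorem~\ref{thm:surfacevolume} applies. Since $\ell = \eps^3/n^4$, we have $\alpha = 2\eps^3 n^{-7/2}$, which is comfortably below $n^{-3/4}$ (indeed it is $O(n^{-7/2})$, assuming $\eps$ is at most a constant, which we may). Then Theorem~\ref{thm:surfacevolume} gives
$$
\Vol(BC) \le \Vol\big(\partial \targetset + \Ball(\alpha)\big) \le 20\, n^{5/8}\, K\, \sqrt{\alpha} = 20\, n^{5/8}\, n'\, \sqrt{2\ell\sqrt{n}},
$$
which is exactly the claimed expression. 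For the final ``$= o(\eps)$'' claim, substitute $\ell = \eps^3/n^4$ and $n' = \Theta(n^{1/2})$ (up to the mild $\eps$-dependence inside $n'$): the right-hand side becomes $20\, n^{5/8} \cdot \Theta(n^{1/2}) \cdot \sqrt{2\eps^3 n^{-7/2}} = \Theta(\eps^{3/2}) \cdot n^{5/8 + 1/2 - 7/4} \cdot \Theta(1)$, and the exponent of $n$ is $5/8 + 4/8 - 14/8 = -5/8 < 0$, so the whole quantity is $O(\eps^{3/2} n^{-5/8}) = o(\eps)$ (in fact it tends to $0$ as $n \to \infty$ for any fixed $\eps$, and more than suffices).

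I expect the main obstacle — though it is minor — to be the geometric containment argument in the first step: one must be careful that ``$\Cube$ contains a positive point'' gives a point genuinely in the open interior sense needed, and that ``adjacent to a cube containing a negative point'' places a non-$\targetset$ point close enough. The cleanest way around this is to note that a positive point $x^+ \in \targetset$ lies in the boundary cube $\Cube$ and a negative point $x^- \notin \targetset$ lies in an adjacent cube $\Cube'$; every point $z \in \Cube$ satisfies $\|z - x^+\| \le \mathrm{diam}(\Cube) = \ell\sqrt{n}$ and, since $\Cube, \Cube'$ are adjacent, $\|z - x^-\| \le \mathrm{diam}(\Cube) + \mathrm{diam}(\Cube') = 2\ell\sqrt{n}$. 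The segment $[x^+, x^-]$ meets $\partial \targetset$ at some point $w$ (a standard consequence of $\targetset$ convex with $x^+ \in \targetset$, $x^- \notin \targetset$, and measurability is not even needed here — only that $\targetset$ is convex so its closure/interior behave well; if one wants to be fully rigorous, work with the closed convex hull of $\targetset$ and note the boundary of a convex set and of its closure differ by a null set, which is irrelevant for the volume bound). Then $\|z - w\| \le \max(\|z - x^+\|, \|z - x^-\|) \le 2\ell\sqrt{n} = \alpha$, giving $z \in \partial\targetset + \Ball(\alpha)$. Everything else is a direct substitution into Theorem~\ref{thm:surfacevolume}.
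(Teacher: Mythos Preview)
Your proof is correct and follows essentially the same route as the paper: both argue that a boundary cube $\Cube$ contains a positive point $t$ and is adjacent to a cube containing a negative point $t'$, find a boundary point $t^*\in\partial S$ on the segment $[t,t']$, conclude $\Cube\subseteq t^*+\Ball(2\ell\sqrt{n})$ and hence $BC\subseteq\partial S+\Ball(2\ell\sqrt{n})$, and then apply Theorem~\ref{thm:surfacevolume} with $K=n'$ and $\alpha=2\ell\sqrt{n}$. Your write-up is slightly more explicit about why $\|z-w\|\le\max(\|z-x^+\|,\|z-x^-\|)$ and about the final $o(\eps)$ arithmetic, but there is no substantive difference.
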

\begin{proof}

Let $\Cube$ be a boundary cube. Then by definition, there is a positive point of $T$ (call~it~$t$) in $\Cube$, and there is a $\Cube'$ adjacent to $\Cube$ that contains a negative point of $T$ (call it $t'$).~It follows that there must be a boundary point of $\del S$ (call it $t^*$) in the segment between $t$ and $t'$, and we have $\Cube \in t^* + \Ball(2 \ell \sqrt{n}).$  It follows that $BC \subseteq \del S + \Ball(2 \ell \sqrt{n})$, and hence
$$\Vol(BC) \leq \Vol\big( \del S + \Ball(2 \ell \sqrt{n})\big)\le 20 \hspace{0.03cm}n^{5/8}\hspace{0.03cm} n'  \sqrt{2\ell\sqrt{n}} =\red{o(\eps)}$$ by Theorem~\ref{thm:surfacevolume} (and using $\ell\sqrt{n}\ll n^{-3/4}$ by our choice of $\ell=\eps^3/n^4$).

\end{proof}

\subsection{The one-sided testing algorithm} \label{sec:alg}

Now we describe and analyze
  the one-sided testing algorithm $A'$ mentioned in Theorem~\ref{thm:1submodif}.
Algorithm $A'$ works by performing $O(1/\eps)$ independent runs of the algorithm $A^*$,
  which we describe in Figure \ref{fig:main}.  If any of the runs of $A^*$ output ``reject'' then algorithm $A'$ outputs ``reject,'' and otherwise it outputs ``accept.''

\begin{figure}[t!]
\begin{framed}\vspace{0.1cm}
\noindent {\bf Algorithm $A^*$:} Given access to independent draws $(\bx,S(\bx))$ where $\bx \leftarrow \normal^n$ and the\\ target set $S$ is a Lebesgue measurable set that \red{is contained in $\Ball(n')$}.
\begin{flushleft}\begin{enumerate}

\item Draw a set $\bT$ of $s:= (n/\eps)^{O(n)}$ labeled samples $(\bx,\targetset(\bx))$, where each $\bx \leftarrow \normal^n$.\vspace{-0.05cm}

 \item If any cube does not contain a point of $\bT$, then halt and output ``accept.''\vspace{-0.05cm}

 \item If $\Vol(BC)\ge\eps/4$ (the volume of the union of boundary cubes),
  halt and output ``reject.''\vspace{-0.05cm}

 \item Define $\bI \subseteq \R^n$ to be $\Conv(\bT^+)$, the convex hull of all positive points in $\bT$.\vspace{-0.05cm}

 \item Draw a single fresh labeled sample $(\by,S(\by))$, where $\by\leftarrow \normal^n$.   If $\by \in \bI$ but $S(\by)=0$ then halt and output ``reject.'' Otherwise, halt and output ``accept.''\vspace{-0.16cm}
\end{enumerate}\end{flushleft}
\end{framed}\vspace{-0.25cm}\caption{Description of the algorithm $A^*$}\label{fig:main}\end{figure}

In words, Algorithm~$A^*$ works as follows:  first, in Step~1 it draws enough samples so that (with very high probability) it will receive at least one sample in each cube (if the low-probability event that this does not occur takes place, then the algorithm outputs ``accept'' since it can only reject if it is impossible for $\targetset$ to be convex).  If the region ``close to the boundary'' of $\targetset$ (as measured by $\Vol(BC)$ in Step~3) is too large, then the set cannot be convex (by Corollary \ref{cor:smallconvbdry}) and the algorithm rejects.  Finally, the algorithm checks a freshly drawn point; if this point is in the convex hull of the positive samples but is labeled negative, then the set cannot be convex and the algorithm rejects. Otherwise, the algorithm accepts.

To establish correctness and prove Theorem~\ref{thm:1submodif} we must show that (i) algorithm $A^*$ never rejects if the target set $\targetset$ is a Lebesgue measurable set that belongs to $\calC'_\convex$, and (ii) if $\targetset$ is $\eps$-far from $\calC'_\convex$ then algorithm $A^*$ rejects with probability at least $\Omega(\eps).$
Part (i) is trivial as $A^*$ only rejects if either
(a) $\Vol(BC) \ge \eps/4$ or (b) step~5 identifies a negative point in the convex hull of the positive points in $\bT$.
For both cases we conclude (using Corollary \ref{cor:smallconvbdry} for (a))
  that $S\notin \calC'_\convex$.

For (ii) suppose that $\targetset$ is $\eps$-far from $\calC'_\convex$.  Let
$E$ be the following event (over the draw of $\bT$):
\begin{flushleft}\begin{quote}
Event $E$: Every cube in $\CubeSet$ contains at least one point of $\bT$ (so the\\ algorithm does not
  accept in Step 2) and moreover, every $\Cube$ with
$$\frac{\Vol(\Cube \cap \targetset)}{\Vol(\Cube)} \geq \e/4$$
contains at least one positive point in $\bT$ and thus, is not external.
\end{quote}\end{flushleft}
It is easy to show that the probability mass of each cube in $\CubeSet$ is
  at least $( \eps/n)^{O(n)}$ (since its volume is $(\eps/n)^{O(n)}$ and the density
  function of the Gaussian is at least $(1/\eps)^{O(n)}$ using our~choice of $n'$),
  it follows from a union bound over $\CubeSet$ that,
  for a suitable choice of $s=(n/\eps)^{O(n)}$ (with a
  large enough coefficient in the exponent),
  $E$ occurs with probability $1-o(1)$.
Assuming that $E$ occurs, we show below that either $\Vol(BC)\ge \eps/4$ or $A^*$ rejects in Step 5
  with probability $\Omega(\eps)$.

For this purpose, we assume below that both $E$ occurs and $\Vol(BC)<\eps/4$.
Note that the set $I$ is convex and is contained in $\Ball(n')$.
Thus it belongs to $\calC'_\convex$ and consequently $\Vol(I \bigtriangleup \targetset) \geq \eps$
  (since $\targetset$ is assumed to be $\eps$-far from $\calC'_\convex$), which implies that
$$
\Vol(S\setminus I)+\Vol(I\setminus S)\ge \eps.
$$
It suffices to show that $\Vol(\targetset \setminus I) \leq \e/2$, since $\Vol(I \setminus \targetset)$
  is exactly the probability that algorithm $A^*$ rejects in Step 5.
To see that $\Vol(\targetset \setminus I) \leq \e/2$, observe that by Lemma~\ref{lem:i}, $\Vol(\targetset \setminus I)$ is at most $\Vol(\targetset \cap BC) + \Vol(\targetset \cap EC)$.
On the one hand, $\Vol(S\cap BC)\le \Vol(BC)<\eps/4$ by assumption.
On the other hand, given the event $E$, every external cube has at most
  $(\eps/4)$-fraction of its volume in $S$ and thus, $\Vol(S\cap EC)\le \eps/4$
  (as the total volume of $EC$ is at most $1$).
Hence $\Vol(S\setminus I)\le \eps/2$.

This concludes the proof of Theorem~\ref{thm:1submodif}.


\section{Two-sided lower bound} \label{sec:2slb}

We recall Theorem \ref{thm:2slb}:

\begin{reptheorem}{thm:2slb}
\red{There exists a positive constant $\eps_0$ such that any
  two-sided sample-based algorithm that is an $\eps$-tester for convexity over $\normal^n$
  for some $\eps\le \eps_0$ must use $2^{\Omega(\sqrt{n})}$ samples.}
\end{reptheorem}

\newcommand{\D}{\mathcal{D}}
\newcommand{\capp}{\mathrm{cap}}
\newcommand{\fsa}{\mathrm{fsa}}

Let $q=2^{0.01\sqrt{n}}$ and let $\eps_0$ be a~positive constant to be specified later.
To prove Theorem \ref{thm:2slb}, we show  that  no sample-based, $q$-query (randomized) algorithm $A$ can achieve the following goal:
\begin{flushleft}\begin{quote}
Let $S\subset \R^n$ be a target set that is Lebesgue measurable.
Let $\bx_1,\ldots,\bx_q$ be a sequence of $q$ samples drawn from $\gau$.
Upon receiving $((\bx_i,S(\bx_i)):i\in [q])$,
    $A$ accepts with probability at least $2/3$ when $S$ is convex
  and rejects with probability at least $2/3$ when $S$ is $\eps_0$-far from convex.

\end{quote}\end{flushleft}
Recall that a pair $(x,b)$ with $x\in \R^n$ and $b\in \{0,1\}$ is a {labeled sample}.
Thus, a sample-based algorithm $A$ is simply a randomized map from a sequence of $q$ labeled samples to $\{\text{``accept'',``reject''}\}$.

\subsection{Proof Plan}

Assume for contradiction that there is a
   $q$-query (randomized) algorithm $A$ that accomplishes the task above.
In Section \ref{sec:dist} we define two probability distributions $\Dyes$ and $\Dno$ such that
  (1) $\Dyes$ is a distribution over convex sets in $\R^n$ ($\Dyes$
  is a distribution over certain convex polytopes that are the intersection of many randomly drawn halfspaces), and (2)
  $\Dno$ is a probability distribution over  sets in $\R^n$ that are Lebesgue measurable
  ($\Dno$ is actually supported over a finite number of measurable sets in $\R^n$) such that
  $\SS\leftarrow \Dno$ is $\eps_0$-far from convex   with probability at least $1-o(1)$.

Given a sequence $x=(x_1,\ldots,x_q)$ of points, we abuse the notation and write
  $$S(x)=(S(x_1),\ldots,S(x_q))$$
  and use $(x,S(x))$ to denote the sequence of $q$ labeled samples
  $(x_1,S(x_1)),\ldots,(x_q,S(x_q))$.
It then follows from our assumption on $A$ that
\begin{align*}
\Ppr_{\SS\leftarrow \Dyes;\hspace{0.05cm}\bx \leftarrow (\gaus)^q} \big[\text{$A$ accepts $(\bx,\SS(\bx))$}\big] &\ge 2/3\quad  \text{and} \\[0.5ex]
\Ppr_{\SS\leftarrow \Dno ;\hspace{0.05cm}\bx \leftarrow (\gaus)^q} \big[\text{$A$ accepts $(\bx,\SS(\bx))$}\big] &\le 1/3+o(1).
\end{align*}
where we use $\bx\leftarrow (\gaus)^q$ to denote a sequence of $q$ points sampled independently from $\gaus$
  and we usually skip the $\leftarrow (\gaus)^q$ part in the subscript when it is clear from the context.
Since $A$ is~a mixture of deterministic algorithms,
  there exists a deterministic sample-based, $q$-query algorithm $A'$ (equivalently,
  a deterministic map from sequences of $q$ labeled samples to $\{\textsf{``Yes''}, \textsf{``No''}\}$) with
\begin{equation}\label{hehe2}
\Ppr_{\SS\leftarrow \Dyes;\hspace{0.05cm}\bx} \big[\text{$A'$ accepts $(\bx,\SS(\bx))$}\big]-
\Ppr_{\SS\leftarrow \Dno ;\hspace{0.05cm}\bx} \big[\text{$A'$ accepts $(\bx,\SS(\bx))$}\big]\ge 1/3-o(1).
\end{equation}
Let $\Eyes$ (or $\Eno$) be the distribution of $(\bx,\SS(\bx))$, where $\bx\leftarrow (\gaus)^q$ and
  $\SS\leftarrow \Dyes$ (or $\SS\leftarrow \Dno$, respectively). Both of them are distributions over sequences of $q$
  labeled samples.
Then the LHS of (\ref{hehe2}), for any deterministic sample-based, $q$-query algorithm $A'$, is
  at most the total variation distance between $\Eyes$ and $\Eno$.
We prove the following key lemma, which leads to a contradiction.

\begin{lemma}\label{maintechlemma}
The total variation distance between $\Eyes$ and $\Eno$ is $o(1)$.
\end{lemma}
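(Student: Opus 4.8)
The plan is to prove Lemma~\ref{maintechlemma} by a triangle-inequality argument through an intermediate product distribution. Let $\mu_i^{\textsf{yes}}$ denote the marginal distribution of $\SS(\bx_i)$ when $\bx_i\leftarrow\gaus$ and $\SS\leftarrow\Dyes$ (a single bit), and similarly $\mu_i^{\textsf{no}}$ for $\Dno$. Conditioned on the point sequence $\bx=(\bx_1,\ldots,\bx_q)$, let $\mathcal{P}^{\textsf{yes}}(\bx)$ be the product distribution $\bigotimes_{i=1}^q \mu_i^{\textsf{yes}}(\bx_i)$ over label sequences, where $\mu_i^{\textsf{yes}}(x)$ is the (Bernoulli) conditional marginal of $\SS(x)$ given the point is $x$; let $\mathcal{P}^{\textsf{no}}(\bx)$ be the analogous product. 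First I would observe that the marginal distribution of a single bit $\SS(x)$ is the same under $\Dyes$ and under $\Dno$ for (almost) every fixed $x$ — this should be arranged by design of the two distributions, namely each point $x$ should be in the set with the same probability $p(\|x\|)$ in both constructions (the ``no'' construction is obtained by a local perturbation near the relevant sphere that preserves pointwise inclusion probabilities). Hence $\mathcal{P}^{\textsf{yes}}(\bx)=\mathcal{P}^{\textsf{no}}(\bx)=:\mathcal{P}(\bx)$, and it suffices to prove the two bounds
\begin{equation}\label{plan1}
\E_{\bx}\big[\dtv\big(\Eyes\mid\bx,\ \mathcal{P}(\bx)\big)\big]= o(1)\quad\text{and}\quad \E_{\bx}\big[\dtv\big(\Eno\mid\bx,\ \mathcal{P}(\bx)\big)\big]= o(1),
\end{equation}
since averaging over $\bx$ and the triangle inequality then give $\dtv(\Eyes,\Eno)=o(1)$.

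For the $\Dyes$ side of \eqref{plan1} — which the introduction flags as ``the most challenging part'' — the goal is to show that for $\SS\leftarrow\Dyes$ and a typical draw of $q$ points $\bx_1,\ldots,\bx_q\leftarrow\gaus$, the labels $\SS(\bx_1),\ldots,\SS(\bx_q)$ are nearly independent. Recall $\SS=\bigcap_{j=1}^N \bh_j$ with $\bh_j=\{x:x\cdot\by_j\le r^2\}$ and $\by_1,\ldots,\by_N\leftarrow S^{n-1}(r)$, $N=2^{\sqrt n}$, $r=\Theta(n^{1/4})$. The natural approach: condition on the $\bx_i$'s, and note that $\SS(\bx_i)=1$ iff \emph{none} of the $N$ random halfspaces cuts off $\bx_i$, i.e. iff $\by_j\cdot\bx_i\le r^2$ for all $j$. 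Since the $\by_j$'s are i.i.d., the joint law of the $q$-tuple of indicator vectors factorizes over $j$; the dependence among the $q$ coordinates comes only from the \emph{correlation}, for a single random $\by\leftarrow S^{n-1}(r)$, among the events $\{\by\cdot\bx_i>r^2\}$ across $i$. The key quantitative fact to extract is that for typical $\bx_i$ (length $\approx\sqrt n$) the single-halfspace ``cut'' probability $p:=\Pr_{\by}[\by\cdot\bx_i>r^2]$ is tiny — of order $2^{-\Theta(\sqrt n)}$, chosen so that $Np=\Theta(1)$ (this is exactly the Nazarov/\cite{KOS:07} calibration) — and that for two typical points $\bx_i,\bx_{i'}$ the pair-cut probability $\Pr_{\by}[\by\cdot\bx_i>r^2\text{ and }\by\cdot\bx_{i'}>r^2]$ is much smaller than $p^2/N$, because two random points in $\R^n$ are nearly orthogonal and the spherical caps around $\bx_i/\|\bx_i\|$ and $\bx_{i'}/\|\bx_{i'}\|$ at the relevant (large) distance are essentially disjoint and ``uncorrelated.'' Then a second-moment / inclusion–exclusion bound on $\dtv$ between the true joint label distribution and the product of marginals — controlling the total correlation by $\sum_{i<i'} N\cdot\Pr_{\by}[\text{both cut}] \le q^2\cdot N\cdot o(p^2/N) = o(1)$ after using $q=2^{0.01\sqrt n}\ll 1/p$ — closes the argument. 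One must also throw away the $o(1)$-probability bad events that some $\|\bx_i\|$ is atypical (via Lemma~\ref{lem:johnstone}) or that two $\bx_i$'s are atypically close; on that good event all the above estimates hold uniformly.

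The $\Dno$ side of \eqref{plan1} should be comparatively routine: $\Dno$ is engineered as a ``local'' modification of the $\Dyes$ construction — e.g. for each of the $N$ directions $\by_j$ one replaces the exact halfspace boundary by an independently randomly wiggled surface (or flips membership within a thin shell) so that (a) pointwise inclusion probabilities are unchanged, and (b) with probability $1-o(1)$ the resulting set is $\eps_0$-far from convex (this is where the ``no'' property is verified, presumably using Ball's theorem / surface-area considerations from Section~\ref{sec:structural}, but that is part of defining $\Dno$, not of proving this lemma). For the total-variation bound one repeats the factorization-over-$j$ argument: again the $q$ labels depend on the $N$ i.i.d. random surfaces, and the relevant pairwise ``both affected by the same surface'' probability is even smaller than in the $\Dyes$ case because the perturbation is confined to a thin shell, so the same $\sum_{i<i'}\le o(1)$ bound applies verbatim. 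I expect the main obstacle to be the $\Dyes$ near-independence estimate: making rigorous the claim that the pair-cut probability for two typical Gaussian points is $o(p^2/N)$ requires a careful geometric/anticoncentration estimate on spherical caps in $\R^n$ (quantifying how the overlap of two large-codistance caps decays with the near-orthogonality of their centers), and one must verify this uniformly over the $1-o(1)$-measure set of ``good'' point configurations. Everything else — the triangle inequality through $\mathcal{P}(\bx)$, the union bound over the $o(1)$ bad events, and the $\Dno$ side — is bookkeeping by comparison.
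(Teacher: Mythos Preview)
Your high-level framework --- triangle inequality through the product distribution $\mathcal{P}(\bx)$ of pointwise marginals --- is exactly what the paper does: the paper calls this intermediate object $\Eno^*$, and your two halves of \eqref{plan1} are its Lemmas~\ref{techlemma1} and~\ref{techlemma2}. The $\Dyes$ side is also correctly identified as the hard part, and your plan (condition on the points, factor over the $N$ i.i.d.\ halfspace normals, control pairwise cap overlaps) is the paper's plan.

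Two corrections are worth flagging. First, your quantitative claim that the pair-cut probability is ``much smaller than $p^2/N$'' is wrong: for two typical Gaussian points the overlap $\fsa(\cover(z_i)\cap\cover(z_j))$ is of order $p^2$ (its expectation over $z_i,z_j$ is exactly $p^2$ by symmetry), not $p^2/N$. Fortunately the arithmetic still closes, since $q^2\cdot N\cdot O(p^2)=O(q^2\cdot Np\cdot p)=O(q^2 p)=o(1)$ using $Np=\Theta(1)$ and $q\ll 1/p$. Second, ``a second-moment / inclusion--exclusion bound on $\dtv$'' is too vague as stated: even when the cap regions are pairwise disjoint, the counts of normals landing in each region are multinomial (negatively correlated), not independent, so near-independence of the labels requires an actual coupling or Poissonization argument. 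The paper does this via an explicit round-by-round coupling: it tracks the $q\times N$ matrix $\bM$ of indicators $\bh_j(z_i)$, defines ``nice'' matrices (few zeros, at most one zero per column), shows $\bM$ is nice with probability $1-o(1/q)$, and proves that conditioned on a nice prefix the next label's conditional probability matches $\rho(\|z_i\|)$ to within $o(1/q)$.

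Your speculation about $\Dno$ is off. The paper's $\Dno$ is \emph{not} a wiggled-halfspace perturbation of $\Dyes$; it is a union of random spherical shells: partition $\Ball(2\sqrt n)$ into $M\ge 2^{\sqrt n}$ concentric shells of equal Gaussian mass, and include shell $i$ independently with probability $\rho(t_i)$. This construction makes the $\Dno$ half of \eqref{plan1} nearly trivial --- with probability $1-o(1)$ the $q$ points land in $q$ distinct shells, and then the labels are already independent Bernoullis with parameters $\rho(t_{\ell_i})\approx\rho(\|x_i\|)$ --- and the far-from-convex property is proved by a one-dimensional random-interval argument along each ray, not via Ball's theorem. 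Your proposed wiggled-boundary $\Dno$ may well be workable, but verifying both far-from-convexity and near-independence for it would be substantially more work than the shell construction.
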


To prove Lemma \ref{maintechlemma}, it is convenient for us to introduce a third distribution $\S$ over
  sequences of $q$ labeled samples,
  where $(\bx,\bb)\leftarrow \S$ is drawn by first sampling a sequence of $q$ points $\bx=(\bx_1,\ldots,\bx_q)$
  from $\gaus$ independently  and then for each $\bx_i$, its label $\bb_i$ is set to be $1$ independently with
  a probability that depends only on $\|\bx_i\|$
(see Section \ref{sec:dist}).
Lemma \ref{maintechlemma} follows from the following two lemmas by the  triangle inequality.
\begin{lemma}\label{techlemma1}
The total variation distance between $\Eno$ and $\Eno^*$ is $o(1)$.
\end{lemma}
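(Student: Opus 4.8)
The plan is to show that the distribution $\Eno$ of labeled samples $(\bx,\SS(\bx))$ with $\SS\leftarrow\Dno$ is close, in total variation distance, to the "product-like" distribution $\Eno^*$, in which the label of each point $\bx_i$ is set independently to $1$ with probability depending only on $\|\bx_i\|$. The natural route is to couple the two experiments on the \emph{same} sequence of query points $\bx=(\bx_1,\dots,\bx_q)$ and to argue that, conditioned on a high-probability event about $\bx$, the conditional law of the label vector $\SS(\bx)$ under $\Dno$ coincides \emph{exactly} with the corresponding product law used in $\S$. Since $\Dno$ is (by construction, to be given in Section~\ref{sec:dist}) a distribution supported on measurable sets built out of the same random-halfspace / spherical-cap geometry as $\Dyes$ but with the ``inside'' regions near the relevant sphere replaced by independently randomly-labeled cells, the key structural fact we want is: for a typical query sequence, no two of the $q$ points $\bx_1,\dots,\bx_q$ fall into the same ``cell'' of the random partition underlying $\Dno$, and each point's label is then an independent coin whose bias depends only on which shell $\|\bx_i\|$ lies in. Granting this, the label vector is genuinely a product of the prescribed marginals, so conditionally the two distributions are identical and $\dtv(\Eno,\Eno^*)$ is bounded by the probability of the bad event.

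The steps, in order, would be: (1) recall the precise definition of $\Dno$ and of the partition of $\R^n$ (or of the relevant shell $S^{n-1}(r)$ and its thickening) into cells/caps that it uses, and recall the definition of $\S$, i.e.\ the function $\rho(\|x\|)$ giving the marginal probability that $\SS(x)=1$. (2) Define the ``collision'' event $\mathcal{E}_{\mathrm{coll}}$ that two of the $q$ i.i.d.\ points $\bx_i$ land in the same cell, together with any other bad event (e.g.\ a point landing too close to a cell boundary, or a point with $\|\bx_i\|$ outside the range where the shell structure is well-defined) that is needed for the labels to be exactly the claimed independent coins. (3) Show $\Pr_{\bx}[\mathcal{E}_{\mathrm{coll}}\ \text{or bad}]=o(1)$: since $q=2^{0.01\sqrt n}$ and the number of cells is $\Theta(2^{\sqrt n})$ or larger (matching $N=2^{\sqrt n}$ in the $\Dyes$ construction), a birthday-paradox union bound over the $\binom{q}{2}$ pairs gives collision probability $O(q^2/N)=o(1)$; the boundary/shell events are handled by Gaussian concentration, e.g.\ Lemma~\ref{lem:johnstone}. (4) Condition on the good event: on this event each $\bx_i$ lies alone in its cell, so under $\Dno$ the bits $\SS(\bx_1),\dots,\SS(\bx_q)$ are mutually independent with $\Pr[\SS(\bx_i)=1]=\rho(\|\bx_i\|)$, which is exactly the conditional law of $\bb$ under $\S$ given the same $\bx$. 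Hence the two conditional distributions of $(\bx,\SS(\bx))$ agree, and $\dtv(\Eno,\Eno^*)\le \Pr[\text{bad event}]=o(1)$.

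The main obstacle is step~(3)–(4): making rigorous that, \emph{conditioned} on the query points and on no collisions, the $\Dno$ labels really are independent with the stated marginals. This hinges on how $\Dno$ is defined in Section~\ref{sec:dist} — in particular whether the randomness defining $\SS$ on distinct cells is drawn independently (so that ``one point per cell'' immediately yields independence) and whether the marginal $\Pr_{\SS\leftarrow\Dno}[\SS(x)=1]$ genuinely depends only on $\|x\|$, which requires the cell partition to be spherically symmetric in distribution. A secondary subtlety is that $\Dno$ is supported on measurable sets that are \emph{not} just unions of whole cells (there may be a convex ``core'' region far inside the sphere, and an exterior region, that are deterministically labeled $1$ or $0$); for such points the ``marginal'' is $0$ or $1$ and one must check these regions are treated identically in $\S$, and that no $\bx_i$ falls in a measure-zero transition set. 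Once the definition of $\Dno$ is in hand these are routine, but the argument cannot be completed in full detail until that definition is fixed; I would therefore phrase the proof so that it cleanly invokes the defining properties of $\Dno$ (independence of per-cell randomness; spherical symmetry; the exact form of $\rho$) as black boxes established in Section~\ref{sec:dist}.
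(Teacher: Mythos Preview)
Your overall plan (couple on the same query sequence $\bx$, define a collision/bad event, show it has probability $o(1)$, and then argue the conditional label distributions agree) is exactly the paper's approach. Two corrections are needed.

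First, your guess about the construction of $\Dno$ is off: it is \emph{not} built from spherical caps or halfspaces. In the paper, $\Dno$ partitions $\Ball(2\sqrt{n})$ into $M\ge 2^{\sqrt n}$ concentric \emph{shells} $\Ball(t_i)\setminus\Ball(t_{i-1})$ of equal Gaussian mass, and $\SS\leftarrow\Dno$ includes the $i$th shell independently with probability $\rho(t_i)$. The ``bad'' event is then simply that some $\bx_i$ lies outside $\Ball(2\sqrt n)$ (handled by Lemma~\ref{lem:johnstone}) or that two of the $q$ points fall in the same shell (handled by the birthday bound you describe, using $M\ge 2^{\sqrt n}$ and $q=2^{0.01\sqrt n}$). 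This simplification makes the spherical-symmetry and ``independence across cells'' concerns you raise trivial.

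Second, and this is the actual gap: your step~(4) asserts that, conditioned on no collision, the label vector under $\Dno$ is a product with marginals $\rho(\|\bx_i\|)$, hence \emph{identical} to the conditional law under $\Eno^*$. That is not quite true. Under $\Dno$ the $i$th bit equals $1$ with probability $\rho(t_{\ell_i})$, where $\ell_i$ is the index of the shell containing $\bx_i$; under $\Eno^*$ the bias is $\rho(\|\bx_i\|)$. These need not coincide. The paper closes this gap by choosing $M$ large enough (using uniform continuity of $\rho$ on $[0,2\sqrt n]$) so that $|\rho(x)-\rho(t_i)|\le 2^{-\sqrt n}$ for every $x$ in the $i$th shell (this is property~(\ref{ofofof})), and then invoking subadditivity of total variation distance across the $q$ independent coordinates to get an additional error of at most $q\cdot 2^{-\sqrt n}=o(1)$. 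Without this step your argument would only show that $\Eno$ is close to a \emph{different} product distribution (biases $\rho(t_{\ell_i})$), not to $\Eno^*$.
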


\begin{lemma}\label{techlemma2}
The total variation distance between $\Eyes$ and $\Eno^*$ is $o(1)$.
\end{lemma}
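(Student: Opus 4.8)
\textbf{Proof proposal for Lemma~\ref{techlemma2}.}

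The plan is to show that, when $\SS \leftarrow \Dyes$ is the intersection of $N = 2^{\sqrt{n}}$ halfspaces defined by points $\by_1,\dots,\by_N$ drawn uniformly from $S^{n-1}(r)$ with $r = \Theta(n^{1/4})$, the joint distribution of the label vector $(\SS(\bx_1),\dots,\SS(\bx_q))$ on a fresh Gaussian sample $\bx = (\bx_1,\dots,\bx_q)$ is $o(1)$-close in total variation to the product distribution $\Eno^*$, in which each bit $\bb_i$ is set to $1$ independently with a probability depending only on $\|\bx_i\|$. The first step is to identify the ``right'' success probability for a single point: for a fixed $x$, the probability over $\by \leftarrow S^{n-1}(r)$ that $x \cdot \by > r^2$ (i.e.\ that $x$ violates the halfspace $\bh$) is a function only of $\|x\|$ (and $r$), call it $\pi(\|x\|)$; by a union bound over the $N$ halfspaces, and by the Nazarov-style choice of $r$, for a typical $x$ with $\|x\|$ near $\sqrt{n}$ we have $N\pi(\|x\|) = \Theta(1)$, so $\Pr_{\SS}[\SS(x)=1] \approx e^{-N\pi(\|x\|)}$ up to lower-order corrections; this is exactly the marginal that defines $\Eno^*$ (after one checks $\Eno^*$ is set up with this $\|\cdot\|$-dependent probability). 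Conditioning on the (overwhelmingly likely, by Lemma~\ref{lem:johnstone}) event that all $\|\bx_i\|$ lie in a good range costs only $o(1)$ in total variation, so we may freely assume this.

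The core of the argument is to control the \emph{dependence} among the $q$ bits. Fix a good tuple $x = (x_1,\dots,x_q)$. For each $i$, let $p_i$ be the probability that a single random halfspace $\bh = \{z : z\cdot \by \le r^2\}$ with $\by \leftarrow S^{n-1}(r)$ excludes $x_i$, and for each pair $i \ne j$ let $p_{ij}$ be the probability that a single random halfspace excludes both $x_i$ and $x_j$ simultaneously. Because the $N$ halfspaces are independent, $\Pr_{\SS}[\SS(x_i) = 1 \text{ for all } i \in A] = (1 - \Pr_{\bh}[\bh \text{ excludes some } x_i, i\in A])^N$ for any $A \subseteq [q]$, and by inclusion–exclusion $\Pr_{\bh}[\text{excludes some } x_i, i \in A] = \sum_{i\in A} p_i - \sum_{\{i,j\}\subseteq A} p_{ij} + \cdots$. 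The bits would be exactly independent if the cross-terms $p_{ij}$ (and higher) vanished; the whole point is that they are negligible. The key geometric estimate is therefore: for a typical pair $x_i, x_j$ (two independent Gaussian vectors, hence nearly orthogonal with $\|x_i\|,\|x_j\| \approx \sqrt n$), $p_{ij} \le p_i \cdot p_j \cdot n^{o(1)}$, or more precisely $N p_{ij} = o(1)$ even though $Np_i, Np_j = \Theta(1)$. Intuitively, a random cap on $S^{n-1}(r)$ that cuts off $x_i$ is a cap around direction $x_i/\|x_i\|$; the event that it also cuts off $x_j$ forces $\by$ to be close to \emph{both} nearly-orthogonal directions, which is super-polynomially less likely. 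I would make this quantitative by writing $p_i$ as a Gaussian/beta tail $\Pr[\bg_1 \ge t_i]$ for the appropriate threshold $t_i$ (where $\bg = (\bg_1,\bg_2,\dots) $ is uniform on the sphere, so $\bg_1$ has a beta-type density), and $p_{ij}$ as a two-dimensional tail $\Pr[\langle \bg, u_i\rangle \ge t_i,\ \langle \bg, u_j \rangle \ge t_j]$ with $u_i \perp u_j$ approximately; the two-dimensional spherical-cap intersection is bounded by the product of the one-dimensional tails times a manageable factor, and the near-orthogonality kills the factor.

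Given these pairwise (and, by the same mechanism, higher-order) bounds, I would finish by a hybrid/coupling argument: define $2^q$ numbers $P_A = \Pr_{\SS}[\SS(x_i)=1,\ \forall i\in A]$ and compare them to $\widehat P_A := \prod_{i\in A}(1-p_i)^N$, the product-distribution values; from $|P_A - \widehat P_A| \le$ (something like) $N\sum_{\{i,j\}\subseteq A} p_{ij} \cdot (\text{products of the }(1-p_k)^N) \le q^2 \cdot o(1)$ one gets that the two distributions on $\{0,1\}^q$ are $o(1)$-close in total variation (using $q = 2^{0.01\sqrt n}$, $N = 2^{\sqrt n}$, and that the $o(1)$ from the pairwise bound beats $q^2$), and then average over the good tuples $x$. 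The main obstacle I anticipate is precisely the pairwise geometric estimate $Np_{ij} = o(1)$: one must handle the atypical pairs (where $x_i, x_j$ are not close to orthogonal, or have slightly off norms) carefully, showing they contribute negligibly either because they are rare or because the bound degrades only mildly, and one must get the constants in the exponents to line up so that the whole union-bound/hybrid sum is genuinely $o(1)$ rather than merely bounded. The matching of marginals — verifying that the single-point acceptance probability under $\Dyes$ really is (up to $o(1)$) the $\|\cdot\|$-dependent probability used to define $\Eno^*$ — is the other place where care is needed, but it is a one-variable computation rather than a genuine obstacle.
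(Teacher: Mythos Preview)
Your identification of the key geometric input is correct and matches the paper: what drives the near-independence of the labels is that the spherical caps $\cover(x_i) = \{y \in S^{n-1}(r): x_i \cdot y > r^2\}$ have pairwise intersections of fractional area at most roughly $e^{-0.96\, r^2}$, which is tiny compared to the individual areas $\approx e^{-r^2/2}$. The paper proves this bound by an elegant symmetry trick---computing $\E_{\bz_i,\bz_j}[\fsa(\cover(\bz_i)\cap\cover(\bz_j))]$ by swapping the order of expectation with a uniform point on the sphere and then applying Markov's inequality---rather than by directly analyzing near-orthogonality of Gaussian pairs, but your route should also work. One small simplification: the single-point marginals match \emph{exactly} by construction, since $\rho(\|x\|)$ is defined to be $\Pr_{\SS \leftarrow \Dyes}[x \in \SS]$; no approximation is needed there.

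The gap is in your finishing step. You propose to compare the $2^q$ quantities $P_A = \Pr_{\SS}[\SS(x_i)=1 \text{ for all } i\in A]$ to their product analogues $\widehat P_A$, obtain $|P_A - \widehat P_A| = o(1)$ uniformly, and conclude that the two label distributions are $o(1)$-close in total variation. But closeness of these ``upper-set'' probabilities does not by itself control total variation: passing from the $P_A$'s to point probabilities on $\{0,1\}^q$ requires an inclusion--exclusion with $2^q$ terms, and with $q = 2^{0.01\sqrt{n}}$ this blowup is fatal. (There is no general inequality of the form $\dtv \le \mathrm{poly}(q)\cdot\max_A |P_A - \widehat P_A|$.) The paper avoids this entirely by a genuine sequential coupling: it records, for each pair $(i,j)$, whether the halfspace $\bh_j$ excludes $x_i$, in a $q\times N$ matrix $\bM$; shows that with probability $1-o(1/q)$ this matrix is ``nice'' (at most $\sqrt{N}$ zero entries total and no column with two zeros); and then proves that, conditioned on any nice realization of the first $i-1$ rows, the conditional probability $\Pr[\SS(x_i)=1 \mid \bM^{(i-1)}]$ equals $\rho(\|x_i\|) \pm o(1/q)$. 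Summing the per-step error $o(1/q)$ over $q$ steps gives the desired $o(1)$ total variation. The ``no column has two zeros'' event is exactly your estimate $N\binom{q}{2}\max_{i\ne j} p_{ij} = o(1)$, so you already have the right ingredient---what is missing is replacing the all-subsets comparison with this step-by-step coupling (or an equivalent device, such as Poissonizing the number of halfspaces) so that only $q$, not $2^q$, error terms accumulate.
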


The rest of the section is organized as follows.
We define the distributions $\Dyes,\Dno$ (which are used to define $\Eyes$ and $\Eno$) as well as $\S$ in Section \ref{sec:dist}
  and prove the necessary properties about $\Dyes$ and $\Dno$ as well as Lemma \ref{techlemma1}.
We prove Lemma \ref{techlemma2} in Sections \ref{finalsec} and \ref{hehefinal}.

\subsection{The Distributions}\label{sec:dist}

Let $r=\Theta(n^{1/4})$ be a parameter to be fixed later, and let $N=2^{\sqrt{n}}$.
We start with the definition of $\Dyes$.
A random set $\SS\subset  \R^n$ is drawn from $\Dyes$ using the following procedure:
\begin{flushleft}\begin{enumerate}
\item We sample a sequence of $N$ points $\by_1,\ldots,\by_N$ from $\red{S^{n-1}(r)}$ independently and uniformly at random.
Each point $\by_i$ defines a halfspace $$\bh_i=\big\{x\in \R^n: x\cdot \by_i\le r^2\big\}.$$
\item The set $\SS$ is then the intersection of $\bh_i$, $i\in [N]$ (this is always nonempty as
indeed $\Ball(r)$ is contained in $\SS$).
\end{enumerate}\end{flushleft}
It is clear from the definition that $\SS\leftarrow \Dyes$ is always a convex set.

Next we define $\red{\S}$ (instead of $\red{\Dno}$), a distribution over \red{sequences of $q$} labeled samples $(\bx,\bb)$.
To this end, we use $\Dyes$ to define a function \(\rho: \R_{\ge 0} \to [0, 1]\) as follows:
$$
\rho(t) = \Ppr_{\SS \leftarrow \D_\text{yes}}\Big[(t, 0, \ldots,0) \in \SS\Big].
$$
Due to the symmetry of $\Dyes$ and $\gaus$, the value $\rho(t)$ is indeed the probability that a point $x\in \R^n$ at distance $t$
  from the origin lies in $\SS\leftarrow \Dyes$.
To draw a sequence of $q$ labeled samples $(\bx,\bb)\leftarrow \S$, we first independently draw $q$ random points $\bx_1,\dots,\bx_q \leftarrow \gaus$ and then independently set each $\bb_i=1$ with probability $\rho(\|\bx_i\|)$ and $\bb_i=0$ with probability $1-\rho(\|\bx_i\|)$.

Given $\Dyes$ and $\red{\S}$,
  Lemma \ref{techlemma2} shows that information-theoretically no sample-based algorithm
  can distinguish a sequence of $q$ labeled samples $(\bx,\bb)$ with $\SS\leftarrow \Dyes$,
  $\bx\leftarrow (\gaus)^q$, and $\bb=\SS(\bx)$
  from a sequence of $q$ labeled samples drawn from $\red{\S}$.
While the marginal distribution of each labeled sample is the same for the two cases,
  the former is generated in a correlated fashion using the underlying random convex $\SS\leftarrow \Dyes$ while the latter
  is generated independently.

Finally we define the distribution $\red{\Dno}$, prove Lemma \ref{techlemma1}, and
  show that a set drawn from $\red{\Dno}$ is far from convex with high probability.
To define $\red{\Dno}$, we let $M\ge 2^{\sqrt{n}}$ be a large enough integer to be specified later.
With $M$ fixed, we use
$$
0=t_0<t_1<\cdots<t_{M-1}<t_M=\red{2\sqrt{n}}
$$
to denote a sequence of numbers such that  the origin-centered ball $\Ball({\red{2\sqrt{n}}})$ is partitioned into
  $M$ \emph{shells} $\Ball(t_i)\setminus \Ball(t_{i-1})$, $i\in [M]$,
  and all the $M$ shells have the same probability mass under $\gaus$.
By spherical coordinates, it means that
  the following integral takes the same value for all $i$:
\begin{equation}\label{tttt}
\int_{t_{i-1}}^{t_i} \phi(x,0,\ldots,0) x^{n-1} dx,
\end{equation}
where $\phi$ denotes the density function of $\gaus$.
We show below that when $M$ is large enough, we have
\begin{equation}\label{ofofof}
|\rho(x)-\rho(t_i)|\le 2^{-\sqrt{n}},
\end{equation}
for any $i\in [M]$ and any $x\in [t_{i-1},t_i]$.
We will fix such an $M$ and use it to define $\red{\Dno}$. (Our results are not affected by the size of $M$ as a function of $n$; we only need it to be finite, given $n$.)

To show that (\ref{ofofof}) holds when $M$ is large enough, we need the continuity of the function $\rho$, which follows
  directly from the explicit expression for $\rho$ given later in
  (\ref{eq:rhodef}).

\begin{lemma}\label{continuity}
The function $\rho:\R_{\ge 0}\rightarrow [0,1]$ is continuous.
\end{lemma}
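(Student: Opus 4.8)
The plan is to write $\rho$ in closed form and read off continuity from the formula. First observe that for $\SS\leftarrow\Dyes$ the point $(t,0,\ldots,0)$ lies in $\SS$ exactly when it is contained in every one of the $N$ defining halfspaces, i.e.\ when $t\cdot(\by_i)_1\le r^2$ for all $i\in[N]$, where $(\by_i)_1$ denotes the first coordinate of $\by_i$. Since $\by_1,\ldots,\by_N$ are drawn independently and uniformly from $S^{n-1}(r)$, these $N$ events are independent and identically distributed, so
$$
\rho(t)=p(t)^N,\qquad\text{where}\quad p(t):=\Pr_{\by\leftarrow S^{n-1}(r)}\!\big[\by_1\le r^2/t\big]
$$
(with the convention $r^2/t:=+\infty$ when $t=0$, consistent with $\rho(0)=1$). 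Since $x\mapsto x^N$ is continuous, it suffices to prove that $p$ is continuous on $\R_{\ge 0}$.

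Next I would use the standard fact that for $n\ge 2$ the first coordinate of a uniform point on $S^{n-1}(r)$, rescaled by $1/r$, has density $c_n(1-u^2)^{(n-3)/2}$ on $[-1,1]$ (with $c_n=\Gamma(n/2)/(\sqrt{\pi}\,\Gamma((n-1)/2))$); this density is integrable and has no atoms, so the corresponding cumulative distribution function $F(v):=\Pr[\by_1\le rv]$ is continuous on $\R$, with $F(v)=0$ for $v\le-1$ and $F(v)=1$ for $v\ge 1$. For $t>0$ we then have $p(t)=F(r/t)$, a composition of the continuous map $t\mapsto r/t$ on $(0,\infty)$ with the continuous function $F$, hence $p$ is continuous on $(0,\infty)$; for $t\in[0,r]$ we have $r/t\ge 1$, so $p(t)=F(r/t)=1=p(0)$, and $F(r/t)\to F(1)=1$ as $t\to r^+$, so the two pieces agree and match continuously at $t=r$. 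Thus $p$, and therefore $\rho=p^N$, is continuous on all of $\R_{\ge 0}$. This is precisely what the explicit expression (\ref{eq:rhodef}) referenced above makes visible: $\rho(t)=1$ for $t\le r$, and $\rho(t)=\big(\int_{-1}^{r/t}c_n(1-u^2)^{(n-3)/2}\,du\big)^N$ for $t>r$, which is manifestly a continuous function of $t$.

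I do not expect a real obstacle here; the only points requiring a little care are (a) the breakpoint $t=r$, where the piecewise formula for $\rho$ changes and one must check that the right-hand limit agrees with the constant value $1$, and (b) the implicit assumption $n\ge 2$, which is what guarantees that the first-coordinate marginal on the sphere has a continuous (atomless) distribution; for $n=1$ the construction of $\Dyes$ degenerates and $\rho$ genuinely jumps at $t=r$, but that regime is irrelevant to Theorem~\ref{thm:2slb}. Given the explicit integral form of $\rho$, the continuity claim then follows immediately from continuity of cumulative distribution functions of integrable densities together with continuity of $x\mapsto x^N$.
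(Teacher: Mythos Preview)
Your proposal is correct and takes essentially the same approach as the paper: the paper simply points to the explicit expression~(\ref{eq:rhodef}), namely $\rho(x)=(1-\capp(r/x))^N$, and declares continuity to be immediate, which is exactly your derivation $\rho(t)=p(t)^N$ with $p(t)=F(r/t)$ (since $1-\capp(\cdot)$ is precisely your CDF $F$). Your treatment is more careful than the paper's one-line remark---you explicitly handle the breakpoint $t=r$ and note the $n\ge 2$ assumption---but the underlying argument is identical.
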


Since $\rho$ is continuous, it is continuous over $[0,\red{2\sqrt{n}}]$.
Since $[0,\red{2\sqrt{n}}]$ is compact,
  $\rho$ is also uniformly continuous over $[0,\red{2\sqrt{n}}]$.
Also note that $\max_{i\in [M]} (t_{i}-t_{i-1})$ goes to $0$ as $M$ goes to $+\infty$.
It follows that (\ref{ofofof}) holds when $M$ is large enough.

With $M\ge 2^{\sqrt{n}}$ fixed,
  a random set $\SS\leftarrow \red{\Dno}$ is drawn as follows.
We start with $\SS=\emptyset$ and for each $i\in [M]$,
  we add the $i$th shell $\Ball(t_i)\setminus \Ball(t_{i-1})$ to $\SS$ independently
  with probability $\rho(t_i)$.
Thus an outcome of $\SS$ is a union of some of the shells and
  $\red{\Dno}$ is supported over $2^M$  different sets.

Recall the definition of $\Eyes$ and $\Eno$ using $\Dyes$ and $\Dno$. We now prove Lemma \ref{techlemma1}.

\begin{proof}[Proof of Lemma \ref{techlemma1}]
Let $x=(x_1,\ldots,x_q)$ be a sequence of $q$ points in $\R^n$.
We say $x$ is \emph{bad} if either (1) at least one point lies outside of $\Ball(2\sqrt{n})$ or (2)
  there are two points that lie in the same shell of $\Dno$; we say $x$ is \emph{good} otherwise.
We first claim that $\bx\leftarrow (\gaus)^q$ is bad with probability $o(1)$.
To see this, we have from Lemma \ref{johnstone} that event (1) occurs with probability $o(1)$, and
  from $M\ge 2^{\sqrt{n}}$ and $q=2^{0.01\sqrt{n}}$ that event (2) occurs with probability $o(1)$.
The claim follows from a union bound.

Given that $\bx\leftarrow (\gaus)^q$ is good with probability $1-o(1)$, it suffices to show that
  for any good $q$-tuple $x$, the total variation distance between
  (1)  $\SS(x)$ with $\SS\leftarrow \Dno$ and (2)
  $\bb = (\bb_1,\dots,\bb_q)$ with each bit $\bb_i$ being $1$ with probability $\rho(\|x_i\|)$ independently, is $o(1)$.
Let $\ell_i \in [M]$ be the index of the shell that $x_i$ lies in.
Since $x$ is good (and thus, all points lie in different shells),
  $\SS(x)$ has the $i$th bit being $1$ independently with probability $\rho(t_{\ell_i})$;
  for the other distribution, the probability is $\rho(\|x_i\|)$.
Using the subadditivity of total variation distance (i.e., the fact that
  the $\dtv$ between two sequences of independent random variables
  is upper bounded by the sum of the $\dtv$ between each pair) as well as (\ref{ofofof}), we have
$
\smash{\dtv (\SS(x),\bb )\le q\cdot 2^{-\sqrt{n}}=o(1).}
$
This finishes the proof.
\end{proof}

The next lemma shows that $\SS\leftarrow \Dno$ is $\eps_0$-far from convex with probability $1-o(1)$,
  for some positive constant $\eps_0$.
In the proof of the lemma we fix both the constant $\eps_0$ and our choice of $r=\Theta(n^{1/4})$.
(We remind the reader that $\rho$ and $\Dno$ both depend on the value of $r$.)

\begin{lemma}
There exist a real value $r=\Theta(n^{1/4})$ with $e^{r^2/2}\ge N/n$ and a positive constant $\eps_0$ such
  that a set $\SS\leftarrow \Dno$ is $\eps_0$-far from convex with probability at least $1-o(1)$.
\end{lemma}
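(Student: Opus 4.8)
The plan is to show that with high probability over $\SS \leftarrow \Dno$, every convex set $C$ has $\Vol(\SS \bigtriangleup C) \geq \eps_0$. The key structural feature to exploit is that $\SS$ is a union of random shells, where each shell $\Ball(t_i)\setminus\Ball(t_{i-1})$ is included independently with probability $\rho(t_i)$. Since the shells are spherically symmetric and each one is either fully in or fully out of $\SS$, intuitively $\SS$ looks ``radially oscillating,'' and a convex set (which by symmetrization arguments can only do so well against a radially symmetric target) cannot track these oscillations. First I would pin down the behavior of $\rho$: using the explicit formula~\eqref{eq:rhodef} for $\rho(t)$, together with the constraint $e^{r^2/2} \geq N/n$, I expect to show that $\rho(t)$ transitions from values close to $1$ (for $t$ not much larger than $r$) down toward values bounded away from $1$ (for $t$ a constant factor larger, say around $2\sqrt n$), in such a way that there is a ``critical band'' of radii $[a,b]$ with $b/a$ bounded, $\Vol(\Ball(b)\setminus\Ball(a)) = \Omega(1)$, on which $\rho(t)$ lies in, say, $[1/4, 3/4]$. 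The choice $r = \Theta(n^{1/4})$ is exactly calibrated so that this critical band sits at radius $\Theta(\sqrt n)$, i.e.\ near the bulk of the Gaussian mass.

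Next, within this critical band I would use independence across shells and a Chernoff/Azuma argument to show that, with probability $1-o(1)$, a constant fraction of the shells in the band are ``in'' $\SS$ and a constant fraction are ``out,'' and moreover these in/out shells are interleaved at a fine scale (since $M \geq 2^{\sqrt n}$ is huge, there are many shells of tiny radial width in the band). This gives a quantitative statement: for every radius subinterval of the band of radial width $\delta_0$ (some constant), the Gaussian mass of $\SS$ inside that sub-annulus and the Gaussian mass of its complement inside that sub-annulus are both $\Omega(\delta_0 \cdot (\text{mass of the sub-annulus}))$. The hard part will be this step — turning the per-shell independence into a robust ``every sub-annulus is balanced'' statement that survives a union bound over (a net of) radius intervals, and doing so with the right quantitative constants.

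Then I would argue that no convex set can be $\eps_0$-close to such a balanced $\SS$. The cleanest route is a symmetrization/averaging argument: given any convex $C$, for each radius $t$ let $f(t)$ be the (normalized surface-measure) fraction of the sphere $S^{n-1}(t)$ lying in $C$; because $C$ is convex, the set $\{x : x \in C\}$ on each sphere is a ``cap-like'' region, but more to the point, $\Vol(\SS \bigtriangleup C) = \int \bigl(\rho\text{-pattern}(t)\,(1-f(t)) + (1-\rho\text{-pattern}(t))\,f(t)\bigr)\, d\mu(t)$ where $\mu$ is the radial marginal of $\gaus$ and $\rho\text{-pattern}(t)\in\{0,1\}$ records whether $t$'s shell is in $\SS$. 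For each fixed $t$, the integrand is at least $\min(f(t), 1-f(t))$ when averaged appropriately; since within the critical band roughly half the shells are $0$ and half are $1$ at a fine scale, for any single value of $f(t)$ the contribution from that band's worth of shells is $\Omega(\min(f(t),1-f(t)) + \text{something}) $, and in fact, because the pattern oscillates faster than $f$ can (convexity gives $f$ bounded variation / monotone-ish behavior in a controlled sense), one gets $\Vol(\SS\bigtriangleup C) \geq c \cdot \Vol(\text{critical band}) = \Omega(1)$ regardless of $C$. I would also need the easy observations that $C$ can be taken bounded (intersect with $\Ball(2\sqrt n)$, losing only $o(1)$ mass by Lemma~\ref{johnstone}) and that restricting attention to the critical band only loses a constant. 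Setting $\eps_0$ to be a suitably small absolute constant times the mass of the critical band then completes the argument, and a final union bound over the $1-o(1)$ probability events from the Chernoff step gives the claim.

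**Main obstacle.** The crux is the interaction between the fine-scale random oscillation of $\SS$ in the critical band and the rigidity of convex sets: I need a clean lemma saying that a $\{0,1\}$-valued radial function that is ``balanced at every scale $\geq \delta_0$'' is $\Omega(1)$-far (in the $\mu$-weighted $L^1$ sense) from $\min(f, 1-f)$-type functions arising from convex bodies, and I expect establishing this with explicit constants — rather than just a non-quantitative ``far from convex'' — to be the technically involved part, intertwined with fixing the value of $r$ so that $e^{r^2/2} \geq N/n$ forces $\rho$ to be bounded away from $1$ across a band of constant Gaussian mass.
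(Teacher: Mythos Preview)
Your overall structure matches the paper's: identify a ``critical band'' $[\sqrt{n}-10,\sqrt{n}+10]$ on which $\rho$ is bounded away from $0$ and $1$ (this is exactly the paper's Claim~\ref{claim1}, and fixing $r$ so that $e^{r^2/2}\ge N/n$ is part of that claim's proof), then use independence and a Chernoff/union-bound argument over shells in the band to show the in/out pattern is ``balanced'' with probability $1-o(1)$ (this is the paper's Claim~\ref{oror}), and finally deduce that no convex set can be close to such an $\SS$.

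Where you diverge from the paper is in the last step, and this is where your proposal is vaguer than it needs to be. You frame the convexity obstruction in terms of the radial profile $f(t)=\sigma(\{u\in S^{n-1}: tu\in C\})$ and assert that ``convexity gives $f$ bounded variation / monotone-ish behavior,'' leaving the precise statement as the main obstacle. The paper sidesteps this entirely with a much simpler observation: for each ray $\{tu:t\ge 0\}$, the intersection $C\cap\{tu:t\ge 0\}$ is a single interval. Your own integral formula for $\Vol(\SS\bigtriangleup C)$ is correct, and after Fubini it becomes
\[
\int_{S^{n-1}}\Bigl(\int_0^\infty \bigl|\mathrm{pattern}(t)-\mathbf{1}[t\in I_u]\bigr|\,d\mu(t)\Bigr)\,d\sigma(u),
\]
where $I_u$ is that interval. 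So the problem reduces, ray by ray, to showing the balanced $\{0,1\}$ pattern is $\Omega(1)$-far (in $\mu$-measure) from \emph{any} interval, which is immediate from the balancedness claim. This is precisely Claim~\ref{oror} in the paper, phrased as: for any interval $I$, either $I$ contains $\Omega(M)$ excluded shells or $\overline{I}$ contains $\Omega(M)$ included shells. You do not need any regularity of $f$ itself; the ray-wise interval structure does all the work. So your ``main obstacle'' dissolves once you swap the order of integration, and the argument becomes short.
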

\begin{proof}
We need the following claim but delay its proof to the end of the subsection:
\begin{claim}\label{claim1}
  There exist an $r=\Theta(n^{1/4})$ with $e^{r^2/2}\ge N/n$ and a constant $c\in (0,1/2)$ such that
   $$c < \rho(x) < 1-c,\quad\text{for all $x\in \left[\sqrt{n} - \red{10}, \sqrt{n} + \red{10}\right]$}. $$
\end{claim}

Let $K\subset [M]$ denote the set of all integers $k$ such that
  $[t_{k-1},t_k]\subseteq [\sqrt{n}-10,\sqrt{n}+10]$ (note that $K$ is a set of consecutive integers).
Observe that (1) the total probability mass of all shells $k\in K$ is at least
  $\Omega(1)$ (by Lemma \ref{johnstone}), and (2) the size $|K|$ is at least $\Omega(M)$ (which follows
  from (1) and the fact that all shells have the same probability mass).
\def\TT{\boldsymbol{T}}

Consider the following $1$-dimensional scenario.
We have $|K|$ intervals $[t_{k-1},t_k]$ and~draw a set $\TT$ by including each interval  independently
  with probability $\rho(t_k)$.
We prove the following~claim:

\begin{claim}\label{oror}
The random set $\TT$ satisfies the following property with probability at least $1-o(1)$:
For any interval $I \subseteq \R_{\geq 0}$, either $I$ contains $\Omega(M)$ intervals $[t_{k-1},t_k]$ that are not included in $\TT$,
  or $\overline{I}$ contains $\Omega(M)$ intervals $[t_{k-1},t_k]$ included in $\TT$.
\end{claim}
\begin{proof}

First note that it suffices to consider intervals $I\subseteq \cup_{k\in K} [t_{k-1},t_k]$ and
  moreover, we may further assume that both endpoints of $I$ come from endpoints of $[t_{k-1},t_k]$, $k\in K$.
(In other words, for a given outcome $T$ of $\TT$, if there exists an interval $I$ that violates the condition, i.e.,
  both $I$ and $\overline{I}$ contain fewer than $\Omega(M)$ intervals,
  then there is such an interval $I$ with both ends from end points of $[t_{k-1},t_k]$).
This assumption allows us to focus on $|K|^2\le M^2$ many possibilities for $I$ (as we will see below, our argument applies a union bound over these $K^2$ possibilities).

Given a candidate such interval $I$, we consider two cases.
If $I$ contains $\Omega(M)$ intervals $[t_{k-1},t_k]$, $k\in K$, then
  it follows from Claim \ref{claim1} and a Chernoff bound that $I$
  contains at least $\Omega(M)$ intervals not included in $\TT$ with probability $1-2^{-\Omega(M)}$.
On the other hand, if $\overline{I}$ contains $\Omega(M)$ intervals,
  then the same argument shows that $\overline{I}$ contains $\Omega(M)$ interals
  included in $\TT$ with probability $1-2^{-\Omega(M)}$.
The claim follows from a union bound over all the $|K|^2$ possibilities for $I$.
\end{proof}

We return to the $n$-dimensional setting and
  consider the intersection of $\SS\leftarrow \Dno$ with a ray starting from the origin.
Note that the intersection of the ray and any convex set is an interval on the ray.
As a result, Claim \ref{oror} shows that with probability at least $1-o(1)$ (over the draw of $\SS\leftarrow\Dno$),
  the intersection of any convex set with any ray either
  contains $\Omega(M)$ intervals $[t_{k-1},t_k]$ such that shell $k\in K$ is not included in $\SS$,
  or misses $\Omega(M)$ intervals $[t_{k-1},t_k]$ such that shell $k\in K$ is included in $\SS$.
Since by (1) above shells $k\in K$ together have $\Omega(1)$ probability mass under $\gaus$
  and each shell contains the same probability mass, we have that with probability $1-o(1)$,
  $\SS$ is $\eps_0$-far from any convex set for some constant $\eps_0>0$.
(A more formal argument can be given by performing integration using spherical coordinates
  and applying (\ref{tttt}).)
\end{proof}

\begin{proof}[Proof of Claim \ref{claim1}]
We start with the choice of $r$. Let
$$
\alpha=\sqrt{n}-\red{10}\quad\text{and}\quad \beta=\sqrt{n}+\red{10}.
$$
Let $\capp(t)$ denote the fractional surface area of the spherical cap $S^{n-1}\cap \{x:x_1\ge t\}$, i.e.,
$$
\capp(t)=\Pr_{\bx\leftarrow S^{n-1}} \big[\bx_1\ge t\big].
$$
So $\capp$ is a continuous, strictly decreasing function over $[0,1]$.
Since $\capp(0)=1/2$ and $\capp(1)=0$, there is a unique $r\in (0,\red{\alpha})$ such that
  $\red{\capp}(r/\alpha)=1/N=2^{-\sqrt{n}}$.
Below we show that $r=\Theta(n^{1/4})$ and fix it in the rest of the proof.
First recall the following explicit expression (see e.g. \cite{KOS:07}):
$$
\capp(t)=a_n\int_{t}^1\left(\sqrt{1-z^2}\right)^{n-3} dz,
$$
where $a_n = \Theta(n^{1/2})$ is a parameter that only depends on $n$.
Also recall the following inequalities from \cite{KOS:07} about $\capp(t)$:
\begin{equation}\label{KOSinequality}
\capp(t)\le e^{-nt^2/2},\quad \text{~for all $t \in [0,1]$}; \quad \quad
 \capp(t)\ge \Omega\left( t\cdot e^{-nt^2/2}\right),\quad \text{for $t=O(1/n^{1/4})$}.
\end{equation}
By our choice of $\alpha$ and the monotonicity of the cap function,
  this implies that $r=\Theta(n^{1/4})$ and
$$
1/N=\capp(r/\alpha)\ge \Omega(1/n^{1/4})\cdot e^{-n(r/\alpha)^2/2}\ge \Omega(1/n^{1/4})
\cdot e^{-(r^2/2)(1+O(1/\sqrt{n}))}=\Omega(1/n^{1/4})\cdot e^{-r^2/2}
$$
(using $r = \Theta(n^{1/4})$ for the last inequality), and thus, we have $e^{r^2/2}\ge N/n$.

Next, using the function $\capp$ we have the following expression for $\rho$:
\begin{equation} \label{eq:rhodef}
\rho(x) = \left(1 - \capp\left(\frac{r}{x}\right)\right)^N.
\end{equation}
As a side note, $\rho$ is continuous and thus, Lemma \ref{continuity} follows.
Since $\capp$ is strictly decreasing, we have that $\rho$ is strictly decreasing as well.
To finish the proof it suffices to show that there is a constant $c\in (0,1/2)$ such that
  $\rho(\alpha)<1-c$ and $\rho(\beta)\ge c$.
The first part is easy since
$$\rho(\alpha)=\left(1-1/N\right)^N\approx e^{-1} $$by our choice of $r$.
In the rest of the proof we show that
\begin{equation}\label{lefteq}
 \capp\left(\frac{r}{\beta}\right) \leq a \cdot\capp\left(\frac{r}{\alpha}\right)=\frac{a}{N},
\end{equation}
for some positive constant $a$.
It follows immediately that
$$\rho(\beta)=\left(1-\capp\left(\frac{r}{\beta}\right)\right)^N\ge
  \left(1-\frac{a}{N}\right)^N\ge \left(e^{-2a/ N }\right)^N=e^{-2a},
$$
using $1-x\ge e^{-2x}$ for $0\le x\ll 1$,
and this finishes the proof of the claim.

\begin{figure}[t]
\centering
\includegraphics[width=15cm]{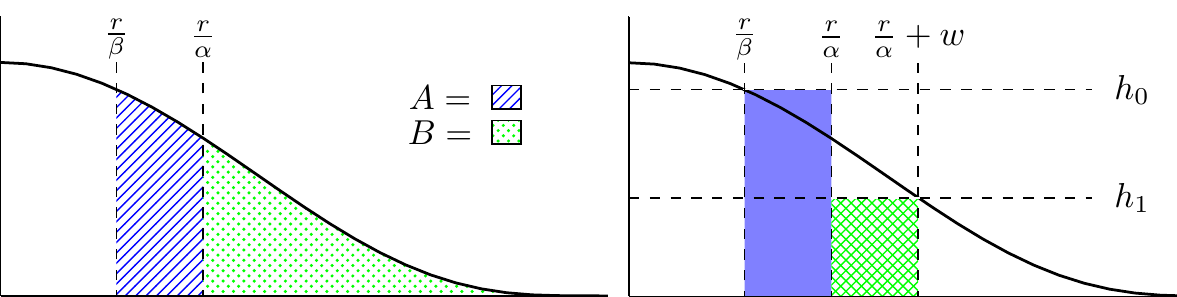}\vspace{0.3cm}
\caption{A plot of the integrand $(\sqrt{1-z^2})^{(n-3)}$. Area $A$ is $\capp(r/\beta)-\capp(r/\alpha)$ and area $B$ is $\capp(r/\alpha)$. The rectangles on the right are an upper bound of $A$ and a lower bound of $B$.}
\label{fig-rho}
\end{figure}

Finally we prove (\ref{lefteq}).
Let
$$
w=\frac{r}{\alpha}-\frac{r}{\beta}=\Theta\left(\frac{1}{n^{3/4}}\right)
$$
since $r=\Theta(n^{1/4})$.
Below we show that
\begin{equation}\label{riririr}
\int_{r/\beta}^{r/\alpha} \left(\sqrt{1-z^2}\right)^{n-3} dz
\le a'\cdot \int_{r/\alpha}^{r/\alpha \red{+}w} \left(\sqrt{1-z^2}\right)^{n-3} dz,
\end{equation}
for some positive constant $a'$.
It follows that
$$
\capp\left(\frac{r}{\beta}\right)-\capp\left(\frac{r}{\alpha}\right)
\le a'\cdot \capp\left(\frac{r}{\alpha}\right)
$$
and implies (\ref{lefteq}) by setting $a=a'+1$.
For (\ref{riririr}), note that the ratio of the $[r/\beta,r/\alpha]$-integration over
  the $[r/\alpha,r/\alpha\red{+}w]$-integration is at most
$$
\left(\frac{\sqrt{1-(r/\beta)^2}}{\sqrt{1-(r/\beta+2w)^2}}\right)^{n-3}
$$
as the length of the two intervals are the same and the function $(\sqrt{1-z^2})^{n-3}$
  is strictly decreasing. Figure~\ref{fig-rho} illustrates this calculation.
Let $\tau=r/\beta=\Theta(1/n^{1/4})$.
We can rewrite the~above~as
$$
\left(
\frac{1-\tau^2}{1-(\tau+2w)^2}
\right)^{(n-3)/2}
=\left(1+\frac{4\tau w+4w^2}{1-(\tau+2w)^2}\right)^{(n-3)/2}
=\left(1+O\left(\frac{1}{n}\right)\right)^{(n-3)/2}=O(1).
$$
This finishes the proof of the claim.
\end{proof}

\subsection{Distributions $\Eyes$ and $\Eno^*$ are close}\label{finalsec}

In the rest of the section we show that the total
  variation distance between $\Eyes$ and $\Eno^*$ is $o(1)$ and thus prove Lemma \ref{techlemma2}.
Let $z=(z_1,\ldots,z_q)$ be a sequence of $q$ points in $\R^n$.
We use $\Eyes(z)$ to denote the distribution of labeled samples from $\Eyes$,
  conditioning on the samples being $z$, i.e., $(z,\SS(z))$ with $\SS\leftarrow \Dyes$.
We let $\Eno^*(z)$ denote the distribution of labeled samples from $\Eno^*$,
  conditioning on the samples being $z$, i.e., $(z,\bb)$ where each $\bb_i$ is $1$ independently with probability
  $\rho(\|z_i\|)$.
Then
\begin{equation}\label{fuif}\dtv(\Eyes,\Eno^*)
=\E_{\zz\leftarrow (\gaus)^q} \Big[\dtv (\Eyes(\zz),\Eno^*(\zz))\Big].
\end{equation}

We split the proof of  Lemma \ref{techlemma2} into two steps.
We first introduce the notion of \emph{typical} sequences $z$ of $q$ points
  and show in this subsection that with probability $1-o(1)$, $\zz\leftarrow (\gaus)^q$ is typical.
In the next subsection we show that $\dtv(\Eyes(z),\Eno^*(z))$ is $o(1)$ when $z$ is typical.
It follows from (\ref{fuif}) that $\dtv(\Eyes,\Eno^*)$ is $o(1)$.
We start with the definition of typical sequences.

Given a point $z\in \R^n$, we are interested in the \emph{fraction} of points $y$
  (in terms of the area) in $S^{n-1}(r)$ such that $z\cdot y>r^2$.  This is because if any such point $y$ is sampled
  in the construction of $\SS\leftarrow \Dyes$, then $z\notin \SS$.
This is illustrated in Figure \ref{fig:pic}.
We refer to the set of such points $y$ as the \emph{\emph{(}spherical\emph{)} cap covered by $z$}
  and we write $\cover(z)$ to denote it.
(Note that $\cover(z)=\emptyset$ if $\|z\|\le r$.)

Given a subset $H$ of $S^{n-1}(r)$ (such as $\cover(z)$), we use
  $\fsa(H)$ to denote the fractional surface area of $H$ with respect to $S^{n-1}(r)$.
Using Figure \ref{fig:pic} and elementary geometry, we have the following connection between
  the fractional surface area of $\cover(z)$ and the cap function (for $S^{n-1}$):
\begin{equation} \label{eq:capcover}
\fsa\big(\cover(z)\big)=\capp\big(r/\|z\|\big).
\end{equation}

We are now ready to define typical sequences.

\begin{figure}[t!]
\centering
\includegraphics{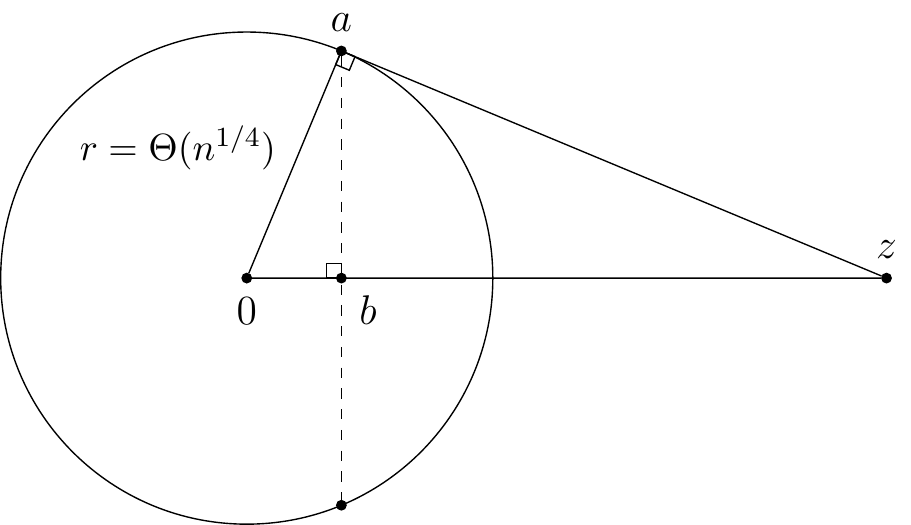}
\caption{The fractional surface area of $\cover(z)$, $\fsa(\cover(z))$, is the fraction of $S^{n-1}(r)$ to the right of the dashed line.
By similarity of triangles $0az$ and $0ba$, scaling down to the unit sphere, we get (\ref{eq:capcover}).}\label{fig:pic}
\end{figure}

\begin{definition}
We say a sequence $z=(z_1,\ldots,z_q)$ of $q$ points in $\R^n$ is \emph{typical} if
\begin{enumerate}
\item For every point $z_i$, we have
\begin{equation}\label{lulala}
\fsa\big(\cover(z_i)\big) \in \left[e^{-0.51\hspace{0.03cm} r^2 }, e^{-0.49\hspace{0.03cm}r^2}\right].
\end{equation}
\item For every $i \neq j$, we have
$$\fsa\big(\cover(z_i) \cap \cover(z_j)\big) \le
 e^{-0.96\hspace{0.03cm}r^2} .$$
\end{enumerate}
\end{definition}

The first condition of typicality essentially says that every $z_i$ is not too
  close to and not too far away from the origin (so that we have a relatively tight bound on the fractional surface area
  of the cap covered by $z_i$).
The second condition says that the caps covered by two points $z_i$ and $z_j$ have very little intersection.
We prove the following lemma:

\begin{lemma} \label{thm:most-z-are-typical}
$\zz\leftarrow (\gaus)^q$ is typical with probability at least \(1 - o(1)\).
\end{lemma}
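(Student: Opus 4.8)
\textbf{Proof proposal for Lemma~\ref{thm:most-z-are-typical}.}

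The plan is to bound the failure probabilities of the two typicality conditions separately and then apply a union bound. For the first condition, I would use~(\ref{eq:capcover}): $\fsa(\cover(z_i)) = \capp(r/\|z_i\|)$, so this condition fails for a fixed $z_i$ exactly when $r/\|z_i\|$ is such that $\capp(r/\|z_i\|) \notin [e^{-0.51 r^2}, e^{-0.49 r^2}]$. Using the bound $\capp(t) \le e^{-nt^2/2}$ from~(\ref{KOSinequality}) together with the matching lower bound $\capp(t) \ge \Omega(t \cdot e^{-nt^2/2})$ for $t = O(1/n^{1/4})$, and recalling $r = \Theta(n^{1/4})$, one sees that $\capp(r/t)$ as a function of $\|z_i\| = t$ is tightly pinned down (up to the small polynomial factor $\Omega(1/n^{1/4})$, which is absorbed into the slack between the exponents $0.49 r^2$ and $0.51 r^2$ since $r^2 = \Theta(\sqrt n) \gg \log n$) provided $\|z_i\| = \sqrt{n}(1 \pm o(1))$, i.e.\ provided $\big|\|z_i\|^2 - n\big|$ is at most, say, $n/100$. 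By Lemma~\ref{lem:johnstone} (the $\chi^2$ tail bound), a single $\bz_i \leftarrow \gaus$ violates this with probability at most $e^{-\Omega(n)}$, so by a union bound over the $q = 2^{0.01\sqrt n}$ points, condition~1 fails with probability at most $q \cdot e^{-\Omega(n)} = o(1)$.

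For the second condition I would fix a pair $i \ne j$ and condition on both $\bz_i, \bz_j$ being ``radially good'' in the sense above (which happens with probability $1 - o(1)$ by the previous paragraph, so it is harmless to condition on it). The key observation is that $\cover(\bz_i) \cap \cover(\bz_j)$ is, up to rescaling to the unit sphere, the intersection of two spherical caps each of fractional area at most $e^{-0.49 r^2}$ whose ``axes'' are the directions $\bz_i/\|\bz_i\|$ and $\bz_j/\|\bz_j\|$; the measure of this intersection is governed by the angle $\boldsymbol{\theta}$ between $\bz_i$ and $\bz_j$. For independent Gaussians in high dimension this angle concentrates sharply around $\pi/2$: the normalized inner product $\langle \bz_i/\|\bz_i\|, \bz_j/\|\bz_j\|\rangle$ is distributed as a coordinate of a uniform point on $S^{n-1}$, hence has magnitude $O(\sqrt{\log q / n}) = o(1)$ except with probability $q^{-\omega(1)}$. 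When the two axes are nearly orthogonal, the cap intersection $\cover(\bz_i) \cap \cover(\bz_j)$ is contained in a cap of fractional area roughly $\capp(\cdot)^2$-scale — more carefully, one shows $\fsa(\cover(\bz_i) \cap \cover(\bz_j)) \le \capp(c/\|\bz_i\|)\cdot(\text{something})$ but the clean route is: the intersection is contained in $\{y \in S^{n-1}(r): y \cdot \bz_i > r^2 \text{ and } y \cdot \bz_j > r^2\}$, and since $\bz_i + \bz_j$ has norm $\approx \sqrt{2n}$ (orthogonality) any such $y$ satisfies $y \cdot (\bz_i + \bz_j) > 2 r^2$, which forces $y$ into a cap around the direction $(\bz_i+\bz_j)/\|\bz_i+\bz_j\|$ of half-angle whose cosine is $\approx 2r^2/(r \sqrt{2n}) = r\sqrt{2/n}$. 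Plugging into $\capp(t) \le e^{-nt^2/2}$ with $t \approx r\sqrt{2/n}$ gives fractional area $\le e^{-r^2}$, comfortably below the required $e^{-0.96 r^2}$ once the $o(1)$ corrections from the near-orthogonality and from $\|\bz_i\|,\|\bz_j\| = \sqrt n(1\pm o(1))$ are accounted for (again using $r^2 \gg \log n$ to absorb polynomial slack). A union bound over the $\binom{q}{2} < q^2$ pairs then shows condition~2 fails with probability at most $q^2 \cdot q^{-\omega(1)} + o(1) = o(1)$.

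Combining the two union bounds, $\zz \leftarrow (\gaus)^q$ is typical with probability $1 - o(1)$, proving the lemma. The main obstacle I anticipate is the second condition: getting a clean, correct bound on the fractional surface area of the intersection of two caps in terms of the angle between their axes, and verifying that the high-dimensional concentration of that angle around $\pi/2$ (together with the radial concentration of $\|\bz_i\|$) really does push the intersection area below $e^{-0.96 r^2}$ with room to spare. The first condition, by contrast, is a routine consequence of~(\ref{eq:capcover}), the explicit $\capp$ estimates~(\ref{KOSinequality}), and Lemma~\ref{lem:johnstone}.
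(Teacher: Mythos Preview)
Your argument for Condition~1 is essentially the paper's: concentrate $\|\bz_i\|$ around $\sqrt{n}$ via Lemma~\ref{lem:johnstone}, translate through~(\ref{eq:capcover}) and~(\ref{KOSinequality}), and union-bound over $q$ points.

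For Condition~2, your route is correct but genuinely different from the paper's. You argue geometrically: concentrate the angle between $\bz_i$ and $\bz_j$ near $\pi/2$, so that $\|\bz_i+\bz_j\|\approx\sqrt{2n}$, then note that any $y\in\cover(\bz_i)\cap\cover(\bz_j)$ satisfies $y\cdot(\bz_i+\bz_j)>2r^2$, forcing $y$ into a single cap with parameter $t\approx r\sqrt{2/n}$ and hence fractional area at most $e^{-r^2(1-o(1))}$. This works; one just has to check (as you anticipate) that the multiplicative error in $\|\bz_i+\bz_j\|$ coming from both the radial deviation $\|\bz_k\|=\sqrt{n}(1\pm O(1/100))$ and the inner-product deviation $O(n^{3/4})$ translates, after squaring in the exponent, into an additive $o(r^2)$ error---which it does since $r^2=\Theta(\sqrt{n})$ dominates the $O(n^{1/4})$ and $O(1)$ correction terms. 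The paper instead uses a short symmetry-and-Markov argument: by exchanging the order of the randomness, $\E_{\bz_i,\bz_j}[\fsa(\cover(\bz_i)\cap\cover(\bz_j))]=\big(\Pr_{\bz}[x_0\in\cover(\bz)]\big)^2\le e^{-0.98\,r^2}$ for any fixed $x_0\in S^{n-1}(r)$, and then Markov plus a union bound over $q^2$ pairs finishes. The paper's approach is slicker---no cap-intersection geometry, no angle concentration---while yours is more explicit about \emph{why} the intersection is small (near-orthogonality) and in fact yields a slightly stronger pointwise bound $e^{-(1-o(1))r^2}$ rather than the $e^{-0.96\,r^2}$ obtained from Markov.
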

\begin{proof}
We show that $\zz$ satisfies each of the two conditions with probability $1-o(1)$.
The lemma then follows from a union bound.

For the first condition, we let $c^*=0.001$ be a sufficiently small constant.
We have from Lemma \ref{johnstone} and a union bound that every
  $\zz_i$ satisfies $(1-c^*)\sqrt{n}\le \|\zz_i\|\le (1+c^*)\sqrt{n}$ with probability $1-o(1)$.
When this happens, we have (\ref{lulala}) for every $\zz_i$ using
  (\ref{KOSinequality}) and the upper bound of $\capp(t)\le e^{-nt^2/2}$.

For the second condition, we first note that the argument used in the first part implies that
$$
\E_{\bz_i\leftarrow\gaus}\Big[\fsa\big(\cover(\bz_i)\big)\Big]\le e^{-0.49\hspace{0.03cm}r^2}.
$$
Let $x_0$ be a fixed point in $S^{n-1}(r)$.
Viewing the fractional surface area as the following probability
$$
\fsa\big(\cover(z_i)\big) = \Pr_{\bx\leftarrow S^{n-1}(r)} \big[\bx\in \cover(z_i)\big],
$$
we have
\begin{align}
e^{-0.49\hspace{0.03cm}r^2}&\ge \E_{\bz_i\leftarrow \gaus}\Big[\fsa\big(\cover(\bz_i)\big)\Big] \label{eq:zzzz}\\
&=\E_{\bz_i}\Big[\Pr_{\bx\leftarrow S^{n-1}(r)} \big[\bx\in \cover(\zz_i)\big]\Big] \nonumber \\[0.5ex]
&=\Pr_{\bx,\zz_i}\big[\bx\in \cover(\zz_i)\big] =\Pr_{\zz_i}\big[x_0\in \cover(\zz_i)\big], \nonumber
\end{align}
where the last equation follows by sampling $\bx$ first and spherical and Gaussian symmetry.

Similarly we can express the fractional surface area of
  $\cover(z_i)\cap \cover(z_j)$ as
$$\fsa\big(\cover(z_i)\cap \cover(z_j)\big)=\Pr_{\bx\leftarrow S^{n-1}(r)} \big[\bx\in \cover(z_i)\ \text{and}\
  \bx\in \cover(z_j)\big].$$
We consider the expectation over $\zz_i$ and $\zz_j$ drawn independently from $\gaus$:
\begin{align*}
&\E_{\zz_i,\zz_j}\Big[\fsa\big(\cover(\zz_i)\cap \cover(\zz_j)\big)\Big]\\
&=\E_{\zz_i,\zz_j}\Big[\Pr_{\bx\leftarrow S^{n-1}(r)} \big[\bx\in \cover(\zz_i)\ \text{and}\
  \bx\in \cover(\zz_j)\big]\Big]\\
&=\Pr_{\bx,\zz_i,\zz_j}\big[\bx\in \cover(\zz_i)\ \text{and}\
  \bx\in \cover(\zz_j)\big]=\Pr_{\zz_i}\big[x_0\in \cover(\zz_i)\big]\cdot \Pr_{\zz_j}\big[x_0\in \cover(\zz_j)\big],
\end{align*}
where the last equation follows by sampling $\bx$ first, independence of $\zz_i$ and $\zz_j$, and symmetry.

By (\ref{eq:zzzz}), the expectation of $\fsa(\cover(\zz_i)\cap \cover(\zz_j))$ is
  at most $\smash{e^{-0.98\hspace{0.03cm}r^2}}$, and hence by Markov's inequality,
the probability of it being at least $\smash{e^{-0.96\hspace{0.03cm}r^2}}$
  is at most $\smash{e^{-0.02\hspace{0.03cm}r^2}}$.
Using $\smash{e^{r^2}\ge (N/n)^2}$ and a union bound, the probability of one of the pairs having
  the $\fsa$ at least $\smash{e^{-0.96\hspace{0.03cm}r^2}}$ is at most
$$
q^2 \cdot e^{-0.02r^2}\le 2^{0.02\hspace{0.03cm}\sqrt{n}}\cdot (n/N)^{0.04}=o(1),
$$
since $q=2^{0.01\sqrt{n}}$ and $N=2^{\sqrt{n}}$.
This finishes the proof of the lemma.
\end{proof}

We prove the following lemma in Section \ref{hehefinal} to finish the proof of Lemma \ref{techlemma2}.

\begin{lemma} \label{thm:typical-z-are-good}
For every typical sequence $z$ of $q$ points, we have \(d_{\text{TV}}\big(\Eyes(z), \Eno^*(z)\big) = o(1)\).
\end{lemma}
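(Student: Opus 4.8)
The plan is to show that the only dependence among the labels $\SS(z_1),\ldots,\SS(z_q)$ under $\Eyes(z)$ --- the fact that they are all determined by the \emph{same} random points $\by_1,\ldots,\by_N\leftarrow S^{n-1}(r)$ --- is too weak to affect the law of the label vector by more than $o(1)$ in total variation distance. Write $s_i:=\fsa(\cover(z_i))=\capp(r/\|z_i\|)$; by (\ref{eq:rhodef}) and (\ref{eq:capcover}) the distributions $\Eyes(z)$ and $\Eno^*(z)$ have the \emph{same} one-dimensional marginals, since for $\SS\leftarrow\Dyes$ one has $z_i\in\SS$ exactly when none of $\by_1,\ldots,\by_N$ lies in $\cover(z_i)$, an event of probability $(1-s_i)^N=\rho(\|z_i\|)$. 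I would prove the lemma by interpolating from $\Eyes(z)$ to $\Eno^*(z)$ through one intermediate distribution, bounding each of the two steps by subadditivity of total variation distance across the $N$ independent ``draws'' that generate the labels.

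First I reformulate $\Eyes(z)$: for $j\in[N]$ put $\boldsymbol{A}_j:=\{i\in[q]:\by_j\in\cover(z_i)\}$, so the $\boldsymbol{A}_j$ are i.i.d.\ random subsets of $[q]$ and $\Eyes(z)$ is exactly the law of the bit vector whose $i$th entry is $1$ iff $i\notin\boldsymbol{A}_1\cup\cdots\cup\boldsymbol{A}_N$. The first step replaces $\boldsymbol{A}_j$ by a surrogate $\boldsymbol{A}'_j$ that equals $\{i\}$ with probability $s_i$ for each $i\in[q]$ and equals $\emptyset$ with the remaining probability $1-\Sigma$, where $\Sigma:=\sum_i s_i$; this is a legitimate distribution because the first typicality condition gives $\Sigma\le q\,e^{-0.49r^2}=o(1)$, using $e^{r^2}\ge(N/n)^2$, $N=2^{\sqrt n}$ and $q=2^{0.01\sqrt n}$. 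Combining the second typicality condition with the Bonferroni inequalities gives $\dtv(\boldsymbol{A}_j,\boldsymbol{A}'_j)=O\bigl(q^2 e^{-0.96r^2}\bigr)$, since the discrepancies at $\emptyset$, at each singleton, and on the sets of size $\ge2$ are each controlled by $\sum_{i<k}\fsa(\cover(z_i)\cap\cover(z_k))\le q^2 e^{-0.96r^2}$. As the output bit vector is a fixed function of $(\boldsymbol{A}_1,\ldots,\boldsymbol{A}_N)$, subadditivity shows that $\Eyes(z)$ is within $O\bigl(Nq^2 e^{-0.96r^2}\bigr)=o(1)$ (again by the parameter bounds above) of the law $\mu$ of the bit vector whose $i$th entry is $1$ iff $i\notin\boldsymbol{A}'_1\cup\cdots\cup\boldsymbol{A}'_N$.

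The second step shows $\dtv(\mu,\Eno^*(z))=o(1)$. Here I would compare the singleton-or-empty draws $\boldsymbol{A}'_j$ with ``fully independent'' draws $\boldsymbol{E}_j\in\{0,1\}^q$ whose coordinates are independent $\mathrm{Bernoulli}(s_i)$, where the $i$th output bit is $1$ iff the $i$th coordinate of every $\boldsymbol{E}_j$ is $0$ (this is the same function of the draws as before). On the one hand the resulting surrogate output is exactly $\Eno^*(z)$: its coordinates are genuinely independent and the $i$th is $1$ with probability $\prod_j(1-s_i)=(1-s_i)^N=\rho(\|z_i\|)$. On the other hand $\dtv(\boldsymbol{A}'_j,\boldsymbol{E}_j)=O(\Sigma^2)$, since both distributions put all but an $O(\Sigma^2)$ fraction of their mass on $\emptyset$ and the singletons $\{i\}$ (with singleton masses $s_i$ versus $s_i\prod_{k\ne i}(1-s_k)$, empty-set masses $1-\Sigma$ versus $\prod_k(1-s_k)$, differing by $O(\Sigma^2)$). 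Subadditivity over the $N$ draws then gives $\dtv(\mu,\Eno^*(z))=O(N\Sigma^2)\le O\bigl(Nq^2 e^{-0.98r^2}\bigr)=o(1)$. The lemma follows from the triangle inequality.

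The hard part will be the two per-draw total variation estimates: each individual discrepancy is tiny but gets amplified by $N=2^{\sqrt n}$, so the exponents of $e^{-r^2}$ must come out strictly above $0.9$ --- which is precisely why typicality is stated with the constants $0.96$ and $0.49$ and why the slack $e^{r^2}\ge(N/n)^2$ secured in Claim~\ref{claim1} is needed. Verifying that $\boldsymbol{A}'_j$ really is $O(q^2 e^{-0.96r^2})$-close to $\boldsymbol{A}_j$ (checking all three types of discrepancy against the pairwise-intersection bound) and that $\boldsymbol{E}_j$ is $O(\Sigma^2)$-close to $\boldsymbol{A}'_j$ is the crux; the remaining steps are routine bookkeeping with subadditivity and the triangle inequality.
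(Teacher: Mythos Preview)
Your proposal is correct and takes a genuinely different route from the paper's proof. The paper couples $\Eyes(z)$ and $\Eno^*(z)$ \emph{row by row} over the $q$ sample points: it builds a sequential coupling in which, at round $i$, both $\bb_i$ and $\dd_i$ are set by thresholding a common uniform $\br_i$, with the threshold for $\bb_i$ being the conditional probability $\Pr[\SS(z_i)=1\mid \bM^{(i-1)}]$. The work is then to show that the $q\times N$ indicator matrix $\bM$ is ``nice'' (few zeros, at most one zero per column) with probability $1-o(1/q)$, and that conditional on any nice prefix the conditional probability is $\rho(\|z_i\|)\pm o(1/q)$; this requires separate calculations for columns with zero and one $0$-entries (the paper's Claims~\ref{lem-goodj} and~\ref{lem-badj}).

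You instead decompose \emph{column by column} over the $N$ independent halfspace draws: you rewrite $\Eyes(z)$ as a deterministic function of i.i.d.\ random subsets $\boldsymbol{A}_1,\ldots,\boldsymbol{A}_N\subseteq[q]$, then replace each $\boldsymbol{A}_j$ first by a singleton-or-empty surrogate $\boldsymbol{A}'_j$ and then by a fully product-Bernoulli $\boldsymbol{E}_j$, invoking subadditivity of total variation over product measures at each step. This is more elementary --- no conditional-probability calculations or ``nice matrix'' case analysis, just two explicit per-draw TV bounds of size $O(q^2 e^{-0.96r^2})$ and $O(\Sigma^2)$ multiplied by $N$ --- and it uses exactly the same typicality constants and the inequality $e^{r^2}\ge(N/n)^2$ from Claim~\ref{claim1} to close the numerics. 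The paper's coupling is perhaps more visually direct as a round-by-round story, but your argument is shorter and more modular.
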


\subsection{Proof of Lemma~\ref{thm:typical-z-are-good}}\label{hehefinal}

Fix a typical $z=(z_1,\ldots,z_q)$.
Our goal is to show that the total variation distance of $\Eyes(z)$ and $\Eno^*(z)$ is $o(1)$.
To this end, we define a distribution $\calF$ over pairs $(\bb,\dd)$ of strings in $\{0,1\}^q$ (as a coupling
  of $\Eyes(z)$ and $\Eno^*(z)$), where the
  marginal distribution of $\bb$ as $(\bb,\dd)\leftarrow \calF$ is the same as $\Eyes(z)$ and
  the marginal distribution of $\dd$ is the same as $\red{\Eno^\ast}(z)$.
Our goal follows by establishing
\begin{equation}\label{maineq}
\Prx_{(\bb,\dd)\leftarrow \calF} \big[\bb\ne \dd\big]=o(1).
\end{equation}

To define $\calF$, we use $\bM$ to denote the $q\times N$ $\{0,1\}$-valued random matrix
  derived from $z$ and $\SS\leftarrow \Dyes$ (recall that $\SS$ is the intersection of
  $N$ random halfspaces $\bh_j$, $j\in [N]$):
the $(i,j)$th entry $\bM_{i,j}$ of $\bM$ is $1$ if $\bh_j(z_i)=1$ (i.e., $z_i\in \bh_j$)
  and is $0$ otherwise.
We use $\bM_{i,*}$ to denote the $i$th row of $\bM$,
  $\bM_{*,j}$ to denote the $j$th column of $\bM$,
  and $\bM^{(i)}$ to denote the $i\times N$ sub-matrix of $\bM$  that consists of the first $i$ rows of $\bM$.
(We note that $\bM$ is derived from $\SS$ and they are defined over the same probability space.
So we may consider the (conditional) distribution of $\SS\leftarrow \Dyes$
  conditioning on an event involving $\bM$, and we may consider the conditional  distribution of $\bM$
  conditioning on an event involving $\SS$.)

\def\barbx{\mathbf{b}}
\def\barby{\mathbf{d}}

We now define the distribution $\calF$.
A pair $(\bb,\dd)\leftarrow \calF$ is drawn using the following randomized procedure.
The procedure has $q$ rounds and generates the $i$th bits $\bb_i$ and $\dd_i$ in the $i$th round:
\begin{flushleft}\begin{enumerate}
\item In the first round, we draw a random real number $\br_1$ from $[0,1]$ uniformly at random.
We set $\bb_1=1$ if $\br_1\le \Pr_{\SS\leftarrow \Dyes} [\SS(z_1)=1]$ and set $\bb_1=0$ otherwise.
We then set $\barby_1=1$ if $\br_1\le \rho(\|z_1\|)$ and set $\barby_1=0$ otherwise.
(Note that for the first round, the two thresholds are indeed the same
  so we always have $\barbx_1=\barby_1$.)
At the end of the first round, we also draw a row vector $\bN_{1,*}$ according to the
  distribution of $\bM_{1,*}$ conditioning on $\SS(z_1)=\barbx_1$.

\item In the $i$th round, for $i$ from $2$ to $q$, we draw a random real number $\br_i$ from $[0,1]$ uniformly at random.
We set $\barbx_i=1$ if we have $$\br_i\le \Ppr_{\SS\leftarrow \Dyes}\Big[\SS(z_i)=1\hspace{0.06cm}\big|\hspace{0.06cm}\bM^{(i-1)}=\bN^{(i-1)}\Big]$$
and set $\barbx_i=0$ otherwise.
We then set $\barby_i=1$ if $\br_i\le \rho(\|z_i\|)$ and set $\barby_i=0$ otherwise.
At the end of the $i$th round, we also draw a row vector $\bN_{i,*}$ according to the
  distribution of $\bM_{i,*}$ conditioning on $\bM^{(i-1)}=\bN^{(i-1)}$ and $\SS(z_i)=\barbx_i$.
\end{enumerate}\end{flushleft}
It is clear that the marginal distributions of $\bb$ and $\dd$, as $(\bb,\dd)\leftarrow \calF$,
  are $\Eyes$ and $\Eno^*$ respectively.

To prove (\ref{maineq}), we introduce the following notion of \emph{nice} and \emph{bad} matrices.
\begin{definition}
Let $M$ be an $i\times N$ $\{0,1\}$-valued matrix for some $i\in [q]$.
We say $M$ is \emph{nice} if
\begin{enumerate}
\item $M$ has at most $\sqrt{N}$ many $0$-entries; and\vspace{-0.1cm}
\item Each column of $M$ has at most one 0-entry.
\end{enumerate}
We say $M$ is \emph{bad} otherwise.
\end{definition}

We prove the following two lemmas and use them to prove (\ref{maineq}).

\begin{lemma}\label{lem1}
$\Pr_{\SS\leftarrow \Dyes }\big[\bM\ \text{is bad}\big]=o(1/q)$.
\end{lemma}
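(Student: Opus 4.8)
The plan is to control the two "bad" events separately via a union bound. First I would estimate the expected number of $0$-entries in $\bM$. For a fixed row $i$, the entry $\bM_{i,j}$ is $0$ precisely when $z_i \cdot \by_j > r^2$, i.e. when $\by_j \in \cover(z_i)$; since $\by_j$ is uniform on $S^{n-1}(r)$, this happens with probability exactly $\fsa(\cover(z_i))$. By typicality (condition 1 of the definition), $\fsa(\cover(z_i)) \le e^{-0.49 r^2}$, so the expected number of $0$-entries in $\bM$ is at most $qN e^{-0.49 r^2}$. Using $e^{r^2/2} \ge N/n$ (equivalently $e^{r^2} \ge (N/n)^2$, established in Claim~\ref{claim1}), together with $q = 2^{0.01\sqrt n}$ and $N = 2^{\sqrt n}$, this expectation is $qN \cdot e^{-0.49 r^2} \le qN \cdot (n/N)^{0.98} = q n^{0.98} N^{0.02}$, which is $2^{(0.01 + 0.02)\sqrt n + O(\log n)} = 2^{(0.03 + o(1))\sqrt n} \ll \sqrt N = 2^{0.5\sqrt n}$. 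Hence by Markov's inequality the probability that $\bM$ has more than $\sqrt N$ many $0$-entries is at most (expected count)$/\sqrt N = 2^{-(0.47 - o(1))\sqrt n} = o(1/q)$, since $1/q = 2^{-0.01\sqrt n}$.

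Next I would bound the probability that some column of $\bM$ has two or more $0$-entries. Fix a column index $j$ and a pair $i \ne i'$: the event that both $\bM_{i,j} = 0$ and $\bM_{i',j} = 0$ means $\by_j \in \cover(z_i) \cap \cover(z_{i'})$, which (again since $\by_j$ is uniform on $S^{n-1}(r)$) has probability exactly $\fsa(\cover(z_i) \cap \cover(z_{i'})) \le e^{-0.96 r^2}$ by typicality condition 2. Union-bounding over the at most $N$ columns and at most $q^2$ pairs, the probability that some column has two $0$-entries is at most $N q^2 e^{-0.96 r^2} \le N q^2 (n/N)^{1.92} = q^2 n^{1.92} N^{-0.92}$, which is $2^{0.02\sqrt n - 0.92\sqrt n + O(\log n)} = 2^{-(0.9 - o(1))\sqrt n} = o(1/q)$.

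Combining the two estimates by a union bound gives $\Pr_{\SS \leftarrow \Dyes}[\bM \text{ is bad}] = o(1/q)$, as claimed. The only mild subtlety — not really an obstacle — is making sure the exponents actually clear the relevant thresholds, i.e. that $qN e^{-0.49 r^2}$ beats $\sqrt N$ and that $Nq^2 e^{-0.96 r^2}$ beats a constant, both with room to spare so the bound is $o(1/q)$ rather than merely $o(1)$; this is exactly where the choice $q = 2^{0.01\sqrt n}$ (with the small constant $0.01$) and the inequality $e^{r^2/2} \ge N/n$ are used, and both hold comfortably.
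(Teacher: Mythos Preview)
Your proposal is correct and follows essentially the same approach as the paper's proof: bound the expected number of $0$-entries by $qN e^{-0.49r^2}$ and apply Markov's inequality for the first condition, then union-bound over all $j$ and pairs $i\neq i'$ using $\fsa(\cover(z_i)\cap\cover(z_{i'}))\le e^{-0.96r^2}$ for the second. Your explicit tracking of the exponents (showing the bounds are $2^{-\Theta(\sqrt{n})}$, hence $o(1/q)$) is exactly what the paper does more tersely with the assertions $qN e^{-0.49r^2}=o(\sqrt{N}/q)$ and $q^2N e^{-0.96r^2}=o(1/q)$.
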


Note that when $\bM$ is nice, we have by definition that $\bM^{(i)}$ is also nice for every $i\in [q]$.

\begin{lemma}\label{lem2}
For any nice $(i-1)\times N$  $\{0,1\}$-valued matrix $M^{(i-1)}$, we have
\begin{equation}\label{huha}
\Prx_{\SS\leftarrow \Dyes}\Big[\SS(z_i)=1\hspace{0.06cm}\big|\hspace{0.06cm}
\bM^{(i-1)}=M^{(i-1)}\Big]=\rho(\|z_i\|)\pm o(1/q).
\end{equation}
\end{lemma}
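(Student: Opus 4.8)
The plan is to show that conditioning on a \emph{nice} matrix $M^{(i-1)}$ in the first $i-1$ rows barely changes the probability that $z_i\in\SS$. First I would write the unconditioned probability explicitly: $\Pr_{\SS\leftarrow\Dyes}[\SS(z_i)=1] = (1-\fsa(\cover(z_i)))^N = \rho(\|z_i\|)$, since $\SS(z_i)=1$ iff none of the $N$ i.i.d.\ points $\by_1,\dots,\by_N$ lands in $\cover(z_i)$, and each does so with probability $\fsa(\cover(z_i)) = \capp(r/\|z_i\|)$ by~(\ref{eq:capcover}), matching~(\ref{eq:rhodef}). Now condition on $\bM^{(i-1)}=M^{(i-1)}$ with $M^{(i-1)}$ nice. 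The key observation is that this event is an event \emph{per column}: for each $j\in[N]$ it specifies, for the first $i-1$ rows, which of $z_1,\dots,z_{i-1}$ lie in $\bh_j$ — equivalently it constrains the location of $\by_j$ relative to $\cover(z_1),\dots,\cover(z_{i-1})$ — and the columns are mutually independent since the $\by_j$ are i.i.d. So the conditional probability factorizes: $\Pr[\SS(z_i)=1\mid \bM^{(i-1)}=M^{(i-1)}] = \prod_{j=1}^N \Pr[\by_j\notin\cover(z_i)\mid \text{column-}j\text{ constraint}]$.

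Next I would bound each factor. By niceness, each column of $M^{(i-1)}$ has at most one $0$-entry. A column with \emph{no} $0$-entry imposes the constraint $\by_j\notin\cover(z_k)$ for all $k\le i-1$; a column with one $0$-entry (say in row $k_0$) imposes $\by_j\in\cover(z_{k_0})$ and $\by_j\notin\cover(z_k)$ for $k\ne k_0$. For a no-$0$ column, $\Pr[\by_j\notin\cover(z_i)\mid \by_j\notin\cup_{k}\cover(z_k)] = 1 - \Pr[\by_j\in\cover(z_i),\ \by_j\notin\cup_k\cover(z_k)]/\Pr[\by_j\notin\cup_k\cover(z_k)]$; the numerator is between $\fsa(\cover(z_i)) - (i-1)\max_{k}\fsa(\cover(z_i)\cap\cover(z_k))$ and $\fsa(\cover(z_i))$, and by typicality (condition~2) the correction is at most $q\cdot e^{-0.96r^2}$, while the denominator is $1-o(1)$ since $\sum_k\fsa(\cover(z_k))\le q\,e^{-0.49r^2}=o(1)$ (condition~1, using $q=2^{0.01\sqrt n}$ and $r=\Theta(n^{1/4})$, so $e^{-0.49r^2}\le (n/N)^{0.98}$). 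So each no-$0$ factor equals $1-\fsa(\cover(z_i)) \pm O(q\,e^{-0.96r^2})$. For a one-$0$ column, $\Pr[\by_j\notin\cover(z_i)\mid \by_j\in\cover(z_{k_0})\setminus\cup_{k\ne k_0}\cover(z_k)] = 1 - \Pr[\by_j\in\cover(z_i)\cap\cover(z_{k_0}), \dots]/\Pr[\by_j\in\cover(z_{k_0}),\dots]$; numerator $\le\fsa(\cover(z_i)\cap\cover(z_{k_0}))\le e^{-0.96r^2}$ and denominator $\ge\fsa(\cover(z_{k_0})) - q\,e^{-0.96r^2}\ge e^{-0.51r^2}(1-o(1))$, so this factor is $1 - O(e^{-0.45r^2})$, hence $1\pm o(1)$ as well, but crucially there are at most $\sqrt N$ such columns by niceness (condition~1 of nice).

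Finally I would multiply the factors. There are at least $N-\sqrt N$ no-$0$ columns, each contributing $1-\fsa(\cover(z_i))\pm O(q e^{-0.96 r^2})$, and at most $\sqrt N$ one-$0$ columns, each contributing $1 - O(e^{-0.45r^2})$. Taking logarithms: $\log\prod = (N\pm\sqrt N)\log(1-\fsa(\cover(z_i))) + (N\pm\sqrt N)\cdot O(q e^{-0.96r^2}/\fsa) + \sqrt N\cdot O(e^{-0.45r^2})$. Since $\fsa(\cover(z_i))\in[e^{-0.51r^2},e^{-0.49r^2}]$ and $N = e^{r^2}\cdot(\text{poly})$ roughly (more precisely $1/N=\capp(r/\alpha)\approx e^{-r^2/2}$ up to $\mathrm{poly}(n)$, and $e^{r^2}\ge(N/n)^2$), one checks: $\sqrt N\cdot\fsa(\cover(z_i)) = o(1)$ handles the $\pm\sqrt N$ in the main term, $N q e^{-0.96r^2}/\fsa \le N q e^{-0.96r^2}e^{0.51r^2} = Nq e^{-0.45r^2}$, and using $N\le e^{r^2/2}\mathrm{poly}(n)$ this is $q\,e^{0.05 r^2}\mathrm{poly}(n) = o(1)$ since $r^2=\Theta(\sqrt n)$ and $q = 2^{0.01\sqrt n}$ — wait, that needs care. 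The main obstacle, and where I expect to spend the real effort, is precisely this bookkeeping: verifying that all error terms ($\sqrt N$ missing factors, the intersection corrections $q e^{-0.96r^2}$, the one-$0$-column factors) are each $o(1/q)$ after multiplying $N$ of them, which forces one to track the exact relationship $N = 2^{\sqrt n}$, $e^{r^2/2}\ge N/n$ (so $r^2 \ge 2\sqrt n - 2\log n$), and $q = 2^{0.01\sqrt n}$ carefully — the typicality exponents $0.49, 0.51, 0.96$ were chosen exactly so these margins close, and getting the chain of inequalities right (e.g.\ that $q^2 N e^{-0.45 r^2} = o(1)$, using $e^{-0.45r^2}\le e^{-0.9\sqrt n}n^{0.9}$ and $q^2 N = 2^{1.02\sqrt n}$) is the delicate part. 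Once the conditional probability is shown to equal $(1-\fsa(\cover(z_i)))^N \cdot (1\pm o(1/q)) = \rho(\|z_i\|)(1\pm o(1/q)) = \rho(\|z_i\|)\pm o(1/q)$, the lemma follows.
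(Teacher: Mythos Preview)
Your approach is essentially the paper's: factorize the conditional probability over the $N$ independent columns, split into all-$1$ columns and columns with a single $0$, bound each factor using the two typicality conditions, and multiply. The paper packages the no-$0$-column bound directly as $(1-\tau)\bigl(1\pm o(1/(qN))\bigr)$ (Claim~\ref{lem-goodj}) and the one-$0$-column bound as $1-O(e^{-0.45r^2})$ (Claim~\ref{lem-badj}), which makes the final product immediate without passing to logarithms.

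Your one real slip is in the log step: when you convert the additive error in $1-\fsa(\cover(z_i))\pm O(qe^{-0.96r^2})$ to a multiplicative one, the denominator should be $1-\fsa(\cover(z_i))\approx 1$, \emph{not} $\fsa(\cover(z_i))$. This is exactly why you land on $Nq\,e^{-0.45r^2}$, which indeed is \emph{not} $o(1)$ and triggers your ``wait, that needs care.'' With the correct denominator the middle term is simply $N\cdot O(qe^{-0.96r^2})$, and since $e^{-0.96r^2}\le (n/N)^{1.92}$ (from $e^{r^2/2}\ge N/n$) this is at most $q\,n^{1.92}/N^{0.92}=o(1/q)$. Your other two error contributions are fine as stated: $\sqrt N\cdot\fsa(\cover(z_i))\le \sqrt N\,e^{-0.49r^2}\le n^{0.98}/N^{0.48}=o(1/q)$ handles the $\pm\sqrt N$ in the main term, and $\sqrt N\cdot O(e^{-0.45r^2})\le n^{0.9}/N^{0.4}=o(1/q)$ handles the one-$0$ columns. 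With that correction the argument closes exactly as the paper's does.
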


Before proving Lemma \ref{lem1} and \ref{lem2},
  we first use them to prove (\ref{maineq}).
Let $\bI_i$ denote the indicator random variable that is $1$ if $(\barbx,\barby)\leftarrow \calE$ has
  $\barbx_i\ne \barby_i$ and is $0$ otherwise, for each $i\in [q]$.
Then (\ref{maineq}) can be bounded from above by
  $\sum_{i\in [q]} \Pr[\bI_i=1]$.
To bound each $\Pr[\bI_i=1]$ we split the event into
$$
\sum_{M^{(i-1)}} \Pr\big[\bN^{(i-1)}=M^{(i-1)}\big]\cdot
  \Pr\big[\bI_i=1\hspace{0.06cm}|\hspace{0.06cm}\bN^{(i-1)}=M^{(i-1)}\big],
$$
where the sum is over all $(i-1)\times N$ $\{0,1\}$-valued matrices $M^{(i-1)}$,
and further split the sum into two sums over nice and
  bad matrices $M^{(i-1)}$.
As  $\bN^{(i-1)}$ has the same
  distribution as $\bM^{(i-1)}$, it follows from Lemma \ref{lem1} (and the fact that $\bM$ is bad when
  $\bM^{(i-1)}$ is bad) that the sum over bad $M^{(i-1)}$ is
  at most $o(1/q)$.
On the other hand, it follows from Lemma \ref{lem2} that the sum over nice $ M^{(i-1)}$
  is $o(1/q)$.
As a result, we have $\Pr[\bI_i=1]=o(1/q)$ and thus, $\sum_{i\in [q]} \Pr[\bI_i=1]=o(1)$.

We prove Lemmas \ref{lem1} and \ref{lem2} in the rest of the section.

\begin{proof}[Proof of Lemma \ref{lem1}]
We show that the probability of $\bM$ violating each of the two conditions in the definition
  of nice matrices is $o(1/q)$. The lemma
  then follows by a union bound.

For the first condition, since $z$ is typical the probability of $\bM_{i,j}=0$ is
$$
\fsa\big(\cover(z_i)\big)\le e^{-0.49\hspace{0.03cm}r^2}.
$$
By linearity of expectation, the expected number of $0$-entries in $\bM$ is at most
$$qN\cdot e^{-0.49\hspace{0.03cm}r^2} = o(\sqrt{N}/q),$$
using $e^{r^2/2}\ge N/n$, $N=2^{\sqrt{n}}$ and $q=2^{0.01\sqrt{n}}$.
It follows directly from Markov's inequality that the probability of
  $\bM$ having more than $\sqrt{N}$ many $0$-entries is $o(1/q)$.

For the second condition, again since $z$ is typical, the probability of $\bM_{i,j}=\bM_{i',j}=1$ is
$$
\fsa\big(\cover(z_i)\cap \cover(z_i')\big)\le e^{-0.96\hspace{0.03cm}r^2}.
$$
By a union bound, the probability of $\bM_{i,j}=\bM_{i',j}=1$ for some $i,i',j$ is at most
$$q^2N\cdot e^{\red{-0.96r^2}\hspace{0.03cm}}=o(1/q).$$
This finishes the proof of the lemma.
\end{proof}

Finally we prove Lemma \ref{lem2}.
Fix a nice $(i-1)\times N$ matrix $M$ (we henceforth omit the superscript $(i-1)$ since the
  number of rows of $M$ is fixed to be $i-1$).
Recall that $\SS(z_i)=1$ if and only if $\bh_j(z_i)=1$ for all $j\in [N]$.
As a result, we have
$$
\Ppr_{\SS\leftarrow \Dyes}\Big[\SS(z_i)=1\hspace{0.06cm}\big|\hspace{0.06cm}
\bM^{(i-1)}=M\Big]
=\prod_{j\in [N]} \hspace{0.05cm}\Ppr_{\bh_j}\Big[\bh_j(z_i)=1\hspace{0.06cm}\big|\hspace{0.06cm}
  \bM^{(i-1)}_{*,j}=M_{*,j}\Big].
$$
On the other hand, letting $\tau=\fsa(\cover(z_i))=\capp(r/\|z_i\|)$, we have
  $\rho(\|z_i\|)=(1-\tau)^N$.

In the next two claims we compare
$$
\Ppr_{\bh_j}\Big[\bh_j(z_i)=1\hspace{0.06cm}\big|\hspace{0.06cm}
  \bM^{(i-1)}_{*,j}=M_{*,j}\Big]
$$
with $1-\tau$ for each $j\in [N]$ and show that they are very close.
The first claim works on $j\in [N]$ with no $0$-entry in $M_{*,j}$ and
  the second claim works on $j\in [N]$ with one $0$-entry in $M_{*,j}$.
(These two possibilities cover all $j\in [N]$ since the matrix $M$ is nice.)
Below we omit $\bM^{(i-1)}_{*,j}$ in writing the conditional probabilities.

\begin{claim}\label{lem-goodj}
For each $j\in [N]$ with no $0$-entry in the $j$th column $M_{*,j}$, we have
$$\Prx_{\bh_j}\Big[\bh_j(z_i)=1\hspace{0.06cm}\big|\hspace{0.06cm} M_{*,j}\Big]
 = (1-\tau)\left(1\pm \frac{o(1)}{qN}\right).$$
\end{claim}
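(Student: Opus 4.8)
The plan is to translate the conditioning event into a purely geometric statement about the random direction $\by_j\in S^{n-1}(r)$ that defines $\bh_j$, and then control everything through the fractional surface areas that the typicality hypothesis bounds.

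First I would record the geometric dictionary. By definition $\bh_j(z_{i'})=1$ exactly when $z_{i'}\cdot\by_j\le r^2$, i.e.\ when $\by_j\notin\cover(z_{i'})$; hence conditioning on $\bM^{(i-1)}_{*,j}=M_{*,j}$ when $M_{*,j}$ has no $0$-entry is the same as conditioning on the event $\{\by_j\notin B\}$, where $B:=\bigcup_{i'<i}\cover(z_{i'})\subseteq S^{n-1}(r)$; similarly $\bh_j(z_i)=1$ iff $\by_j\notin A$ with $A:=\cover(z_i)$, so $\fsa(A)=\capp(r/\|z_i\|)=\tau$. Since $\by_j$ is uniform on $S^{n-1}(r)$ and independent of the other halfspaces, the conditional probability equals $\Pr[\by_j\notin A\cup B]/\Pr[\by_j\notin B]$, and expanding $\fsa(A\cup B)=\fsa(A)+\fsa(B)-\fsa(A\cap B)$ gives
$$
\Prx_{\bh_j}\big[\bh_j(z_i)=1\mid M_{*,j}\big]
=\frac{1-\fsa(A\cup B)}{1-\fsa(B)}
=1-\frac{\tau-\fsa(A\cap B)}{1-\fsa(B)} .
$$

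Next I would bound the two correction quantities using the two conditions of typicality. Condition~1 gives $\fsa(\cover(z_{i'}))\le e^{-0.49r^2}$ for every $i'$, so by a union bound over the at most $q-1$ caps in $B$ we get $\fsa(B)\le q\,e^{-0.49r^2}$, which is $o(1)$ by the relation $e^{r^2/2}\ge N/n$ together with $N=2^{\sqrt n}$ and $q=2^{0.01\sqrt n}$; hence $1-\fsa(B)=1-o(1)$. Condition~2 gives $\fsa(A\cap B)\le\sum_{i'<i}\fsa(\cover(z_i)\cap\cover(z_{i'}))\le q\,e^{-0.96r^2}$. Using also $\tau\le e^{-0.49r^2}$,
$$
\left|\frac{\tau-\fsa(A\cap B)}{1-\fsa(B)}-\tau\right|
=\frac{\big|\tau\,\fsa(B)-\fsa(A\cap B)\big|}{1-\fsa(B)}
\le\frac{\tau q\,e^{-0.49r^2}+q\,e^{-0.96r^2}}{1-o(1)}
=O\big(q\,e^{-0.96r^2}\big).
$$
The relation $e^{r^2}\ge(N/n)^2$ then yields $q^2 N\,e^{-0.96r^2}=o(1)$, i.e.\ $q\,e^{-0.96r^2}=o(1/(qN))$. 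Plugging this back shows the conditional probability is $(1-\tau)\pm o(1/(qN))$, and since $1-\tau\ge 1-e^{-0.49r^2}\ge 1/2$ this is the same as $(1-\tau)\big(1\pm o(1)/(qN)\big)$, as claimed.

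The individual steps are routine; the one place where care is needed is to make sure the \emph{pairwise} intersection bound (typicality Condition~2, with exponent $0.96\,r^2$) is what controls $\fsa(A\cap B)$, since it is precisely the gap between $e^{-0.96r^2}$ and $e^{-0.49r^2}$, amplified by $e^{r^2}\ge(N/n)^2$, that makes the error term beat the required $1/(qN)$ threshold; using only the single-cap bound would not suffice.
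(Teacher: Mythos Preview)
Your proposal is correct and follows essentially the same argument as the paper: both translate the conditioning on an all-$1$ column into the event $\by_j\notin\bigcup_{i'<i}\cover(z_{i'})$, write the conditional probability as a ratio of fractional surface areas, and then control the deviation from $1-\tau$ using the single-cap bound $\fsa(\cover(z_{i'}))\le e^{-0.49r^2}$ and the pairwise-intersection bound $\fsa(\cover(z_i)\cap\cover(z_{i'}))\le e^{-0.96r^2}$ from typicality. The only cosmetic difference is that the paper bounds the conditional failure probability $\delta$ above and below separately, whereas you bound $|\delta-\tau|$ in one shot; the arithmetic and the key step---that $q\,e^{-0.96r^2}=o(1/(qN))$ via $e^{r^2}\ge (N/n)^2$---are identical.
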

\begin{proof}
Let
  $\delta$ be the probability of $\bh_j(z_i)=0$ conditioning on $M_{*,j}$ (which is all-$1$).
Then
$$
\delta=\frac{\fsa\left(\cover(z_i)-\bigcup_{j<i} \cover(z_j)\right)}{
1-\fsa\left(\bigcup_{j<i} \cover(z_j)\right)}.
$$
Using $e^{-0.51\hspace{0.03cm}r^2}\le \fsa(\cover(z_j))\le e^{-0.49\hspace{0.03cm}r^2}$ and $\fsa(\cover(z_i)\cap \cover(z_j))\le e^{-0.96
\hspace{0.03cm}r^2}$, we have
\begin{align*}
\delta &\leq \frac{\tau}{1-q\cdot e^{-0.49\hspace{0.03cm}r^2}}<\tau(1+2q\cdot e^{-0.49\hspace{0.03cm}r^2})
=\tau+2\tau q\cdot e^{-0.49\hspace{0.03cm}r^2}.
\end{align*}
Using $\tau\le e^{-0.49\hspace{0.03cm}r^2}$ and $e^{r^2/2}\ge N/n$, we have
$$
1-\delta\ge 1-\tau -2\tau q\cdot e^{-0.49\hspace{0.03cm}r^2}
\ge 1-\tau-o\big(1/(qN)\big)\ge (1-\tau)\big(1-o(1/(qN))\big).
$$
On the other hand, we have

$\delta \geq \tau -q\cdot e^{-0.96\hspace{0.03cm}r^2}$ and thus,
$$
1-\delta\le 1-\tau+q\cdot e^{-0.96\hspace{0.03cm}r^2}\le 1-\tau+o\big(1/(qN)\big)
=(1-\tau)\big(1+o(1/(qN))\big).
$$
This finishes the proof of the claim.
\end{proof}
\begin{claim}\label{lem-badj}
For each $j\in [N]$ with one $0$-entry in the $j$th column $M_{*,j}$, we have
$$\Pr_{\bh_j}\Big[\bh_j(z_i)=1\hspace{0.06cm}\big|\hspace{0.06cm} M_{*,j}\Big] \geq 1-O\big(e^{-0.45\hspace{0.03cm}r^2}\big).$$
\end{claim}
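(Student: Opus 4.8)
The plan is to rewrite the conditioning event geometrically and then bound the complementary probability by a ratio of fractional surface areas. Recall that the column entry $\bM_{\ell,j}=0$ precisely when $\by_j\in\cover(z_\ell)$; so if the unique $0$-entry of the column $M_{*,j}$ sits in row $i'\in[i-1]$, then conditioning on $\bM^{(i-1)}_{*,j}=M_{*,j}$ is, over the draw of $\by_j\leftarrow S^{n-1}(r)$, exactly conditioning on the event
$$
\mathcal{A}:=\Big\{\by_j\in\cover(z_{i'})\ \text{and}\ \by_j\notin\cover(z_\ell)\ \text{for all}\ \ell\in[i-1]\setminus\{i'\}\Big\}.
$$
Since $\bh_j(z_i)=1$ iff $\by_j\notin\cover(z_i)$, it suffices to show $\Pr_{\by_j}[\by_j\in\cover(z_i)\mid\mathcal{A}]=O(e^{-0.45\hspace{0.03cm}r^2})$. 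First I would note that $\{\by_j\in\cover(z_i)\}\cap\mathcal{A}\subseteq\{\by_j\in\cover(z_i)\cap\cover(z_{i'})\}$, so that, writing $\fsa(\mathcal{A})$ for $\Pr_{\by_j}[\mathcal{A}]$ (the fractional surface area of the corresponding subset of $S^{n-1}(r)$),
$$
\Pr_{\by_j}\big[\by_j\in\cover(z_i)\ \big|\ \mathcal{A}\big]\ \le\ \frac{\fsa\big(\cover(z_i)\cap\cover(z_{i'})\big)}{\fsa(\mathcal{A})}.
$$

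Next I would estimate the numerator and denominator using that $z$ is typical. For the numerator, typicality condition~2 gives $\fsa(\cover(z_i)\cap\cover(z_{i'}))\le e^{-0.96\hspace{0.03cm}r^2}$ (note $i\ne i'$, as $i'\le i-1$). For the denominator, a union bound over the fewer than $q$ caps subtracted from $\cover(z_{i'})$, together with the lower bound $\fsa(\cover(z_{i'}))\ge e^{-0.51\hspace{0.03cm}r^2}$ from typicality condition~1 and the pairwise bound from condition~2, gives
$$
\fsa(\mathcal{A})\ \ge\ \fsa\big(\cover(z_{i'})\big)-\sum_{\ell\in[i-1]\setminus\{i'\}}\fsa\big(\cover(z_{i'})\cap\cover(z_\ell)\big)\ \ge\ e^{-0.51\hspace{0.03cm}r^2}-q\cdot e^{-0.96\hspace{0.03cm}r^2}.
$$
Since $e^{r^2/2}\ge N/n$ with $N=2^{\sqrt n}$ and $q=2^{0.01\sqrt n}$, we have $q\cdot e^{-0.45\hspace{0.03cm}r^2}=o(1)$, hence $\fsa(\mathcal{A})\ge\frac{1}{2} e^{-0.51\hspace{0.03cm}r^2}$ for all large $n$. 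Plugging in, $\Pr_{\by_j}[\by_j\in\cover(z_i)\mid\mathcal{A}]\le 2\hspace{0.03cm}e^{-0.45\hspace{0.03cm}r^2}=O(e^{-0.45\hspace{0.03cm}r^2})$, as claimed.

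The argument is short and has no real conceptual obstacle; the one step that needs care is the lower bound on $\fsa(\mathcal{A})$, since a priori the conditioning event could have tiny measure and blow up the ratio. What rescues us is that fewer than $q$ caps are removed from $\cover(z_{i'})$, each of fractional surface area at most $e^{-0.96\hspace{0.03cm}r^2}$, while $q$ is exponentially smaller than $e^{0.45\hspace{0.03cm}r^2}\ge(N/n)^{0.9}$, so the removed part is only an $o(1)$ fraction of $\fsa(\cover(z_{i'}))$. Finally I note how this feeds into Lemma~\ref{lem2}: a nice matrix has at most $\sqrt N$ zero-entries lying in distinct columns, so the product of the corresponding $\le\sqrt N$ factors, each $1-O(e^{-0.45\hspace{0.03cm}r^2})$ by this claim, is $1-o(1/q)$; multiplying by the contribution of the remaining (``good'') columns, which is $(1-\tau)^{N}(1\pm o(1/q))$ by Claim~\ref{lem-goodj} and the niceness bound on the number of good columns, yields $\rho(\|z_i\|)\pm o(1/q)$.
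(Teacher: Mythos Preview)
Your proof is correct and follows essentially the same route as the paper: you identify the conditioning event as $\by_j\in\cover(z_{i'})\setminus\bigcup_{\ell\ne i'}\cover(z_\ell)$, bound the numerator by $\fsa(\cover(z_i)\cap\cover(z_{i'}))\le e^{-0.96r^2}$ via typicality condition~2, and lower-bound the denominator by $e^{-0.51r^2}-q\cdot e^{-0.96r^2}$ via condition~1 together with the pairwise intersection bound. This is exactly the paper's argument, though your write-up is somewhat more explicit (e.g.\ spelling out $\fsa(\mathcal{A})\ge\tfrac12 e^{-0.51r^2}$ and the verification that $q\cdot e^{-0.45r^2}=o(1)$).
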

\begin{proof}
Let $i'$ be the point with $M_{i',j}=1$ and $\delta$
  be the conditional probability of $\bh_j(z_i)=0$. Then
$$
\delta\le \frac{\fsa\big(\cover(z_i)\cap \cover(z_{i'})\big)}{\fsa\left(\cover(z_i')-
\bigcup_{j<i:\hspace{0.03cm}j\ne i'} \cover(z_j)\right)}
\le \frac{e^{-0.96\hspace{0.03cm}r^2}}{e^{-0.51\hspace{0.03cm}r^2}-q\cdot e^{-0.96\hspace{0.03cm}r^2}}
=O\big(e^{-0.45\hspace{0.03cm}r^2}\big),
$$
by our choice of $q$.
This finishes the proof of the claim.
\end{proof}

We combine the two claims to prove Lemma \ref{lem2}.

\begin{proof}[Proof of Lemma \ref{lem2}]
Let $h$ be the number of $0$-entries in $M$.
We have $h\le \sqrt{N}$ since $M$ is nice.
By Claims~\ref{lem-goodj}, the conditional probability of $\SS(z_i)=1$ is at most
\begin{align*}
\left((1-\tau)\left(1+o\left(\frac{1}{qN}\right)\right)\right)^{N-h}
&=\rho(\|z_i\|)\cdot \frac{1}{(1-\tau)^h}\cdot \left(1+o\left(\frac{1}{qN}\right)\right)^{N-h} \\
&\le \rho(\|z_i\|) \cdot (1+2\tau)^h\cdot \left(1+o\left(\frac{1}{qN}\right)\right)^{N}\\[0.3ex]
&\le \rho(\|z_i\|)\cdot \exp\big(2\tau h+ o(1/q)\big)\\[0.6ex]
&=\rho(\|z_i\|)\cdot \exp\big(o(1/q)\big)=\rho(\|z_i\|)+o(1/q).
\end{align*}
Similarly, the conditional probability of $\SS(z_i)=1$ is at least
\begin{align*}
&\left((1-\tau)\left(1-o\left(\frac{1}{qN}\right)\right)\right)^{N-h}\left(1-O\left(e^{-0.45\hspace{0.03cm}r^2}\right)\right)^{h}\\
&\hspace{1cm}\ge \rho(\|z_i\|)\cdot \left(1-o\left(\frac{1}{qN}\right)\right)^{N-h}
\left(1-O\left(e^{-0.45\hspace{0.03cm}r^2}\right)\right)^{h}\\[0.6ex]
&\hspace{1cm}\ge \rho(\|z_i\|)\cdot \big(1-o(1/q)\big)\ge \rho(\|z_i\|)-o(1/q).
\end{align*}
This finishes the proof of the lemma.
\end{proof}


\section{One-sided lower bound} \label{sec:1slb}

We recall Theorem~\ref{thm:1slb}:
\begin{reptheorem}{thm:1slb}
\red{Any one-sided sample-based algorithm that is an $\eps$-tester for convexity over $\normal^n$
  for some $\eps<1/2$
  must use $2^{\Omega(n)}$ samples.}
\end{reptheorem}

We say a finite set $\{x^1,\dots,x^M\} \subset \R^n$ is  \emph{shattered} by $\calC_\convex$ if for every $(b_1,\dots,b_M) \in \zo^M$ there is a convex set $C \in \calC_\convex$ such that $C(x^i)=b_i$ for all $i \in [M].$
Theorem~\ref{thm:1slb} follows from the following lemma:

\begin{lemma} \label{lem:any-labeling}There is an absolute constant $c>0$ such that for $M=2^{c n}$, it holds that
\[
\Prx_{\bx^i\leftarrow\normal^n}\big[\{\bx^1,\dots,\bx^M\} \text{~is shattered by~}\calC_\convex\hspace{0.03cm}\big] \geq 1 - o(1).
\]
\end{lemma}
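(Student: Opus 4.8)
The plan is to reduce the shattering statement to an elementary geometric fact about convex hulls of Gaussian point sets, and then to establish that fact via standard Gaussian tail bounds.

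First I would record the observation that $\{x^1,\dots,x^M\}$ is shattered by $\calC_\convex$ as soon as no point $x^i$ lies in $\Conv(\{x^j : j \neq i\})$. Indeed, given any target labeling $(b_1,\dots,b_M) \in \{0,1\}^M$, the set $C := \Conv(\{x^i : b_i = 1\})$ is convex and contains every positive point; and since $C \subseteq \Conv(\{x^j : j \neq i\})$ whenever $b_i = 0$, if no point lies in the convex hull of the others then $C$ contains no negative point, so $C$ realizes the labeling. It therefore suffices to show that, for a suitable absolute constant $c > 0$ and $M = 2^{cn}$, with probability $1 - o(1)$ over $\bx^1,\dots,\bx^M \leftarrow \normal^n$ no $\bx^i$ lies in $\Conv(\{\bx^j : j \neq i\})$.

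By symmetry and a union bound over $i \in [M]$, it is enough to show that $\Pr[\bx^M \in \Conv(\{\bx^1,\dots,\bx^{M-1}\})] = o(2^{-cn})$. To bound this I would use the separating direction $\bu := \bx^M / \|\bx^M\|$: if $\bx^M \in \Conv(\{\bx^j : j < M\})$ then writing $\bx^M$ as a convex combination of the $\bx^j$ gives $\|\bx^M\| = \bu \cdot \bx^M \le \max_{j < M} \bu \cdot \bx^j$. Conditioning on $\bx^M$ (hence on $\bu$), the points $\bx^1,\dots,\bx^{M-1}$ remain i.i.d.\ $\normal^n$, so each $\bu \cdot \bx^j$ is an independent standard Gaussian and $\Pr[\max_{j<M} \bu \cdot \bx^j \ge t] \le M e^{-t^2/2}$. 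Combining this with Lemma~\ref{lem:johnstone} applied to the $\chi_n^2$-distributed quantity $\|\bx^M\|^2$, which gives $\|\bx^M\| \ge \sqrt{n}(1-\delta)$ except with probability $e^{-\Omega(n)}$, one obtains
\[
\Pr\big[\bx^M \in \Conv(\{\bx^j : j < M\})\big] \le e^{-\Omega(n)} + M \cdot e^{-n(1-\delta)^2/2} = e^{-\Omega(n)} + 2^{cn} \cdot e^{-n(1-\delta)^2/2}.
\]

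The proof is then finished by calibrating constants: fix a small $\delta > 0$ and then choose $c > 0$ small enough that $2c\ln 2 < (1-\delta)^2/2$ and that $c\ln 2$ lies below the rate hidden in the $e^{-\Omega(n)}$ term (which depends only on $\delta$ through Lemma~\ref{lem:johnstone}); with $M = 2^{cn}$ the displayed bound is then $o(2^{-cn})$, and the union bound over $i \in [M]$ yields the lemma. The step deserving the most care is this calibration — it is precisely where one uses that the maximum of $2^{cn}$ standard Gaussians is about $\sqrt{2cn\ln 2}$, which stays a constant factor below the typical norm $\sqrt{n}$ of a Gaussian vector; the rest is routine.
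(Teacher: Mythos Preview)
Your proposal is correct and follows essentially the same route as the paper: reduce shattering to the event that no $\bx^i$ lies in the convex hull of the others, then show this via the separating direction $\bx^M/\|\bx^M\|$, a Gaussian tail bound on $\max_j \bu\cdot\bx^j$, and chi-squared concentration on $\|\bx^M\|$ from Lemma~\ref{lem:johnstone}. The only differences are cosmetic (the paper fixes the concrete threshold $\sqrt{n}/10$ and $c=1/500$ rather than leaving $\delta$ and $c$ to be calibrated).
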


\begin{proof}[Proof of Theorem~\ref{thm:1slb} using Lemma~\ref{lem:any-labeling}.]
Suppose that $A$ were a one-sided sample-based algorithm for $\eps$-testing $\calC_\convex$ using at most $M$ samples.  Fix a set $\targetset$ that is $\eps$-far from $\calC_\convex$ to be the unknown target subset of $\R^n$ that is being tested.\footnote{An example of such a subset $S$ is as follows (we define it as
  a function $S:\R^n\rightarrow \{0,1\}$):  Given an odd integer $N > (1/2 - \eps)\inv - 1$, let $-\infty = \tau_0 < \tau_1 < \cdots < \tau_N < \tau_{N+1}=+\infty$ be values such that
$
\Pr_{\bz \leftarrow \normal}[\bz \leq \tau_i] =  i/( {N+1}),
$
and let $\targetset: \R^n \to \{0,1\}$ be the function defined by $\targetset(x_1,\dots,x_n)=\mathbf{1}[i$ is even$]$, where $i \in \{0,\dots,N\}$ is the unique value such that $\tau_i \leq x_1 < \tau_{i+1}.$  Fix any $z = (z_2,\dots,z_n) \in \R^{n-1}$ and we let $\targetset_z: \R \to \{0,1\}$ be the function defined as $\targetset_z(x_1)=\targetset(x_1,z_2,\dots,z_n)$.  An easy argument gives that $\targetset_z$ is $\left(1/2 - { 1 /({N+1})}\right)$-far (and hence $\eps$-far) from every convex subset of $\R$, and it follows by averaging (using the fact that the restriction of any convex subset of $\R^n$ to a line is a convex subset of $\R$) that $\targetset$ is $\eps$-far from $\calC_\convex.$}
Since $\targetset$ is $\eps$-far from convex, it must be the case that
\begin{equation}\label{haui}
\Prx_{\bx^i\leftarrow \normal^n}\big[A \text{\ rejects when run on $(\bx^1,\targetset(\bx^1)),\dots,
(\bx^M,\targetset(\bx^M))$}\big] \geq 2/3.
\end{equation}
But  Lemma~\ref{lem:any-labeling} together with the one-sidedness of $A$ imply that
\begin{align*}
&\Prx_{\bx^i\leftarrow\normal^n}\big[\hspace{0.03cm}\text{for any $(b^1,\dots,b^M) \in \zo^M$,
$A$ rejects when run on~$(\bx^1,b^1),\dots,
(\bx^M,b^M)$}\hspace{0.02cm}\big] \leq o(1),
\end{align*}
as $A$ can only reject if the labeled samples are not consistent with any convex set,
  which implies that $A$ cannot reject when $\{\bx^1,\ldots,\bx^M\}$ is shattered by $\calC_\convex$.
  This contradicts with (\ref{haui}).\end{proof}

In the next subsection we prove Lemma~\ref{lem:any-labeling} for $c=1/500.$

\subsection{Proof of Lemma~\ref{lem:any-labeling}}

Let $M=2^{cn}$ with $c=1/500$.
We prove the following lemma:

\begin{lemma} \label{lem:rand-all-on-hull}
For $\bx^1,\dots,\bx^M$ drawn independently from $\normal^n$, with probability $1-o(1)$ it is the case that for all $i \in [M],$ no $\bx^i$ lies in $\Conv(\{\bx^j:j \in [M] \setminus i\}).$
\end{lemma}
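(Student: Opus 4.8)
The plan is to reduce the statement to a single union bound over $i\in[M]$: it suffices to show that for each fixed $i$,
\[
\Prx_{\bx^1,\dots,\bx^M\leftarrow\normal^n}\big[\bx^i\in\Conv(\{\bx^j:j\in[M]\setminus i\})\big]=o(1/M),
\]
and then sum over $i$. For a fixed $i$ I would use only the easy direction of the separating-hyperplane fact: if there is a vector $v\in\R^n$ with $v\cdot\bx^i>v\cdot\bx^j$ for every $j\ne i$, then $\bx^i\notin\Conv(\{\bx^j:j\ne i\})$ — indeed, applying $v$ to any convex combination $\sum_{j\ne i}\lambda_j\bx^j$ gives a value at most $\max_{j\ne i}v\cdot\bx^j<v\cdot\bx^i$. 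The natural choice is $v=\bx^i$ itself. Hence the event $\{\bx^i\in\Conv(\{\bx^j:j\ne i\})\}$ is contained in the event $\mathcal{B}_i:=\{\,\|\bx^i\|^2\le\max_{j\ne i}\bx^i\cdot\bx^j\,\}$, and it is enough to prove $\Pr[\mathcal{B}_i]=o(1/M)$.

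To bound $\Pr[\mathcal{B}_i]$ I would split on the length of $\bx^i$. By the chi-squared tail bound (Lemma~\ref{lem:johnstone}), $\|\bx^i\|^2\ge n/2$ except with probability $e^{-\Omega(n)}$, which is $o(1/M)$ since $M=2^{n/500}$. Now condition on $\bx^i$ with $\|\bx^i\|^2\ge n/2$: for each $j\ne i$ the random variable $\bx^i\cdot\bx^j$ is a centered Gaussian of variance $\|\bx^i\|^2$, and these are mutually independent over $j$. Therefore, by the standard Gaussian tail bound $\Pr_{g\leftarrow\normal(0,1)}[g\ge t]\le e^{-t^2/2}$ and a union bound over the at most $M$ indices $j$,
\[
\Prx\big[\exists\, j\ne i:\ \bx^i\cdot\bx^j\ge\|\bx^i\|^2 \,\big|\, \bx^i\big]\ \le\ M\cdot e^{-\|\bx^i\|^2/2}\ \le\ M\cdot e^{-n/4},
\]
which is again $o(1/M)$ for $M=2^{n/500}$. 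Combining the two estimates yields $\Pr[\mathcal{B}_i]\le e^{-\Omega(n)}+M e^{-n/4}=o(1/M)$, and a union bound over $i\in[M]$ completes the proof.

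The arithmetic is entirely elementary; the only thing to verify is that the constant $c=1/500$ makes both error terms $M\cdot e^{-\Omega(n)}$ and $M^2\cdot e^{-n/4}$ tend to $0$, which they do with enormous room to spare (any $c$ below roughly $3/(64\ln 2)$ would work). There is essentially no obstacle here. The one mildly clever point is recognizing that the correct separating direction for a random point is just the point itself: this succeeds because a Gaussian point sits at Euclidean distance $\Theta(\sqrt n)$ from the origin, whereas its inner product with any one of the other $2^{n/500}$ points is only on the order of $\sqrt{\,\|\bx^i\|^2\ln M\,}=\Theta(\sqrt{n/500})$, which is much smaller than $\|\bx^i\|^2=\Theta(n)$.
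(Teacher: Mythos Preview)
Your proof is correct and follows essentially the same approach as the paper: reduce by a union bound to a single index, take the separating direction to be $\bx^i$ itself (the paper uses the unit vector $\bx^M/\|\bx^M\|$, which amounts to the same thing), control $\|\bx^i\|$ via the chi-squared tail, and then bound the projections $\bx^i\cdot\bx^j$ by a one-dimensional Gaussian tail plus a union bound over $j$. The only differences are cosmetic---your thresholds ($\|\bx^i\|^2\ge n/2$ versus the paper's $\|\bx^M\|\ge\sqrt{n}/10$) and the precise error terms differ, but the structure of the argument is identical.
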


If $\bx^1,\dots,\bx^M$ are such that no $\bx^i$ lies in $\Conv(\{\bx^j:j \in [M] \setminus i\})$, then given any $(b^1,\dots,b^M)$,
  by taking $C=\Conv(\{\bx^i: b^i = 1\})$ we see that there is a convex set $C$ such that $C(\bx^i)=b^i$ for all $i \in [M]$.
Thus to establish Lemma~\ref{lem:any-labeling} it suffices to prove Lemma~\ref{lem:rand-all-on-hull}.

To prove Lemma~\ref{lem:rand-all-on-hull}, it suffices to show that for each fixed $j \in [M]$ we have
\begin{equation}
\Prx_{\bx^i\leftarrow \normal^n}\big[\hspace{0.03cm}\bx^j \in \Conv(\{\bx^k:k \in [M]\setminus \{j\}\})\big] \leq M^{-2}
\label{eq:A}
\end{equation}
since given this a union bound implies that
\[
\Prx_{\bx^i\leftarrow \normal^n}\big[\hspace{0.03cm}\text{for some $j \in [M]$, }\bx^j \text{~lies in~} \Conv(\{\bx^k :k \in [M]\setminus \{j\}\})\big] \leq M^{-1}=o(1).
\]
By symmetry, to establish (\ref{eq:A}) it suffices to show that
\begin{equation} \label{eq:B}
\Prx_{\bx^i \leftarrow \normal^n}\big[\bx^M \in \Conv(\{\bx^1,\dots,\bx^{M-1}\})\big] \leq M^{-2}.
\end{equation}

In turn (\ref{eq:B}) follows from the following inequalities
  ($v\in \R^n$ is a fixed unit vector in the second)
\begin{equation}\label{twoinequalities}
\Prx_{\bx \leftarrow \normal^n}\big[\|\bx\| \leq \sqrt{n}/10\big] < {\frac 1 2} M^{-2}\quad\text{and}\quad
\Pr_{\bx \leftarrow \normal^n}\big[\bx \cdot v \geq \sqrt{n}/10\big] < {\frac 1 2}M^{-3}.
\end{equation}
The first inequality follows directly from Lemma \ref{lem:johnstone} using $c=1/500$.
For the second, by the spherical symmetry of $\normal^n$ we may take $v=(1,0,\dots,0).$
Recall the standard Gaussian tail bound $$\Pr_{\bz \leftarrow \normal}\big[\bz\ge t\big] \leq e^{-t^2/2}$$ for $t\geq 0$. This gives us that
$$\Pr_{\bx \leftarrow \normal^n}\big[\bx \cdot v \geq \sqrt{n}/10\big] \leq e^{-n/200} < {\frac 1 2} M^{-3},$$ again using that $M=2^{cn}$ and $c=1/500.$

Finally,
  to see that (\ref{eq:B}) follows from (\ref{twoinequalities}), we observe first that by the first inequality we may assume that $\|\bx^M\| > \sqrt{n}/10$ (at the cost of failure probability at most $ M^{-2}/2$ towards (\ref{eq:B})); fix~any such outcome $x^M$ of $\bx^M.$  By a union bound over $\bx^1,\dots,\bx^{M-1}$ and the second inequality, we have
\[
\Prx_{\bx^i\leftarrow\normal^n}\left[\text{\hspace{0.03cm}any $i \in [M-1]$ has $\bx^i \cdot {\frac {x^M}{\|x^M\|}} \geq \sqrt{n}/10$}\right] <  {\frac 1 2} M^{-2}.
\]
 But if every $\bx^i$ has $\bx^i \cdot ({x^M}/{\|x^M\|})<\sqrt{n}/10 < \|x^M\|$, then $x^M\notin\Conv(\{\bx^1,\dots,\bx^{M-1}\}).$


\section{Two-sided upper bound} \label{sec:2sub}

Recall Theorem \ref{thm:2sub}:
\begin{reptheorem}{thm:2sub}
For any $\eps > 0$, there is a two-sided sample-based $\eps$-tester for convexity over
   $\normal^n$ using $n^{O(\sqrt{n}/\eps^2)}$ samples.
\end{reptheorem}

We begin by recalling some definitions from learning theory.  Let $\calC$ be a class of subsets of $\R^n$ (such as $\calC_\convex$).
We say an algorithm learns $\calC$ to error $\eps$ with confidence $1-\delta$ under $\normal^n$ if,
  given a set of labeled samples $(\bx,S(\bx))$ from an unknown set $S\in \calC$ with $\bx$'s drawn
  independently from $\normal^n$, the algorithm outputs with probability at least $1-\delta$
  a hypothesis set $H\subseteq \R^n$~with $\Vol(S\bigtriangleup H)\le \eps$.
We say it is a \emph{proper} learning algorithm if
  it always outputs a hypothesis $H$ that belongs to $\calC$.
Next we recall the main algorithmic result of \cite{KOS:07}:

\begin{theorem} [Theorem~5 of \cite{KOS:07}] \label{thm:KOSlearn}
There is an algorithm $A$ that learns the class $\calC_\convex$ of all convex subsets of $\R^n$ to error $\eps$ with confidence $1-\delta$ under $\normal^n$ using
$$n^{O(\sqrt{n}/\eps^2)} \cdot \log(1/\delta)$$ samples\footnote{Theorem~5 as stated in \cite{KOS:07} gives a sample complexity upper bound of
$\smash{n^{O(\sqrt{n}/\eps^4)}}$ for \emph{agnostic} learning, but inspection of the proof gives the theorem as stated here, with an  upper bound of $\smash{n^{O(\sqrt{n}/\eps^2)}}$  for non-agnostic learning.}
drawn  from $\normal^n$.
\end{theorem}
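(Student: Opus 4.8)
The plan is to use the Hermite-analytic \emph{low-degree algorithm}: we first show that every convex set $S$ is approximated in $L_2(\normal^n)$ to squared error $\eps$ by a polynomial of degree $k=O(\sqrt n/\eps^2)$, and then we estimate the degree-$\le k$ Hermite coefficients of $S$ from samples and threshold the resulting polynomial. Crucially this produces an \emph{improper} hypothesis (a thresholded polynomial, not a convex set), which is all the theorem asks for, and it is the realizable (non-agnostic) setting that lets us use plain $L_2$ coefficient estimation rather than $L_1$ polynomial regression — this is why the exponent is $\sqrt n/\eps^2$ and not $\sqrt n/\eps^4$.

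\emph{Structural step.} Write $f=2\cdot\mathbf 1_S-1\in\{-1,1\}$ and expand $f=\sum_{\alpha\in\Z_{\geq 0}^n}\hat f(\alpha)\,h_\alpha$ in the orthonormal Hermite basis of $L_2(\normal^n)$, where $h_\alpha$ has degree $|\alpha|=\sum_i\alpha_i$. Let $\Gamma(S)=\liminf_{h\to 0^+}\Vol\big(S_h\setminus S\big)/h$ be the Gaussian surface area of $S$, where $S_h$ denotes the set of points within distance $h$ of $S$. By Ball's theorem (Theorem~\ref{balllemma}), $\Vol(S_h\setminus S)/h\le 4n^{1/4}$ for every $h>0$, so $\Gamma(S)\le 4n^{1/4}$ for every convex $S$. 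I would then invoke the standard bound relating Gaussian noise sensitivity to surface area: if $x\leftarrow\normal^n$ and $y=e^{-\delta}x+\sqrt{1-e^{-2\delta}}\,z$ with $z\leftarrow\normal^n$ independent, then
\[
\mathrm{NS}_\delta(S):=\Pr\big[\mathbf 1_S(x)\neq\mathbf 1_S(y)\big]\le O\!\big(\Gamma(S)\sqrt\delta\,\big)\quad\text{for small }\delta,
\]
a general isoperimetric-type fact (provable via the Ornstein--Uhlenbeck semigroup together with the Gaussian co-area formula, and used in exactly this form in \cite{KOS:07}). Since $\mathrm{NS}_\delta(S)=\tfrac14\E\big[(f(x)-f(y))^2\big]=\tfrac12\sum_\alpha\big(1-e^{-\delta|\alpha|}\big)\hat f(\alpha)^2$, taking $\delta=1/k$ and using $1-e^{-|\alpha|/k}>1/2$ for $|\alpha|>k$ gives
\[
\sum_{|\alpha|>k}\hat f(\alpha)^2\le 4\,\mathrm{NS}_{1/k}(S)=O\!\big(\Gamma(S)/\sqrt k\,\big)=O\!\big(n^{1/4}/\sqrt k\,\big).
\]
Choosing $k=C\sqrt n/\eps^2$ for a large absolute constant $C$ makes this at most $\eps/2$; hence $f^{\leq k}:=\sum_{|\alpha|\le k}\hat f(\alpha)h_\alpha$ satisfies $\|f-f^{\leq k}\|_2^2\le\eps/2$.

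\emph{Algorithmic step.} Let $m=\binom{n+k}{k}=n^{O(\sqrt n/\eps^2)}$ be the number of Hermite polynomials of degree at most $k$. The algorithm draws samples and, for each multi-index $\alpha$ with $|\alpha|\le k$, estimates $\hat f(\alpha)=\E[f(x)h_\alpha(x)]$ by median-of-means: since $\mathrm{Var}\big(f(x)h_\alpha(x)\big)\le\E[h_\alpha^2]=1$, a batch of $O(m/\eps)$ samples gives, by Chebyshev, an estimate within $\sqrt{\eps/(4m)}$ of $\hat f(\alpha)$ with probability $\ge 3/4$, and the median over $O(\log(m/\delta))$ independent batches boosts this to $\ge 1-\delta/m$. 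A union bound over the $m$ coefficients gives, with probability $\ge 1-\delta$, $\sum_{|\alpha|\le k}\big(\hat f(\alpha)-\tilde f(\alpha)\big)^2\le m\cdot\eps/(4m)=\eps/4$, so $g:=\sum_{|\alpha|\le k}\tilde f(\alpha)h_\alpha$ obeys $\|f-g\|_2^2\le\eps/2+\eps/4\le\eps$. The algorithm outputs $H=\{x\in\R^n:g(x)\ge 0\}$. Whenever $\mathbf 1_H(x)\neq\mathbf 1_S(x)$, $g(x)$ and $f(x)$ have opposite signs, so $|f(x)-g(x)|\ge 1$; hence $\Vol(S\bigtriangleup H)=\Pr[\mathbf 1_H\neq\mathbf 1_S]\le\E[(f-g)^2]\le\eps$. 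Since all samples are reused across the $m$ estimates, the total sample complexity is $O(m/\eps)\cdot O(\log(m/\delta))=n^{O(\sqrt n/\eps^2)}\cdot\log(1/\delta)$, as claimed.

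\emph{Main obstacle.} The conceptual heart is the structural step: extracting the quantitative $O(n^{1/4})$ bound on Gaussian surface area from Ball's collar bound (the Minkowski-content limit) and, more substantially, the noise-sensitivity inequality $\mathrm{NS}_\delta(S)=O(\Gamma(S)\sqrt\delta)$ that turns a surface-area bound into Hermite-tail concentration — this is where the geometry of convexity enters. A secondary technical point is that the Hermite polynomials $h_\alpha$ are unbounded, so a single empirical average does not enjoy a Hoeffding-type tail bound; median-of-means, using only the $O(1)$ variance bound, is what gives the $\log(1/\delta)$ confidence dependence with no blow-up in the leading $n^{O(\sqrt n/\eps^2)}$ factor.
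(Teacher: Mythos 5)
Your proposal is correct and is essentially the argument the paper is pointing to: the statement is imported verbatim from \cite{KOS:07}, and your reconstruction is exactly their ``low-degree algorithm'' proof — Ball's $O(n^{1/4})$ surface-area bound, the noise-sensitivity-to-surface-area inequality giving Hermite tail decay at degree $k=O(\sqrt n/\eps^2)$, and then $L_2$ coefficient estimation plus sign-thresholding. You also correctly identify the content of the paper's footnote, namely that in the realizable setting plain coefficient estimation (rather than agnostic $L_1$ polynomial regression) is what yields the exponent $\sqrt n/\eps^2$ instead of $\sqrt n/\eps^4$.
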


Next we recall the result of Goldreich, Goldwasser and Ron which relates proper learnability of a class $\calC$ to the testability of $\calC$.

\begin{theorem} [Proposition~3.1.1 of \cite{GGR98}, adapted to our context] \label{thm:GGRlearntest}
Let $\calC$ be a class of subsets of $\R^n$ that has a proper learning algorithm $A$ which uses $m_A(n,\eps,\delta)$ samples from $\normal^n$ to learn $\calC$ to error $\eps$ with confidence $1-\delta$.   Then there is a property testing algorithm $A_{\mathrm{test}}$ for $\calC$ under the distribution $\normal^n$ that uses
\[
m_A\big(n,\eps/2,\delta/2\big) + O\big(\log(1/\delta)/\eps\big)
\]
samples drawn from $\normal^n.$
\end{theorem}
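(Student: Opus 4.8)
The plan is to use the standard ``learn-then-verify'' reduction. The tester $A_{\mathrm{test}}$ first runs the given proper learner $A$ with error parameter $\eps/2$ and confidence $1-\delta/2$, feeding it $m_A(n,\eps/2,\delta/2)$ labeled samples $(\bx,\targetset(\bx))$ drawn from $\normal^n$ and obtaining a hypothesis set $H$. Because $A$ is \emph{proper}, $H$ is always an explicit member of $\calC$, so for any later point $\bx$ the value $H(\bx)$ is computable. The tester then draws a fresh batch of $m := O(\log(1/\delta)/\eps)$ labeled samples $(\bx_1,\targetset(\bx_1)),\dots,(\bx_m,\targetset(\bx_m))$, forms the empirical disagreement rate $\widehat p := \tfrac1m\sum_{i=1}^m \mathbf{1}[H(\bx_i)\neq\targetset(\bx_i)]$, which is an unbiased estimator of $p := \Vol(\targetset\bigtriangleup H)$, and accepts iff $\widehat p \le \tfrac34\eps$. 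The total number of samples used is $m_A(n,\eps/2,\delta/2)+O(\log(1/\delta)/\eps)$, as claimed.

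For correctness I would argue two cases. In the yes-case $\targetset\in\calC$: with probability at least $1-\delta/2$ the learner succeeds and $p\le\eps/2$; conditioned on this, a Chernoff bound (see below) shows $\widehat p$ exceeds $\tfrac34\eps$ with probability at most $\delta/2$, so a union bound over the two failure events gives acceptance with probability at least $1-\delta$. In the no-case where $\targetset$ is $\eps$-far from $\calC$: here I would invoke properness crucially — whatever $A$ does, $H\in\calC$, hence $p=\Vol(\targetset\bigtriangleup H)\ge\eps$ — and then a Chernoff bound shows $\widehat p\le\tfrac34\eps$ with probability at most $\delta/2$, so the tester rejects with probability at least $1-\delta$. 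Note that the learner is neither required nor able to ``succeed'' in the no-case; properness of $A$ is exactly what makes the argument go through there.

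The one place requiring care, and the only genuinely non-routine point, is the sample complexity of the verification phase: estimating $p$ to additive accuracy $\eps/4$ via an additive Hoeffding bound would cost $\Theta(\log(1/\delta)/\eps^2)$ samples, too many. The fix is to observe that we never need an accurate estimate of $p$, only to separate $p\le\eps/2$ from $p\ge\eps$, and that at the decision threshold the relevant Bernoulli mean $mp$ is $\Theta(m\eps)$; a multiplicative Chernoff bound (whose relative-deviation tail at mean $\mu$ requires only $O(\log(1/\delta)/\mu)$ samples, rather than $1/\mu^2$) therefore suffices. Concretely, writing $m\widehat p$ as a sum of $m$ i.i.d.\ Bernoulli$(p)$ variables and using that the upper tail $\Pr[\widehat p>\tfrac34\eps]$ and lower tail $\Pr[\widehat p\le\tfrac34\eps]$ are monotone in $p$, I would bound these (in the yes- and no-cases respectively) by their values at $p=\eps/2$ and $p=\eps$, each of which is $\exp(-\Omega(m\eps))$ and hence at most $\delta/2$ for a suitable $m=O(\log(1/\delta)/\eps)$. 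This completes the plan.
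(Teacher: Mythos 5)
Your proposal is correct and is essentially the standard argument behind Proposition~3.1.1 of \cite{GGR98}, which the paper simply cites rather than reproves: run the proper learner, then verify the hypothesis on $O(\log(1/\delta)/\eps)$ fresh samples, using properness to force $\Vol(\targetset\bigtriangleup H)\ge\eps$ in the far case and a multiplicative (rather than additive) Chernoff bound to keep the verification cost at $O(\log(1/\delta)/\eps)$ instead of $O(\log(1/\delta)/\eps^2)$. Nothing further is needed.
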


By Theorem~\ref{thm:GGRlearntest}, to obtain Theorem~\ref{thm:2sub} it suffices to have a \emph{proper} learning analogue of~Theorem \ref{thm:KOSlearn}.  We establish the required result, as a corollary of Theorem~\ref{thm:KOSlearn}, in the next subsection:

\begin{corollary} \label{cor:KOSlearn}
There is a proper learning algorithm $A'$ for the class $\calC_\convex$ of all convex subsets of $\R^n$ that uses $n^{O(\sqrt{n}/\eps^2)} \cdot \log(1/\delta)$ samples from $\normal^n$ to learn to error $\eps$ with confidence $1-\delta$.
\end{corollary}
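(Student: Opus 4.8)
The plan is to combine the improper learner of Theorem~\ref{thm:KOSlearn} with the cover of Corollary~\ref{cor:epscover}: first run the KOS algorithm to obtain a hypothesis $H$ (not necessarily convex) that is $\eps$-close to the unknown convex target $S$, and then ``round'' $H$ to a genuinely convex set by picking the closest element of a sufficiently fine cover of $\calC_\convex$. The point to stress is that the rounding step will be a \emph{deterministic function of $H$ alone} and will draw no further samples from the oracle, so the sample complexity of the proper learner will be exactly that of Theorem~\ref{thm:KOSlearn}, namely $n^{O(\sqrt{n}/\eps^2)}\log(1/\delta)$.

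In detail, on input $\eps,\delta>0$ the proper learner $A'$ proceeds as follows. First it runs the algorithm $A$ of Theorem~\ref{thm:KOSlearn} with error parameter $\eps/3$ and confidence parameter $\delta$, drawing $n^{O(\sqrt{n}/\eps^2)}\cdot\log(1/\delta)$ labeled samples from $\normal^n$ and obtaining a hypothesis set $H\subseteq\R^n$; by Theorem~\ref{thm:KOSlearn}, with probability at least $1-\delta$ we have $\Vol(S\bigtriangleup H)\le\eps/3$. Next, $A'$ invokes Corollary~\ref{cor:epscover} with parameter $\tau=\eps/3$ to obtain a finite $(\eps/3)$-cover $\calC\subseteq\calC_\convex$ of $\calC_\convex$; this step uses no samples, only the values of $n$ and $\eps$. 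Finally $A'$ outputs $C^\ast$, defined to be any $C\in\calC$ that minimizes $\Vol(H\bigtriangleup C)$ (ties broken arbitrarily). Since $H$ is completely determined by the first step and $\calC$ is an explicit finite family, each value $\Vol(H\bigtriangleup C)$ is a computable real number, so such a minimizer can be identified (to whatever precision is needed) without drawing any additional samples.

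For correctness, condition on the probability-$(1-\delta)$ event that $\Vol(S\bigtriangleup H)\le\eps/3$. Since $S\in\calC_\convex$ and $\calC$ is an $(\eps/3)$-cover of $\calC_\convex$, there is some $C_0\in\calC$ with $\Vol(S\bigtriangleup C_0)\le\eps/3$. Using $A\bigtriangleup C\subseteq(A\bigtriangleup B)\cup(B\bigtriangleup C)$, the map $\Vol(\cdot\,\bigtriangleup\,\cdot)$ obeys the triangle inequality, so $\Vol(H\bigtriangleup C_0)\le\Vol(H\bigtriangleup S)+\Vol(S\bigtriangleup C_0)\le 2\eps/3$. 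By minimality of $C^\ast$, $\Vol(H\bigtriangleup C^\ast)\le\Vol(H\bigtriangleup C_0)\le 2\eps/3$, and hence $\Vol(S\bigtriangleup C^\ast)\le\Vol(S\bigtriangleup H)+\Vol(H\bigtriangleup C^\ast)\le\eps/3+2\eps/3=\eps$. As $C^\ast\in\calC\subseteq\calC_\convex$, the output is always a convex set, so $A'$ is a proper learner; and its sample complexity equals that of the single call to the learner of Theorem~\ref{thm:KOSlearn}, i.e.\ $n^{O(\sqrt{n}/\eps^2)}\cdot\log(1/\delta)$, as claimed.

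There is essentially no obstacle in this argument beyond bookkeeping: the only properties of $\calC$ used are that it lies inside $\calC_\convex$ and that it is finite (so that the minimum defining $C^\ast$ is attained), both of which are exactly what Corollary~\ref{cor:epscover} provides; the genuine work is in proving that corollary itself (via the gridding framework of Section~\ref{sec:1sub} together with the thickened-boundary bound of Theorem~\ref{thm:surfacevolume}). We note that the rounding step as described is not computationally efficient; a computationally bounded variant would instead draw $O(\log|\calC|/\eps^2)$ fresh \emph{unlabeled} draws from $\normal^n$, use them to estimate every $\Vol(H\bigtriangleup C)$, $C\in\calC$, and return the empirical minimizer, which affects the sample complexity only through the additive term $O(\log|\calC|/\eps^2)$. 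For the bound as stated, however, the sample-free version above suffices and keeps the sample complexity identical to that of Theorem~\ref{thm:KOSlearn}.
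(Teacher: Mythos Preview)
Your proof is correct and follows essentially the same approach as the paper: run the KOS improper learner, then round to the nearest element of the finite cover from Corollary~\ref{cor:epscover}, using the triangle inequality for $\Vol(\cdot\bigtriangleup\cdot)$. The only cosmetic difference is that the paper estimates each $\Vol(H\bigtriangleup C)$ by drawing fresh \emph{unlabeled} Gaussian samples (hence the parameters $\eps/5$ and $\delta/2$ to absorb the estimation error), whereas you take the exact minimizer and so get away with $\eps/3$ and $\delta$; since those extra draws are not oracle samples, the sample complexity is the same either way, exactly as you note in your final paragraph.
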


We remark that while algorithm~$A$ from Theorem~\ref{thm:KOSlearn} runs in time $n^{O(\sqrt{n}/\eps^2)}$ and uses
$n^{O(\sqrt{n}/\eps^2)}$ samples, the algorithm $A'$ of Corollary~\ref{cor:KOSlearn} presented below has a much larger running time (at least $(n/\eps)^{O(n)}$); however, its sample complexity is essentially no larger than that of algorithm~$A$.

\subsection{Proof of Corollary~\ref{cor:KOSlearn}}

The idea behind the proof of Corollary~\ref{cor:KOSlearn} is simple.  Let $\targetset \subseteq \R^n$ be the unknown target convex set that is to be learned. Algorithm $A'$ first runs algorithm $A$ with error parameter \blue{$\eps/5$} and confidence parameter $\delta/2$ to obtain, with probability  $1-(\delta/2)$, a hypothesis $H\subseteq \R^n$ with $\Vol(H \bigtriangleup \targetset) \leq \eps/\blue{5}.$

In the rest of the algorithm we find with high probability a convex set $C^*$ with
  $\Vol(H\bigtriangleup C^*)\le \blue{4\eps/5}$ and thus, we have $\Vol(S\bigtriangleup C^*)\le
  \blue{\eps/5+4\eps/5=\eps}$.
(Note that this part of the algorithm does not require any labeled samples
  $(\bx,S(\bx))$ from the oracle for $S$.)

\def\Ccover{\calC_{\mathrm{cover}}}

For this purpose let $\Ccover\subset \calC_\convex$ be a \emph{finite $\blue{(\eps/5)}$-cover} of $\calC_\convex$. (We show in Corollary \ref{cor:epscover} below that there is an algorithm the finds a finite $(\eps/5)$-cover of $\calC_\convex$.)
Next, the algorithm $A'$ enumerates over all elements $C \in \Ccover$ and for each such $C$ uses random sampling from $\normal^n$ to estimate $\Vol(H \bigtriangleup C)$ to within an additive error of $\eps/5$, with success probability $1-\delta/(2|\Ccover|)$ for each $C$.  (Note that this does not require any labeled samples $(\bx,\targetset(\bx))$ from the oracle for $\targetset$, since $A'$ can generate its own draws from $\normal^n$ and for each such  $\bx$ it can compute $H(\bx)$ and $C(\bx)$ on its own.)  $A'$ outputs the $C^* \in \Ccover$ for which the estimate of $\Vol(H \bigtriangleup C^*)$ is smallest.

The fact that this works follows a standard argument.  Since $$\Vol(H \bigtriangleup \targetset) \leq \eps/5\quad \text{and}\quad \Vol(\targetset \bigtriangleup C') \leq \eps/5$$ for some set $C'  \in \Ccover$, it holds that $\Vol(H \bigtriangleup C') \leq 2\eps/5$ and hence the estimate of $\Vol(H \bigtriangleup C')$ will be at most $3\eps/5$.  Thus the element $C^*$ of $\Ccover$ that is selected will have its estimated value of $\Vol(H \bigtriangleup C^\ast)$ being at most $3\eps/5,$ which implies that its actual value of $\Vol(H \bigtriangleup C^\ast)$ will be at most $4\eps/5$ (since each estimate is within $\pm \eps/5$ of the true value).

Given the above analysis, to finish the proof of Corollary~\ref{cor:KOSlearn} it suffices to establish the following corollary of structural results proved in Sections \ref{sec:structural} and \ref{sec:setup},
which shows that indeed it is possible for $A'$ to enumerate over the elements of $\Ccover$ as described above:

\begin{corollary}\label{cor:epscover}
\hspace{-0.03cm}There is an algorithm that, on inputs $\eps$ and $n$, outputs a finite $\eps$-cover of $\calC_\convex$.
\end{corollary}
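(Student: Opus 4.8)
The plan is to output a cover consisting of convex polytopes whose vertices lie on the grid used in Section~\ref{sec:setup}. We may assume $\eps<1$, since otherwise $\{\emptyset\}$ is trivially an $\eps$-cover. First I would fix $n'$ exactly as in Section~\ref{sec:setup}, so that $\Vol(\Ball(n'))\ge 1-\eps/4$ by Lemma~\ref{lem:johnstone}, and fix the grid side length $\ell:=\eps^3/n^4$; these determine the finite collection $\CubeSet$ of cubes, all of which lie in $\Ball(3n')$. Let $L$ be the (finite) set of centers of the cubes in $\CubeSet$, and let the algorithm output the family $\mathcal F:=\{\Conv(P):P\subseteq L\}$ of convex polytopes, each represented by its finite vertex set $P$. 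Since $L$ is finite, $\mathcal F$ is finite, so it remains only to verify that $\mathcal F$ is an $\eps$-cover of $\calC_\convex$.

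Given an arbitrary $C\in\calC_\convex$, set $C':=C\cap\Ball(n')$ and $P:=L\cap C'$. Then $C'$ is a bounded convex body with $\sup_{c\in C'}\|c\|\le n'>1$, and by convexity $\Conv(P)\subseteq C'\subseteq C$, so $C\bigtriangleup\Conv(P)=(C\setminus C')\cup(C'\setminus\Conv(P))$ with $\Vol(C\setminus C')\le\Vol(\R^n\setminus\Ball(n'))\le\eps/4$. The main step is to prove
\[
C'\setminus\Conv(P)\ \subseteq\ \del C'+\Ball(2\ell\sqrt n).
\]
For this I would take any $x\in C'$ with $x\notin\Conv(P)$ and let $\Cube$ be the cube of $\CubeSet$ containing $x$: if the centers of all cubes adjacent to $\Cube$ lay in $C'$, then they would all lie in $P$, and Claim~\ref{claim:internal} (applied, via translation, with $\Cube$ in the role of $\Cube_0$) would give $\Cube\subseteq\Conv(P)$, contradicting $x\notin\Conv(P)$. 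Hence some cube $\Cube''$ adjacent to $\Cube$ has center $z''\notin C'$; since $\overline{\Cube}\cup\overline{\Cube''}$ has diameter at most $2\ell\sqrt n$ we have $\|x-z''\|\le 2\ell\sqrt n$, and the segment $[x,z'']$, running from $x\in C'$ to $z''\notin C'$, meets $\del C'$ within distance $2\ell\sqrt n$ of $x$. Finally I would apply Theorem~\ref{thm:surfacevolume} with $K=n'$ and $\alpha=2\ell\sqrt n<n^{-3/4}$ to conclude $\Vol(\del C'+\Ball(2\ell\sqrt n))\le 20\,n^{5/8}\,n'\sqrt{2\ell\sqrt n}$, which is exactly the quantity shown to be $o(\eps)$ in Corollary~\ref{cor:smallconvbdry} and hence at most $\eps/4$. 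Combining, $\Vol(C\bigtriangleup\Conv(P))\le\eps/2\le\eps$, which proves the corollary.

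I do not expect a real obstacle: the corollary is essentially a packaging of Theorem~\ref{thm:surfacevolume} (the bound on the volume of the thickened boundary of a bounded convex body) together with Claim~\ref{claim:internal} (a grid cube all of whose neighbouring cubes contain a point of a set lies in that set's convex hull). The only thing requiring care is the grid bookkeeping in the displayed inclusion: one must check that the cube of $\CubeSet$ containing any point of $\Ball(n')$, together with all cubes adjacent to it, indeed belong to $\CubeSet$ (so that their centers lie in $L$), which holds because for our small $\ell$ all these cubes are contained in $\Ball(n'+2\ell\sqrt n)\subseteq\Ball(2n')$; and one should note that Claim~\ref{claim:internal}, although stated for the origin-centered cube $\Cube_0$, applies verbatim to any grid cube after translating both the cube and the point set.
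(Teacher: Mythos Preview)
Your proposal is correct and follows essentially the same approach as the paper: build a finite cover from the grid $\CubeSet$ of Section~\ref{sec:setup} and bound the approximation error via Theorem~\ref{thm:surfacevolume}. The only difference is cosmetic: the paper takes the cover elements to be $\Conv\bigl(\bigcup_{\Cube\in Q}\Cube\bigr)$ over subsets $Q\subseteq\CubeSet$ (convex hulls of unions of whole cubes), which lets them argue the inclusion $C'\setminus\Conv(Q_{C'})\subseteq\del C'+\Ball(\ell\sqrt n)$ directly without invoking Claim~\ref{claim:internal}, whereas your choice of cube \emph{centers} forces you to look at all adjacent cubes and appeal to Claim~\ref{claim:internal}; both work equally well.
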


\begin{proof}
We recall the material and parameter settings from Section~\ref{sec:setup}.
Since every convex set in $\R^n$ is $(\e/4)$-close to a set in $\calC'_{\convex}$,
it suffices to describe a finite family $\calC$ of convex sets $C_1,C_2,\dots$ such that every $C \in \calC'_{\convex}$ is
$(3\e/4)$-close to some $C_i$ in $\calC$.
We claim that
\[
\calC = \big\{\hspace{0.02cm}\Conv(\cup_{\Cube \in Q} \Cube) \mid Q \subseteq \CubeSet\hspace{0.03cm}\big\}
\]
is such a family.  To see this, fix any convex body $C \in \calC'_{\convex}$.
Let $$Q_C = \big\{\hspace{0.02cm}\Cube \in \CubeSet \mid \Cube \subseteq C\hspace{0.02cm}\big\},$$ the set of cubes that are entirely contained in $C$. Note that $\Conv(Q_C)$ is a subset of $C$. If a $\Cube$ contains at least one point in $C$ and at least one point outside $C$, then every point in~$\Cube$ has distance at most $\ell\sqrt{n}$ from the boundary of $C$ (since any two points in a given $\Cube$ have distance at most $\ell\sqrt{n}$). Thus, the missing volume $C \setminus \Conv(Q_C)$ is completely contained in $\del C + \Ball(\ell\sqrt{n})$, whose Gaussian volume, by Theorem~\ref{thm:surfacevolume}, is at most $20\hspace{0.03cm}
n^{5/8}\hspace{0.03cm} n'\sqrt{\ell \sqrt{n}} \ll 3\eps/4.$
\end{proof}

\begin{flushleft}
\bibliographystyle{alpha}
\bibliography{allrefs}

\newcommand{\etalchar}[1]{$^{#1}$}
\begin{thebibliography}{KNOW14}

\bibitem[AAK{\etalchar{+}}07]{AAK+07}
Noga Alon, Alexandr Andoni, Tali Kaufman, Kevin Matulef, Ronitt Rubinfeld, and
  Ning Xie.
\newblock Testing $k$-wise and almost $k$-wise independence.
\newblock In {\em Proceedings of the 39th Annual {ACM} Symposium on Theory of
  Computing}, pages 496--505, 2007.

\bibitem[ACS10]{ACS10}
Michal Adamaszek, Artur Czumaj, and Christian Sohler.
\newblock Testing monotone continuous distributions on high-dimensional real
  cubes.
\newblock In {\em SODA}, pages 56--65, 2010.

\bibitem[ADK15]{AcharyaDK15}
Jayadev Acharya, Constantinos Daskalakis, and Gautam Kamath.
\newblock Optimal testing for properties of distributions.
\newblock In {\em Advances in Neural Information Processing Systems 28 (NIPS)},
  pages 3591--3599, 2015.

\bibitem[AHW16]{AHW16}
Noga Alon, Rani Hod, and Amit Weinstein.
\newblock On active and passive testing.
\newblock {\em Combinatorics, Probability {\&} Computing}, 25(1):1--20, 2016.

\bibitem[AKK{\etalchar{+}}05]{AKKLRtit}
N.~Alon, T.~Kaufman, M.~Krivelevich, S.~Litsyn, and D.~Ron.
\newblock {Testing Reed-Muller Codes}.
\newblock {\em IEEE Transactions on Information Theory}, 51(11):4032--4039,
  2005.

\bibitem[Bal93]{Ball:93}
K.~Ball.
\newblock {The Reverse Isoperimetric Problem for Gaussian Measure}.
\newblock {\em Discrete and Computational Geometry}, 10:411--420, 1993.

\bibitem[Bal97]{Ball:intro-convex}
Keith Ball.
\newblock An elementary introduction to modern convex geometry.
\newblock In {\em Flavors of Geometry}, pages 1--58. MSRI Publications, 1997.

\bibitem[BBBY12]{BBBY12}
Maria{-}Florina Balcan, Eric Blais, Avrim Blum, and Liu Yang.
\newblock Active property testing.
\newblock In {\em 53rd Annual {IEEE} Symposium on Foundations of Computer
  Science, {FOCS} 2012, New Brunswick, NJ, USA, October 20-23, 2012}, pages
  21--30, 2012.

\bibitem[BFRV11]{BFRV11}
Arnab Bhattacharyya, Eldar Fischer, Ronitt Rubinfeld, and Paul Valiant.
\newblock Testing monotonicity of distributions over general partial orders.
\newblock In {\em ICS}, pages 239--252, 2011.

\bibitem[BKR04]{BKR:04long}
Tugkan Batu, Ravi Kumar, and Ronitt Rubinfeld.
\newblock Sublinear algorithms for testing monotone and unimodal distributions.
\newblock In {\em {Proceedings of the 36th Symposium on Theory of Computing}},
  pages 381--390, 2004.

\bibitem[BKS{\etalchar{+}}10]{BKSSZ10}
Arnab Bhattacharyya, Swastik Kopparty, Grant Schoenebeck, Madhu Sudan, and
  David Zuckerman.
\newblock Optimal testing of reed-muller codes.
\newblock In {\em 51th Annual {IEEE} Symposium on Foundations of Computer
  Science, {FOCS} 2010}, pages 488--497, 2010.

\bibitem[Bla09]{Blaisstoc09}
Eric Blais.
\newblock Testing juntas nearly optimally.
\newblock In {\em Proc.\ 41st Annual ACM Symposium on Theory of Computing
  (STOC)}, pages 151--158, 2009.

\bibitem[BLR93]{BLR93}
M.~Blum, M.~Luby, and R.~Rubinfeld.
\newblock Self-testing/correcting with applications to numerical problems.
\newblock {\em Journal of Computer and System Sciences}, 47:549--595, 1993.
\newblock Earlier version in STOC'90.

\bibitem[BMR16a]{BMR16fsttcs}
Piotr Berman, Meiram Murzabulatov, and Sofya Raskhodnikova.
\newblock The power and limitations of uniform samples in testing properties of
  figures.
\newblock In {\em 36th {IARCS} Annual Conference on Foundations of Software
  Technology and Theoretical Computer Science, {FSTTCS} 2016, December 13-15,
  2016, Chennai, India}, pages 45:1--45:14, 2016.

\bibitem[BMR16b]{BMR16socg}
Piotr Berman, Meiram Murzabulatov, and Sofya Raskhodnikova.
\newblock Testing convexity of figures under the uniform distribution.
\newblock In {\em 32nd International Symposium on Computational Geometry, SoCG
  2016, June 14-18, 2016, Boston, MA, {USA}}, pages 17:1--17:15, 2016.

\bibitem[BMR16c]{BMR16icalp}
Piotr Berman, Meiram Murzabulatov, and Sofya Raskhodnikova.
\newblock Tolerant testers of image properties.
\newblock In {\em 43rd International Colloquium on Automata, Languages, and
  Programming, {ICALP} 2016, July 11-15, 2016, Rome, Italy}, pages 90:1--90:14,
  2016.

\bibitem[BY16]{BlaisYoshida16}
Eric Blais and Yuichi Yoshida.
\newblock A characterization of constant-sample testable properties.
\newblock {\em CoRR}, abs/1612.06016, 2016.

\bibitem[CS01]{CS:01}
Artur Czumaj and Christian Sohler.
\newblock Property testing with geometric queries.
\newblock In {\em Algorithms - {ESA} 2001, 9th Annual European Symposium},
  pages 266--277, 2001.

\bibitem[CSZ00]{CSZ:00}
Artur Czumaj, Christian Sohler, and Martin Ziegler.
\newblock Property testing in computational geometry.
\newblock In {\em Algorithms - {ESA} 2000, 8th Annual European Symposium},
  pages 155--166, 2000.

\bibitem[GGL{\etalchar{+}}00]{GGLRS}
O.~Goldreich, S.~Goldwasser, E.~Lehman, D.~Ron, and A.~Samordinsky.
\newblock Testing monotonicity.
\newblock {\em Combinatorica}, 20(3):301--337, 2000.

\bibitem[GGR98]{GGR98}
O.~Goldreich, S.~Goldwasser, and D.~Ron.
\newblock Property testing and its connection to learning and approximation.
\newblock {\em Journal of the ACM}, 45:653--750, 1998.

\bibitem[GR16]{GoldreichRon16}
Oded Goldreich and Dana Ron.
\newblock On sample-based testers.
\newblock {\em {TOCT}}, 8(2):7:1--7:54, 2016.

\bibitem[GS06]{GoldreichSudan06}
Oded Goldreich and Madhu Sudan.
\newblock Locally testable codes and pcps of almost-linear length.
\newblock {\em J. {ACM}}, 53(4):558--655, 2006.

\bibitem[GW93]{convex-geometry}
P.M. Gruber and J.M. Wills, editors.
\newblock {\em Handbook of convex geometry, Volume A}.
\newblock Elsevier, New York, 1993.

\bibitem[Joh48]{John48}
Fritz John.
\newblock Extremum problems with inequalities as subsidiary conditions.
\newblock In {\em Studies and essays presented to R. Courant on his 60th
  birthday}, pages 187--204. Interscience, New York, 1948.

\bibitem[Joh01]{Johnstone01}
Iain~M. Johnstone.
\newblock Chi-square oracle inequalities.
\newblock In {\em State of the art in probability and statistics}, pages
  399--418. Institute of Mathematical Statistics, 2001.

\bibitem[Ker92]{Kern}
W.~Kern.
\newblock Learning convex bodies under uniform distribution.
\newblock {\em Information Processing Letters}, pages 35--39, 1992.

\bibitem[KMS15]{KMS15}
Subhash Khot, Dor Minzer, and Muli Safra.
\newblock On monotonicity testing and boolean isoperimetric type theorems.
\newblock To appear in FOCS, 2015.

\bibitem[KNOW14]{KNOW:14}
Pravesh Kothari, Amir Nayyeri, Ryan O'Donnell, and Chenggang Wu.
\newblock Testing surface area.
\newblock In {\em Proceedings of the Twenty-Fifth Annual {ACM-SIAM} Symposium
  on Discrete Algorithms (SODA)}, pages 1204--1214, 2014.

\bibitem[KOS07]{KOS:07}
A.~Klivans, R.~O'Donnell, and R.~Servedio.
\newblock Agnostically learning convex sets via perimeter.
\newblock manuscript, 2007.

\bibitem[KR00]{KR00}
M.~Kearns and D.~Ron.
\newblock Testing problems with sub-learning sample complexity.
\newblock {\em Journal of Computer and System Sciences}, 61:428--456, 2000.

\bibitem[KS08]{KaufmanSudan08}
Tali Kaufman and Madhu Sudan.
\newblock Algebraic property testing: the role of invariance.
\newblock In {\em Proceedings of the 40th Annual {ACM} Symposium on Theory of
  Computing, Victoria, British Columbia, Canada, May 17-20, 2008}, pages
  403--412, 2008.

\bibitem[MORS10]{MORS:10}
K.~Matulef, R.~O'Donnell, R.~Rubinfeld, and R.~Servedio.
\newblock Testing halfspaces.
\newblock {\em SIAM J. on Comput.}, 39(5):2004--2047, 2010.

\bibitem[Naz03]{Nazarov:03}
F.~Nazarov.
\newblock {On the maximal perimeter of a convex set in $\R^n$ with respect to a
  Gaussian measure}.
\newblock In {\em Geometric aspects of functional analysis (2001-2002)}, pages
  169--187. Lecture Notes in Math., Vol. 1807, Springer, 2003.

\bibitem[Nee14]{Neeman:14}
Joe Neeman.
\newblock Testing surface area with arbitrary accuracy.
\newblock In {\em Proceedings of the 46th Annual ACM Symposium on Theory of
  Computing}, STOC '14, pages 393--397, 2014.

\bibitem[PRS02]{PRS02}
M.~Parnas, D.~Ron, and A.~Samorodnitsky.
\newblock {Testing Basic Boolean Formulae}.
\newblock {\em SIAM J. Disc. Math.}, 16:20--46, 2002.

\bibitem[Ras03]{Raskhodnikova:03}
Sofya Raskhodnikova.
\newblock Approximate testing of visual properties.
\newblock In {\em Proceedings of RANDOM}, pages 370--381, 2003.

\bibitem[RS05]{RubinfeldServedio:05}
R.~Rubinfeld and R.~Servedio.
\newblock Testing monotone high-dimensional distributions.
\newblock In {\em Proc.\ 37th Annual ACM Symposium on Theory of Computing
  (STOC)}, pages 147--156, 2005.

\bibitem[RV05]{Vempala}
Luis Rademacher and Santosh Vempala.
\newblock Testing geometric convexity.
\newblock In {\em FSTTCS 2004: Foundations of Software Technology and
  Theoretical Computer Science: 24th International Conference, Chennai, India,
  December 16-18, 2004. Proceedings}, pages 469--480, 2005.

\bibitem[RX10]{RX10}
Ronitt Rubinfeld and Ning Xie.
\newblock Testing non-uniform \emph{k}-wise independent distributions over
  product spaces.
\newblock In {\em Automata, Languages and Programming, 37th International
  Colloquium, {ICALP} 2010, Bordeaux, France, July 6-10, 2010, Proceedings,
  Part {I}}, pages 565--581, 2010.

\bibitem[Sza06]{Szarek06}
Stanislaw~J. Szarek.
\newblock Convexity, complexity, and high dimensions.
\newblock In {\em Proceedings of the International Congress of Mathematicians,
  Madrid, Spain}, pages 1599--1621. European Mathematical Society, 2006.

\end{thebibliography}
\end{flushleft}

\appendix


\section{Proof of Lemmas~\ref{lem:noball}, \ref{lem:small}, and \ref{lem:large}}  \label{ap:lemmas}
\begin{lemma}\label{lem:noball}
If $C \subset \R^n$ is convex and contains no ball of radius $\rho$, then we have $$\Vol\big(C + \Ball(\alpha)\big) \leq 2(n\rho + \alpha).$$
\end{lemma}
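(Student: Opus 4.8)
The plan is to reduce everything to a \emph{slab}, i.e.\ a region of the form $\{x\in\R^n : a\le x\cdot v\le a+w\}$ for a unit vector $v$. First I would show that a convex set $C$ containing no ball of radius $\rho$ is contained in some slab of width at most $(n+1)\rho$; since $n\ge 1$ we have $(n+1)\rho\le 2n\rho$, so $C+\Ball(\alpha)$ is then contained in a slab of width at most $2n\rho+2\alpha=2(n\rho+\alpha)$. Finally, because $x\cdot v$ has the standard one-dimensional Gaussian distribution whenever $v$ is a unit vector, the Gaussian volume of a slab of width $w$ equals $\Pr[a\le\mathbf{z}\le a+w]$ for a standard Gaussian $\mathbf{z}$, which is at most $w\cdot\sup_t\varphi(t)=w/\sqrt{2\pi}<w$. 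Taking $w\le 2(n\rho+\alpha)$ yields $\Vol(C+\Ball(\alpha))<2(n\rho+\alpha)$, with room to spare.

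The heart of the argument is therefore the slab claim, which I would prove using a \emph{maximal inscribed ball}. We may assume $C$ is closed, as this affects neither the hypothesis nor $\Vol(C+\Ball(\alpha))$. If $C$ is not full-dimensional it is contained in an affine hyperplane, which is a slab of width $0$, and we are done; so assume $C$ has nonempty interior. When $C$ is bounded, compactness gives a ball $\Ball(0,r)$ (centered at the origin after a translation) of maximal radius among all balls contained in $C$, and necessarily $r<\rho$. A standard argument shows that the center $0$ lies in the convex hull of the set of contact points $\del C\cap\del\,\Ball(0,r)$: otherwise a hyperplane through $0$ separating $0$ from all contact points would let us translate the ball slightly away from every contact point while staying inside $C$, and then (since the translated ball would have positive distance to $\del C$) enlarge it --- contradicting maximality. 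By Carath\'eodory's theorem we may pick contact points $u_1,\dots,u_k$ with $k\le n+1$ and weights $\lambda_i>0$, $\sum_i\lambda_i=1$, with $\sum_i\lambda_i u_i=0$. Each $u_i$ has $\|u_i\|=r$, and since $\Ball(0,r)\subseteq C$ the (unique) supporting hyperplane of $C$ at $u_i$ is the tangent plane of the sphere at $u_i$, so $C\subseteq\{x:x\cdot u_i\le r^2\}$ for every $i$.

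Now pick $m$ with $\lambda_m=\max_i\lambda_i\ge 1/k\ge 1/(n+1)$ and set $v=u_m/r$, a unit vector. For any $x\in C$ we immediately get $x\cdot v\le r$; and writing $\lambda_m u_m=-\sum_{i\ne m}\lambda_i u_i$ and using $x\cdot u_i\le r^2$ for $i\ne m$ gives $\lambda_m(x\cdot u_m)=-\sum_{i\ne m}\lambda_i(x\cdot u_i)\ge -r^2(1-\lambda_m)$, hence $x\cdot v\ge -r(1-\lambda_m)/\lambda_m$. Thus $C$ lies in a slab of width $r+r(1-\lambda_m)/\lambda_m=r/\lambda_m\le (n+1)r<(n+1)\rho$, proving the slab claim for bounded $C$. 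For unbounded $C$ --- the only case where the maximal inscribed ball may fail to exist --- I would apply the bounded case to $C\cap\Ball(0,R)$ for each large $R$ (assuming $0\in C$ after a translation), obtaining slabs of width at most $(n+1)\rho$ with unit normal vectors $v_R$; passing to a subsequence with $v_{R_k}\to v_*$, one checks that the corresponding slab centers remain bounded (otherwise a fixed point of $C$ would have unbounded inner product with $v_{R_k}$ while $v_{R_k}$ converges), and the limit gives a single slab of width at most $(n+1)\rho$ containing all of $C$. I expect this limiting argument to be the only delicate point; note that in the sole place Lemma~\ref{lem:noball} is used --- Case~I of the proof of Theorem~\ref{thm:surfacevolume} --- the set $C$ is bounded, so the clean bounded case already suffices there.
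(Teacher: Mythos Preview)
Your proof is correct and takes a genuinely different route from the paper's. Both arguments reduce to the same two-step plan: show that $C$ lies in a slab of width $O(n\rho)$, and then bound the Gaussian mass of a width-$w$ slab by $w\cdot\sup\varphi\le w$. The difference is in how the slab is produced. The paper invokes John's theorem: the maximal-volume inscribed ellipsoid $E(C)$ must have some semi-axis shorter than $\rho$ (else $C$ would contain a radius-$\rho$ ball), and the inclusion $C\subseteq nE(C)$ immediately yields a slab of width $2n\rho$ in that axis direction. You instead work with the maximal inscribed \emph{ball} of radius $r\le\rho$, use the standard fact that its center lies in the convex hull of the contact points together with Carath\'eodory, and extract a slab of width $r/\lambda_m\le(n+1)r\le(n+1)\rho\le 2n\rho$. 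Your route is more elementary --- it avoids John's theorem entirely --- and in fact gives the slightly sharper slab width $(n+1)\rho$ rather than $2n\rho$.

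Two small remarks. First, the sentence ``passing to the closure affects neither the hypothesis\ldots'' is not quite right: if $C$ is the open ball of radius $\rho$ then $\overline{C}$ \emph{does} contain a ball of radius $\rho$. But this is harmless, since all your argument actually uses is that the inradius of $\overline{C}$ is at most $\rho$, which does follow (via $\mathrm{int}(\overline{C})=\mathrm{int}(C)$). Second, your handling of the unbounded case is more careful than the paper's, which tacitly assumes boundedness when invoking John's theorem; as you observe, the lemma is only ever applied to bounded $C$ (Case~I of Theorem~\ref{thm:surfacevolume}), so this is moot either way.
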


\begin{proof}
By the theorem of John \cite{John48} (see also Theorem~3.1 of \cite{Ball:intro-convex}), there is a unique ellipsoid contained in $C$ that has maximal Euclidean volume; let us denote this by $E(C).$
Since $C$ does not contain a ball of radius $\rho$, $E(C)$ must have some axis $u$ which has length less than $\rho$. Let us translate $C$ so that the center of $E(C)$ lies at the origin.
Again by the theorem from~John \red{(see the discussion in \cite{Ball:intro-convex} on pages 13 and 16),} we have that $C \subseteq nE(C)$.  Now consider the set $H$ of all points $v \in \R^n$ whose projection onto the $u$ direction has magnitude at most $n\rho + \alpha$. This is a ``thickened hyperplane'' which contains $C + \Ball(\alpha),$ and its Gaussian volume is given by $$\Vol(H)=\int_{-(n \rho + \alpha)}^{(n\rho + \alpha)} \varphi(x)\,dx,$$ where $\varphi(x)$ is the density function of a univariate normal distribution as defined in Section~\ref{sec:prelims}. We know that $\phi$ is bounded from above by $1$ so this integral is at most $2(n\rho + \alpha)$.  It is also easy to see that the same volume upper bound must hold upon undoing the translation of $C$ back to its original position, and the lemma is proved.
\end{proof}

\begin{lemma}\label{lem:small}
Let $C$ be a bounded convex subset of $\R^n$ that contains $\Ball(\rho)$, the origin-centered ball of radius $\rho$, for some $\blue{\rho > \alpha}$.  Then the distance between $(1- ({\alpha}/{\rho}))C$ and $\del C$ is at least $\alpha$.
\end{lemma}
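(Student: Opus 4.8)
\medskip

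The plan is to show that every point of $\partial C$ is at distance at least $\alpha$ from the shrunken body $(1-\beta)C$, where I write $\beta = \alpha/\rho$ for brevity. Equivalently, I will show that $\bigl((1-\beta)C\bigr) + \Ball(\alpha) \subseteq C$: if the Minkowski sum of the shrunken body with a ball of radius $\alpha$ is still contained in $C$, then no point of the shrunken body can be within distance $\alpha$ of a point outside $C$, and in particular not within distance $\alpha$ of any boundary point.

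\medskip

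The key geometric fact is the following: if $C$ is convex and contains $\Ball(\rho)$, then for any unit vector $u$ and any $x \in C$, the point $x + \alpha u$ can be written as a convex combination of a point of $C$ and a point of $\Ball(\rho) \subseteq C$, hence lies in $C$, \emph{provided} we first shrink $x$ towards the origin by the factor $(1-\beta)$. Concretely, take $y \in (1-\beta)C$, so $y = (1-\beta)x$ for some $x \in C$, and let $u$ be any unit vector; I want $y + \alpha u \in C$. Write
\[
y + \alpha u \;=\; (1-\beta)x + \beta\cdot\Bigl(\tfrac{\alpha}{\beta}u\Bigr) \;=\; (1-\beta)x + \beta\bigl(\rho u\bigr),
\]
using $\alpha/\beta = \rho$. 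Since $x \in C$ and $\rho u \in \Ball(\rho) \subseteq C$, and since $C$ is convex and $\beta \in (0,1)$ (here I use the hypothesis $\rho > \alpha$ so that $\beta < 1$, and $\beta > 0$ trivially), the right-hand side is a convex combination of two points of $C$ and therefore lies in $C$. As $y \in (1-\beta)C$ and the unit vector $u$ were arbitrary, this shows $(1-\beta)C + \Ball(\alpha) \subseteq C$, which is exactly what we wanted.

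\medskip

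Finally I translate this containment back into the distance statement. Let $p \in (1-\beta)C$ and $q \in \partial C$; I must check $\|p - q\| \ge \alpha$. Suppose not, so $\|p-q\| < \alpha$. Since $C$ contains the open ball $\{x : \|x\| < \rho\}$ and $\rho > \alpha > \|p-q\|$ — wait, more carefully: I do not know $q$ is near the origin, so instead I argue directly from the containment. The set $(1-\beta)C + \Ball(\alpha)$ is contained in $C$; but a boundary point $q$ of $C$ has the property that every ball around it meets the complement of $C$, so $q$ itself, while it may lie in $C$, cannot lie in the \emph{interior} of $C$ relative to being surrounded by $C$. The clean way: $p + \Ball(\alpha) \subseteq C$ by the containment just proved, so the open ball of radius $\alpha$ about $p$ lies in $C$; hence every point within distance $\alpha$ of $p$ is in $C$. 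But $q \in \partial C$ means every neighborhood of $q$ contains points outside $C$, so $q$ cannot be an interior point witnessed by $p + \Ball(\alpha)$; concretely if $\|p - q\| < \alpha$ then a small ball around $q$ is contained in $p + \Ball(\alpha) \subseteq C$, contradicting $q \in \partial C$. Therefore $\|p-q\| \ge \alpha$, completing the proof.

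\medskip

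I do not expect any real obstacle here; the only point requiring the stated hypothesis $\rho > \alpha$ is ensuring $\beta = \alpha/\rho \in (0,1)$ so that the convex-combination step is legitimate, and the only mildly delicate point is the final step translating the Minkowski containment into the distance bound, which is just the observation that an interior ball of radius $\alpha$ around $p$ cannot touch the boundary.
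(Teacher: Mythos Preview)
Your proof is correct, and it takes a genuinely different route from the paper's own argument. The paper fixes a boundary point $z\in\partial C$, chooses a supporting hyperplane there (a vector $v$ with $v\cdot z=1$ and $v\cdot x\le 1$ for all $x\in C$), uses $\Ball(\rho)\subseteq C$ to deduce $\|v\|\le 1/\rho$, and then bounds $\|z-y\|$ from below via $v\cdot(z-y)\ge\beta$. Your argument instead establishes the Minkowski containment $(1-\beta)C+\Ball(\alpha)\subseteq C$ directly by writing $(1-\beta)x+\alpha u=(1-\beta)x+\beta(\rho u)$ as a convex combination of $x\in C$ and $\rho u\in\Ball(\rho)\subseteq C$, and then reads off the distance bound from that containment.

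Both proofs are short and elementary. The paper's supporting-hyperplane computation gives the inequality in one shot but invokes the existence of a supporting functional; your convex-combination identity avoids that and is essentially a one-line algebraic verification, at the cost of a small extra step at the end to pass from the containment to the boundary-distance statement. One minor cosmetic point: you phrase the key identity for a \emph{unit} vector $u$, but of course the same line works verbatim for any $u$ with $\|u\|\le 1$ (since then $\rho u\in\Ball(\rho)$), which is what ``$+\Ball(\alpha)$'' requires; this does not affect correctness.
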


\begin{proof}
This is essentially Lemma~2.2 of \cite{Kern}; for completeness we give the simple proof here.

Let $\beta=\alpha/\rho$. Let $z \in \del C$ be a point on the boundary of $C$. Since $C$ is convex and contains the origin, there exists a vector $v$ for which $v \cdot z = 1$ but for all $x \in C$ we have $v \cdot x \leq 1$ (intuitively, one can think of $v$ as defining the tangent hyperplane at $z$). Then for any $y \in (1 - \beta) C$ we have $v \cdot y \leq 1 - \beta$, which implies that $v (z - y) \geq \beta.$  Since ${{\rho v} /{\|v\|}} \in \Ball(\rho) \subseteq C$, it must be the case that $v \cdot {({\rho v} /{\|v\|})} = \rho \|v\| \leq 1$, which means that $\|v\| \leq  1/ \rho$ and thus (as $v (z - y) \geq \beta$) $\|z-y\| \geq \alpha.$
\end{proof}

\begin{lemma}\label{lem:large}
Let $C \subset \R^n $ be a convex set that satisfies $\sup_{c\in C} \|c\| \leq K$ for some $K >1$.  Then for any $0 < \beta < 1$, every point $v \in \del C + \Ball(\alpha)$ is within distance $\blue{2K\beta + \alpha}$ of a point in $(1 - \beta)C$.
\end{lemma}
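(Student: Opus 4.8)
The plan is to use the elementary fact that scaling $C$ by the factor $1-\beta$ toward the origin moves any point of $C$ by at most $\beta$ times its norm, which is at most $\beta K$. First I would take an arbitrary point $v \in \del C + \Ball(\alpha)$ and write it as $v = z + w$ with $z \in \del C$ and $\|w\| \le \alpha$. Since $z$ lies in the closure of $C$ and $\sup_{c \in C}\|c\| \le K$, continuity of the norm gives $\|z\| \le K$. The natural candidate for a nearby point of $(1-\beta)C$ is $(1-\beta)z$, and the triangle inequality immediately gives
\[
\big\|v - (1-\beta)z\big\| \;=\; \big\|\beta z + w\big\| \;\le\; \beta\|z\| + \|w\| \;\le\; \beta K + \alpha.
\]

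The one point requiring care is that $\del C$ may contain points that do not belong to $C$ itself, since $C$ need not be closed, so $(1-\beta)z$ need not lie in $(1-\beta)C$. I would dispatch this by approximation: pick $z' \in C$ with $\|z' - z\| \le \beta K$ (possible since $z$ is in the closure of $C$ and $\beta K > 0$); then $(1-\beta)z' \in (1-\beta)C$, and writing $v - (1-\beta)z' = (\beta z + w) + (1-\beta)(z - z')$ and using the previous display yields
\[
\big\|v - (1-\beta)z'\big\| \;\le\; \big\|v - (1-\beta)z\big\| + (1-\beta)\|z - z'\| \;\le\; (\beta K + \alpha) + \beta K \;=\; 2\beta K + \alpha,
\]
which is exactly the claimed bound.

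I do not expect any real obstacle here: the whole argument is one application of the triangle inequality together with the trivial estimate $\|z\| \le K$. The only genuine subtlety — that $\del C$ need not be contained in $C$ — is handled by the approximation in the previous paragraph, and in fact the coefficient $2$ in front of $\beta K$ is present precisely to absorb the slack from replacing $z \in \del C$ by a nearby $z' \in C$: the first display already shows that $v$ lies within distance $\beta K + \alpha$ of the point $(1-\beta)z$ of $(1-\beta)\overline{C}$. The hypothesis $K > 1$ is not used in this lemma and is retained only for notational convenience when the lemma is invoked in the proof of Theorem~\ref{thm:surfacevolume}.
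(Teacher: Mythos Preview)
Your proof is correct and is essentially identical to the paper's argument: both pick a boundary point, approximate it by a nearby point of $C$ (to handle the possibility that $C$ is not closed), scale by $1-\beta$, and finish with the triangle inequality. The only cosmetic difference is that you first bound the distance to $(1-\beta)z$ and then adjust, whereas the paper bounds the distance to $(1-\beta)z'$ directly; the resulting estimates are the same.
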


\begin{proof}
We have that $v = c + y$ for some $c \in \del C$ and $y$ with $\|y\| \leq \alpha$.
While $v$ may not lie in $C$ (as $C$ might be an open set), we know for any $\eps>0$ there is
  a point $c'\in C$ and $\|c'-c\|\le \eps$.
Take such a point $c'$ with $\eps=\beta K$.
Then $(1-\beta)c'\in (1-\beta)C$ and
$$
\|(1-\beta)c'-v\|=\|(1-\beta)c'-c-y\|\le \|c'-c\|+\beta\|c'\|+\|y\|\le \beta K+\beta K+\alpha
=2\beta K+\alpha.
$$

This finishes the proof of the lemma.
\end{proof}

\end{document}